\title{An Efficient Two-Stage SPARC Decoder for Massive MIMO Unsourced Random Access}
\author{Juntao You}
\author{Wenjie Wang\thanks{Corresponding author}}
\author{Shansuo Liang}
\author{Wei Han}
\author{Bo Bai}
\affil{Theory Lab, Central Research Institute, 2012 Labs, Huawei Technologies Co. Ltd., China \\ \{youjuntao, wang.wenjie, liang.shansuo, harvey.hanwei, baibo8\}@huawei.com }
\def\lV{\left\lVert}
\def\rV{\right\lVert}
\def\lv{\left\lvert}
\def\rv{\right\lvert}
\def\l{\langle}
\def\r{\rangle}
\def\H{\mathcal H}
\def\A{\mathcal A}
\def\Y{\mathcal Y}
\def\Y{\mathcal Y}
\newtheorem{theorem}{Theorem}
\newtheorem{lemma}{Lemma}
\newtheorem{definition}{Definition}
\newtheorem{proposition}{Proposition}
\newtheorem{remark}{Remark}
\newtheorem{example}{Example}
\begin{document}

\maketitle

\begin{abstract}
       In this paper, we study a concatenate coding scheme based on sparse regression code~(SPARC) and tree code for unsourced random access in massive multiple-input and multiple-output systems. Our focus is concentrated on efficient decoding for the inner SPARC with practical concerns. A two-stage method is proposed to achieve near-optimal performance while maintaining low computational complexity. Specifically, a one-step thresholding-based algorithm is first used for reducing large dimensions of the SPARC decoding, after which a relaxed maximum-likelihood estimator is employed for refinement. Adequate simulation results are provided to validate the near-optimal performance and the low computational complexity. Besides, for covariance-based sparse recovery method, theoretical analyses are given to characterize the upper bound of the number of active users supported when convex relaxation is considered, and the probability of successful dimension reduction by the one-step thresholding-based algorithm.
\end{abstract}

%\begin{IEEEkeywords}
%       Unsourced random access, massive MIMO, internet of things, SPARC, approximate message passing
%\end{IEEEkeywords}

\section{Introduction}

Massive machine type communication~(mMTC) is one of the primary advancements in the internet of things~(IoT) applications~\cite{carvalho2017random,cirik2019toward,zhan2018massive}. In a typical mMTC scenario, a massive amount of potential user equipment~(UE) sends sporadic data to a base station~(BS) and the BS needs to decode the message from active users~\cite{polyanskiy2017a,fengler2019sparcs,fengler2021non,mohammadkarimi2022massive}. In general, the UE in mMTC is a low-cost and battery-powered sensing device and has limited computational resources. Within a certain time slot, only a portion of potential UE would be active and send short packets (around $10^2$ bits/packet) to the BS and the UE in mMTC is usually latency insensitive. Overall, in mMTC, the BS could support a massive amount of UE with decoding algorithm of slightly higher computational complexity.

Most cellular standards, such as 3G, 4G-LTE and 5G-NR, utilize grant-based schemes for transmission~\cite{cirik2019toward,zhan2018massive}. In the grant-based schemes, each active user is associated with an unique preamble to perform random access procedures, where a user could identify itself and ask for transmission resources. The BS proactively schedules transmission resources for each identified user. However, the sporadic communication patterns and the massive amount of potential users in mMTC cause huge scheduling overhead, rendering such grant-based scheme inefficient and impractical. Therefore, grant-free based protocols have attracted significant attentions recently, where UE sends data directly to the BS without approval. With carefully designed collision resolution mechanisms, grant-free schemes require small scheduling overhead and are highly efficient, which consequently seems favorable for mMTC~\cite{cirik2019toward,bockelmann2016massive}.

Recently, a new grant-free random access paradigm called ``unsourced random access'' (URA) was proposed for mMTC~\cite{polyanskiy2017a,mohammadkarimi2022massive}. In URA, potential users employ the same codebook and transmit messages without revealing their identities. The receiver aims to recover a list of permuted messages without knowing identities of active users~\cite{fengler2019sparcs}. Particularly, we consider an uplink multi-user grant-free system over a block-fading wireless communication channel, where the channel coefficients remain constant over coherence blocks of $D$ channel uses, and change randomly from block to block according to a stationary ergodic process~\cite{tse2005fundamentals}. Within a coherent block, only $K\ll K_{total}$ users are active.\footnote{Here, as in~\cite{polyanskiy2017a,fengler2021non,fengler2019sparcs}, we consider that users are perfectly synchronized in both time and frequency domain.} Denote $\bm{a}_k=[a_{k,1},a_{k,2},\cdots,a_{k,D}]^T\in \mathbb{C}^{D}$ as the transmitted signal of user $k$. Assuming that BS is equipped with $M$ antennas, the received signal is given by
\begin{equation}\label{problem:primal}
       \bm{y}_i=\sum_{k=1}^{K_{total}}q_k^{\natural} \sqrt{g_k^{\natural}} a_{k,i}\bm{h}_k+ \bm{z}_i, \quad i=1,2,\cdots,D,
\end{equation}
where
$\bm{h}_k\in \mathbb{C}^{M}$ is the channel vector of small-scale fading coefficients,
$g_k^{\natural}\in \mathbb{R}_{+}$ is the large-scale fading coefficient~(LSFC) and
$\bm{z}_i\in \mathbb{C}^{M}$ is the additive white Gaussian noise~(AWGN).
The user's activity is represented by $q_k^{\natural}\in\mathbb{B}$, i.e., $q_k^{\natural}=0$ for inactive and $q_k^{\natural}=1$ for active, and $\sum_{k=1}^{N}q_k^{\natural}=K$.
The problem we study is to recover the transmitted signal, which can be determined by the indices of positive $q_k^{\natural}$, at the BS with unknown $g_k^{\natural}$ and $\bm{h}_k$.

\subsection{Prior Work}

Various schemes have been proposed for URA in AWGN channels, such as \textit{T}-fold ALOHA~\cite{ordentlich2017low}, coded compressive sensing~\cite{amalladinne2020coded,amalladinne2022unsourced} and sparse regression code (SPARC)~\cite{fengler2019sparcs}. Considering the channel fading in a wireless system, the Reed-Muller based scheme~\cite{wang2022unsourced} were proposed for URA with a single antenna at BS. However, it does not apply to the multiple-input multiple-output (MIMO) case, where BS is equipped with multiple antennas. While the \textit{T}-fold ALOHA scheme~\cite{kowshik2019energy,kowshik2019short} and the differential coding scheme~\cite{mohammadkarimi2022massive} can be applied to the MIMO case, both of these schemes utilize clustering and successive interference cancellation (SIC) for data decoding, which incurs high computational complexity and delay when active user number $K$ is large. A pilot-based coherent scheme was proposed in~\cite{carvalho2017random,fengler2022pilot} for URA in the MIMO scenario, where orthogonal or non-orthogonal pilots are used for active detection and channel estimation. Then, maximum ratio combining using the estimated channels is employed for MIMO detection. It requires that the estimated channels vary slowly from the beginning of pilot transmission to the end of data transmission. That is, this pilot-based scheme requires coherence block of large channel uses (around $10^3$), which is far from practical in URA.

Alternatively, a concatenated coding scheme was proposed in~\cite{fengler2019sparcs,fengler2021non} for URA to use SPARC and tree code as inner and outer codes, respectively.
The inner SPARC use a common codebook matrix of size $D\times N$ for all the potential users, where up to $\log_2{N}$ information bits can be transmitted and $K\ll N$. \footnote{The inner SPARC codebook can support arbitrary number of potential users such that the exact $K_{total}$ is generally irrelevant. The parameter $N$ is directly related to the number of active user $K$ to ensure a certain level of sparsity for SPARC decoding.} To reduce the storage and computational complexity for SPARC involving large $N$, the tree code is used as outer code to enable dividing the original information bits into several sections that are separately encoded by SPARC with much smaller $N$. Other choices for the outer code include cyclic redundancy check and Reed-Solomon code~\cite{andreev2021reed}. At the receiver, non-coherent detections are used for decoding SPARC section by section and the outer tree decoder stitches different sections of messages together to produce original information messages.

For the inner SPARC, a series of recent works~\cite{liu2018massive,liu2018massive2} have formulated~\eqref{problem:primal} as a multiple measurement vector~(MMV) problem. The goal is to recover $q_k^{\natural}$ and $\bm{h}_k$ simultaneously while treating $g_k^{\natural}$ as either deterministic known quantities or as random quantities whose prior distribution is known. AMP type algorithms are applied as a popular approach for such a MMV problem~\cite{vila2011expect,liu2018massive}. However, the compressive-sensing-based approaches such as AMP~\cite{liu2018massive} and Orthogonal AMP~(OAMP)~\cite{ma2017orthogonal} performs poorly when the number of active users $K>D$~\cite{fengler2019sparcs,fengler2021non}. The above limitation means that the dimensions of the codebook matrix $D$ must scale linearly with $K$, which could be infeasible in practice.
%Furthermore, the supported $K$ is bounded by $\widetilde{\mathcal{O}}(D)$\footnote{By $\widetilde{\mathcal{O}}(\cdot)$, we omit some logarithm factor in the remaining of the work.} for the AMP algorithm as this bound is well investigated in the field of compressive sensing problem~\cite{kueng2018robust,haghighatshoar2018improved}. For large number of active users $K\approx10^2$ and potential user $N\approx10^4$, the above limitation means that the dimensions of the coherence blocks must scale linearly with $K$, which could be infeasible in practice. 
Alternatively, covariance-based non-coherent estimators are proposed for the SPARC decoding, such as (relaxed) maximum-likelihood~(ML) and non-negative least squares~(NNLS), which are shown to have better empirical performance than AMP at the cost of higher computational complexity~\cite{haghighatshoar2018improved,fengler2021non}.\footnote{For simplicity, the (relaxed) maximum-likelihood will be referred as ``ML'' in the remaining of the work.}

\subsection{Contributions}
The paper is focused on the SPARC decoding with practical concerns. Essentially, existing AMP~\cite{liu2018massive}, OAMP~\cite{ma2017orthogonal} and covariance-based non-coherent estimators~\cite{haghighatshoar2018improved} require the channel to stay static over $D$ channel uses during the transmission of the codeword. However, due to physical environment constraints, the coherence block of the uplink channel is usually limited around hundreds of channel uses in most exciting wireless cellular systems~\cite{kowshik2019short,tse2005fundamentals}, where both AMP and OAMP deteriorates significantly~\cite{fengler2019sparcs}. Furthermore, the number of active users supported by compressive-sensing-based approaches, e.g., AMP~\cite{liu2018massive}, OAMP~\cite{ma2017orthogonal}, ISTA~\cite{daubechies2004iterative}, FISTA~\cite{beck2009fast}, ADMM~\cite{boyd2011distributed}, is bounded by $\widetilde{\mathcal{O}}(D)$~\cite{haghighatshoar2018improved,kueng2018robust}.\footnote{By $\widetilde{\mathcal{O}}(\cdot)$, we omit some logarithm factor in the remaining of the work.} On the other hand, the covariance-based approaches perform remarkably well for small $D$ and the number of supported users can achieve $\widetilde{\mathcal{O}}(D^2)$~\cite{haghighatshoar2018improved}. The main issue for covariance-based approaches are their high computational complexity.

In this paper, we concentrate on the covariance-based approaches and propose a two-stage method called accelerated maximum-likelihood which combines a one-step thresholding-based algorithm for dimension reduction and ML for refinement. Due to the dimension reduced in the first stage, the complexity of ML is significantly reduced while the overall performance still being maintained. Adequate numerical results are provided to support such claim including the link-level simulations over the existing cellular systems. On the other hand, theoretical analyses are provided for both the covariance-based sparse recovery model and the thresholding algorithm. For the former, the optimal upper bound for the number of active user $K$ is provided when convex relaxation is considered. For the latter, the one-step thresholding-based algorithm is theoretically guaranteed to contain all the active users with high probability when the number of BS antennas $M$ and channel uses $D$ are reasonably large.

%In this paper, a two-stage method is proposed for the SPARC decoding, including a one-step thresholding-based algorithm for dimension reduction and ML algorithm for refinement.
%
%The main contributions of this work can be summarized as: 
%\begin{itemize}[leftmargin=*]
%	\item Providing theoretical analysis for the upper bound of $K$ in the sparse recovery model. We show that the upper bound is optimal when a convex relaxation is considered;
%	\item Combining the one-step thresholding-based algorithm (i.e., iterative hard thresholding (IHT)) and ML algorithm as a two-stage method for the SPARC decoding; 
%	\item Providing theoretical recovery guarantees of the one-step thresholding-based algorithm. The recovery of one-step algorithm is guaranteed to contain all the active users (with high probability) when $M$ and $D$ are reasonably large;
%	\item Providing numerical results to demonstrate that the proposed algorithm is comparable to ML and outperforms AMP in terms of accuracy. In terms of processing time, it is shown that our algorithm is more efficient than ML and comparable to AMP.
%\end{itemize}

The structure of this paper is as follows: The standard concatenated coding scheme of SPARC and tree code~\cite{fengler2021non} is introduced in~\cref{section:preliminaries} while our proposed algorithm is introduced in~\cref{section:algorithm}. Theoretical analyses for both the covariance-based sparse recovery model and the thresholding algorithm are provided in~\cref{section:theory}.
In~\cref{section:experiments}, simulations are provided to demonstrate the performance of the proposed algorithm. Finally in~\cref{section:conclusion}, this work is concluded.

\subsection{Notations}
For any $a\in \mathbb{C}$, we denote $\overline{a}$ the conjugate of $a$. For any vector $\bm{x}\in \mathbb{R}^{N}$ and any matrix $\bm{A}\in \mathbb{R}^{D \times N}$, $\bm{x}^{T}$ and $\bm{A}^{T}$ are their transpose, respectively. For $\bm{x}\in \mathbb{C}^{N}$ and $\bm{A}\in \mathbb{C}^{D \times N}$, $\bm{x}^{*}$ and $\bm{A}^{*}$ are their conjugate (Hermitian) transpose respectively. $\bm{A}(i,:),\bm{A}(:,j)$ are the $i$-th row and $j$-th column
of $\bm{A}$ respectively. For $a\in \mathbb{C}$, $(a)_+=\max(a,0)$. For a positive integer $n$, the notation $[n]$ represents $[n]=\left\{1,2,\cdots,n\right\}$. For $\bm{x}\in \mathbb{R}^{N}$, $x_{\min}$, $x_{\max}$ are the smallest and largest components in magnitude of $\bm{x}$, respectively. For $\bm{\gamma}\in \mathbb{R}^{N}_+$, it means that $\bm{\gamma}\in \mathbb{R}^{N}$ and each component of $\bm{\gamma}$ is non-negative, i.e., $\gamma_i\ge 0$, $\forall i\in[N]$.
For any $\bm{\gamma}\in\mathbb{R}^N$, the notation $\mathrm{diag}(\bm{\gamma})$ means the $N\times N$ diagonal matrix whose diagonal entries are given by $\{\gamma_i\}_{i=1}^N$.
For a set $\mathcal{S}$, $|\mathcal{S}|$ denotes the number of elements in $\mathcal{S}$. For any positive integer $s$ and any vector $\bm{x}\in \mathbb{C}^{N}$, $\mathcal{H}_{s}$ denotes the hard thresholding operator that keeps only the largest $s$ components (in magnitude) of $\bm{x}$ and set the others to be zero. For matrix $\bm{A}\in\mathbb{C}^{D\times N}$, $\mathrm{vec}(\bm{A})$ denotes the $DN\times 1$ vector obtained by stacking the columns of $\bm{A}$.

\section{Preliminaries}
\label{section:preliminaries}

\begin{figure*}[tbh]
       \includegraphics[width=\textwidth]{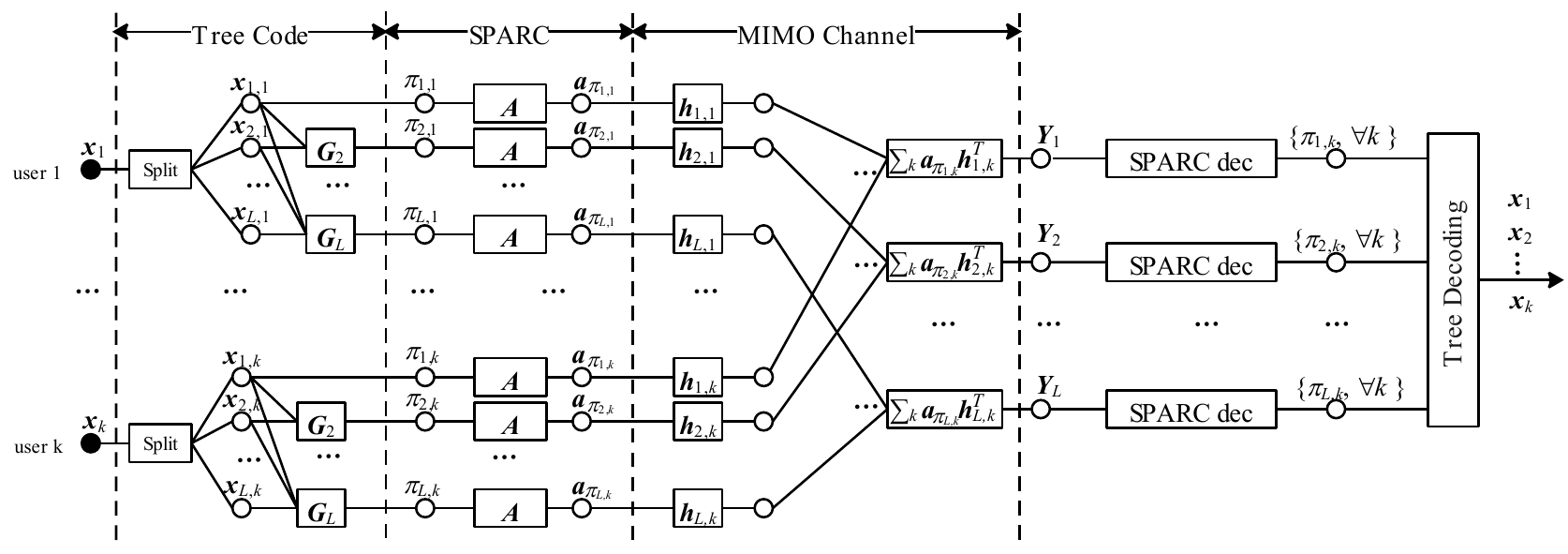}
       %	\vspace{-0.2cm}
       \centering
       \caption{\label{fig:URA_system}The concatenated coding scheme of SPARC and tree code for URA.}
       %	\vspace{-0.2cm}
\end{figure*}

In the standard concatenated coding scheme for the URA~\cite{fengler2019sparcs,fengler2021non} in~\cref{fig:URA_system}, a frame of $D_{total}$ channel uses is shared by arbitrary number of potential users and only $K\ll N$ users are active. Each active user transmits an information message of $B$ bits, i.e., $\bm{x}_k\in\mathbb{B}^{B}$.

\subsection{Encoding}
\label{subsec:std-encoding}

The information message is first encoded by an outer tree code~\cite{amalladinne2020coded} as shown in~\cref{fig:URA_system}. Specifically, the $k$-th user's information message $\bm{x}_k$ is split as $\bm{x}_{1,k}, \bm{x}_{2,k},\cdots,\bm{x}_{L,k}$. The $L$ sections are of length $b_1,b_2,\cdots,b_L$, where $\sum_{l=1}^{L}b_l=B$ and $b_1=J$ and $b_l<J$ for $l=2,\cdots,L$. Each section is augmented to size $J$ by appending $p_l=J-b_l$ parity bits. The coding rate of tree code is $R_{tree}=\frac{B}{L\cdot J}$. For section $l=2,3,\cdots,L$, $p_l$ parity bits are generated out of the previous $l-1$ sections of information bits using a random binary matrix $\bm{G}_{l}\in\mathbb{B}^{p_l\times m_l}$, where $m_l=\sum_{\ell=1}^{l-1}b_{\ell}$. The random binary matrices $\{\bm{G}_{l}\}_{l=2}^L$ are identical for all users. A total number of $P=\sum_{l=2}^{L}p_l$ parity bits are used to link $L$ sections of information bits together to produce the original information message at the receiver. The $L$ coded messages of user $k$ is denoted as $\pi_{1,k},\pi_{2,k},\cdots,\pi_{L,k}$, where $\pi_{l,k}\in[2^J]$ is a decimal representation of a $J$-bit binary message.

For the inner code, each user will apply SPARC individually to encode each $\pi_{l,k}$ into a length $D=D_{total}/L$ codeword separately~\cite{fengler2021non}. Let $\bm{A}=[\bm{a}_{1},\bm{a}_{2},\cdots,\bm{a}_{N}]\in\mathbb{C}^{D\times N}$ with $N=2^J$ be the codebook matrix that is shared by all potential users. It is assumed that the codebook matrix $\bm{A}$ is a column-normalized random matrix, e.g., the matrix whose columns are i.i.d. random variables drawn uniformly from the sphere, Bernoulli random matrix, sub-sampled Fourier matrix, etc. The sub-sampled Fourier matrix is preferred from the computational point of view. It has been empirically shown to have no performance loss in terms of reconstruction error when compared with the random i.i.d Gaussian ones in the context of compressive sensing~\cite{barbier2017approximate}, which is further confirmed by the replica analysis~\cite{wen2014analysis,hou2022sparse}. Each user $k$ chooses one column of the codebook based on its own $J$-bit coded message $\pi_{l,k}$ as its codeword $\bm{a}_{\pi_{l,k}}\in \mathbb{C}^{D}$. That is, the index $\pi_{l,k}$ of the selected column of $\bm{A}$ represents the $J$-bit message. The coding rate of SPARC is $R_{SPARC}=\frac{J}{D}$. Concatenated with tree code, the information rate of each user is given by $R=R_{SPARC}\times R_{tree}=\frac{B}{D_{total}}$ and the sum rate of $K$ active users is given by  $R_{total}=\frac{B\cdot K}{D_{total}}$.

\subsection{Decoding}
\label{subsec:std-decoding}

For a BS with $M$ antennas and by following the system model in~\eqref{problem:primal}, let the received signals matrix $\bm{Y}=\left[\bm{y}_1,\bm{y}_2,\cdots,\bm{y}_{D}\right]^T\in \mathbb{C}^{D \times M}$, the channel coefficients matrix $\bm{H}=\left[\bm{h}_1,\bm{h}_2,\cdots,\bm{h}_{N}\right]^T\in \mathbb{C}^{N \times M}$ and the AWGN matrix  $\bm{Z}=\left[\bm{z}_1,\bm{z}_2,\cdots,\bm{z}_{D}\right]^T\in \mathbb{C}^{D \times M}$. Therefore, the received signal for each section can be written in matrix form as\footnote{Since the inner encoder and decoder are identical for each section, the subscription ``$l$'' is omitted in~\eqref{problem:matrix} for simplicity.}
\begin{equation}\label{problem:matrix}
       \bm{Y}=\bm{A}\sqrt{\bm{\Gamma}^{\natural}}\bm{H}+\bm{Z},
\end{equation}
where the entries of $\bm{H}$ and $\bm{Z}$ are mutually independent and distributed as $\mathcal{CN}(0,1)$ and $\mathcal{CN}(0,\sigma^2)$, respectively, and $\bm{\Gamma}^{\natural}\in \mathbb{C}^{N\times N}$ is a diagonal matrix whose diagonal entries are given by $\{q_k^{\natural} g_k^{\natural}\}_{k=1}^N$ in~\eqref{problem:primal}. We consider the case that at each time slot, the number of active users $K$ is much smaller than the codebook size, i.e. $K\ll N$. Let $\gamma_k^{\natural}:=q_k^{\natural} g_k^{\natural}$ be the diagonal elements of $\bm{\Gamma}^{\natural}$, then we have
\begin{equation*}
       \lV \bm{\gamma}^{\natural}\rV_0\le K,\; K\ll N.
\end{equation*}
Furthermore, the non-zero entries of $\bm{\gamma}^{\natural}$ are determined by coded messages $\pi_{l,k}$, i.e., its $\pi_{l,k}$-th entry is $q_{\pi_{l,k}}^{\natural} g_{\pi_{l,k}}^{\natural}$ and $0$ otherwise. The set of the coded messages of $K$ users in each section is denoted by $\mathrm{supp}(\bm{\gamma}^{\natural})$, which is the set of the index of the non-zero entries of $\bm{\gamma}^{\natural}$.

The ML estimator for~\eqref{problem:matrix} is to obtain the estimate $\bm{\Gamma}^{\natural}$ (or $\bm{\gamma}^{\natural}$) from the the sample covariance
\begin{equation}\label{Y:samplesigma}
       {\bm{\widehat{\Sigma}_y}}=\frac{1}{M}\bm{Y}\bm{Y}^*.
\end{equation}
The ML algorithm will output an estimate of $\bm{\gamma}^{\natural}$ as $\bm{\hat{\gamma}}$, where the $j$-th entry $\hat{\gamma}_{j}$ denotes the possibility of $\bm{a}_{j}$ being transmitted by an active user and $\hat{\gamma}_{j}=0$ if this codeword is not selected by any active user.

While active user number $K$ is unknown at the BS, it is reasonable to assume that the maximal number of active user $T$ is known~\cite{kowshik2019short,ordentlich2017low}. The ML estimator generates $\mathcal{L}(\bm{Y}_l)=\{\hat{\pi}_{l,1},\hat{\pi}_{l,2},\cdots,\hat{\pi}_{l,T}\}$ for each section, which consists the indices of the largest $T$ components in $\bm{\hat{\gamma}}$. There are a total of $L$ sets of messages, i.e., $\{\mathcal{L}(\bm{Y}_l)\}_{l=1}^L$, each of which contains $T$ messages and each message conveys $J$ coded bits. For each message, the tree decoding aims to link its $L$ sections together with parity bits and remove falsely detected coded messages simultaneously.
In this paper, our focus is concentrated on the inner decoder while details for tree decoding can be found in~\cite[Section IV-C]{amalladinne2020coded}.

\section{The Proposed Two-Stage Algorithm}\label{section:algorithm}
In this section, a covariance-based approach is proposed to decode SPARC in two stages: a one-step thresholding-based algorithm for rough support recovery and a ML estimator for refinement.
First, \cref{problem:matrix} is reformulated into a sparse vector recovery problem, where the thresholding-based algorithms are introduced. Then, we give a brief review about the ML estimator which numerically performs well in the SPARC decoding. Last, the two-stage method is introduced to reduce the overall computational cost while maintaining similar performance as ML.

\subsection{Recovery of Sparse Vectors from Covariance Matrix}
For i.i.d. Gaussian channel vectors $\bm{h}_k\in \mathbb{C}^{M}$ and additive Gaussian noise $\bm{z}_i\in \mathbb{C}^{M}$, the columns of $\bm{Y}$ are independent samples from a multivariate complex Gaussian distribution as $\bm{Y}(:,j)\sim \mathcal{CN}(0,\bm{\Sigma_y})$. According to~\eqref{problem:matrix}, the covariance matrix can be calculated as~\cite{chen2019cov}
\begin{equation}\label{Y:truesigma}
       \bm{\Sigma_y}=\bm{A}\bm{\Gamma}^{\natural}\bm{A}^*+\sigma^2\bm{I}_D.
\end{equation}
Recall ${\bm{\widehat{\Sigma}_y}}=\frac{1}{M}\bm{Y}\bm{Y}^*$ is the sample covariance.
%\begin{equation}\label{Y:samplesigma}
%{\bm{\widehat{\Sigma}_y}}=\frac{1}{M}\bm{Y}\bm{Y}^*
%\end{equation}
%be the sample covariance of the observation $\bm{Y}$.  
Note that
${\bm{\widehat{\Sigma}_y}}\to \bm{\Sigma_y}$ if $ M\to \infty$. Thus for sufficiently large $M$, $\bm{\gamma}^{\natural}$ can be approximately estimated from the sample covariance matrix ${\bm{\widehat{\Sigma}_y}}$ by solving the follow problem: find $\bm{\gamma}^{\natural}\in\mathbb{R}^N_+$ that satisfies
\begin{equation}\label{problem:p0}
       \bm{\widehat{\Sigma}_y}= \bm{A}\mathrm{diag}(\bm{\gamma}^{\natural})\bm{A}^*+\sigma^2\bm{I}_D+\bm{\Delta},\; \text{s.t.} \; \lV\bm{\gamma}^{\natural}\rV_0\le K,
\end{equation}
where $\bm{\Delta}=\bm{\widehat{\Sigma}_y}-\bm{\Sigma_y}$ is the unknown error matrix between the sample and true covariance matrix. To bound $\bm{\Delta}$, it has been shown in \cite[Theorem 11]{fengler2021non} that
\begin{equation}\label{bound:Pdelta}
       \mathbb{P}\left(\lV \bm{\Delta}\rV_F\le \frac{c_{\varepsilon}D\left(\lV\bm{\gamma}^{\natural}\rV_1+\sigma^2\right)}{\sqrt{M}}\right)\ge 1-\varepsilon
\end{equation}
for $\lV\bm{a}_k\rV^2_2=D$ and sufficiently large $M$ (see also Lemma~\ref{bound:Delta}).
Here, $c_{\varepsilon}=\sqrt{c\log \left(\frac{(eD)^2}{\varepsilon}\right)}$ , where $\varepsilon\in (0,1)$ is arbitrary and $c$ is an universal constant. It then implies that $\lV\bm{\Delta}\rV_F$ is small when $M$ is sufficiently large. In view of this, the problem~\eqref{problem:p0} can be regarded as a robust sparse vector recovery problem. Hence, it is natural to consider solving the following $\ell_0$-minimization problem
\begin{equation}\label{problem:l0recovery}
       \mathop{\mathrm{minimize}}\limits_{\bm{\gamma}\in\mathbb{R}^N_+}~\lV\bm{\gamma}\rV_0,~\text{s.t.}~ \lV \bm{\Sigma}(\bm{\gamma})-{\bm{\widehat{\Sigma}_y}}\rV_F\le \eta,
\end{equation}
where
\begin{equation*}
       \bm{\Sigma}(\bm{\gamma}):=\bm{A}\mathrm{diag}(\bm{\gamma})\bm{A}^*+\sigma^2\bm{I}_D,
\end{equation*}
and
$\eta:=\frac{c_{\varepsilon}D\left(\lV\bm{\gamma}^{\natural}\rV_1+\sigma^2\right)}{\sqrt{M}}$ according to~\eqref{bound:Pdelta}. If the linear equation in \eqref{problem:p0} is underdetermined, it is widely known that the $\ell_0$-minimization problem \eqref{problem:l0recovery} is NP-hard in general  (see \cite[Theorem 2.17]{foucart2013}). Thus, one may consider its convex relaxation in form of $\ell_1$-minimization as
\begin{equation}\label{problem:p1}
       \mathop{\mathrm{minimize}}\limits_{\bm{\gamma}\in\mathbb{R}^N}~\lV\bm{\gamma}\rV_1,~\text{s.t.}~ \lV \bm{\Sigma}(\bm{\gamma})-{\bm{\widehat{\Sigma}_y}}\rV_F\le \eta,
\end{equation}
where the non-negativity in $\bm{\gamma}$ is also relaxed. The recovery guarantees of the basis pursuit approach~\eqref{problem:p1} can be found in Section~\ref{section:theoryforSparseRe}, showing that \eqref{problem:p1} has exact recovery of $\bm{\gamma}^{\natural}$ with near-optimal scaling law in $K$ and $D$.
Problem~\eqref{problem:p1} is convex and many algorithms can be applied, e.g., ISTA~\cite{daubechies2004iterative}, FISTA~\cite{beck2009fast}, ADMM~\cite{boyd2011distributed}. Though convergence of such algorithms are well studied,  there is few results that~\eqref{problem:p1} can be exactly solved in a (very few) fixed number of iterations. In this work, we consider the non-convex IHT algorithm in the following sub-sections. Theoretically, we show that the IHT algorithm can be guaranteed to have exact support recovery in just one step -- meaning that the algorithm can be designed to be non-iterative and super efficient.

\subsection{One-Step IHT Algorithm}
In this sub-section, we apply a first-order gradient-type non-convex algorithm for~\eqref{problem:p0}. To begin with, we define a matrix $\bm{\A}\in \mathbb{C}^{D^2\times N}$ where its $k$-th column is given by
\begin{equation}\label{def:calA}
       \bm{\A}(:,k):=\mathrm{vec}(\bm{a}_k\bm{a}_k^*),~k\in [N].
\end{equation}
Relaxing the non-negativity in $\bm{\gamma}^{\natural}$, the problem~\eqref{problem:p0} can be reformulated as
\begin{equation}\label{problem:CS}
       \bm{u}=\bm{\A}\bm{\gamma}^{\natural}+\bm{\delta},~\text{s.t.}~ \lV\bm{\gamma}^{\natural}\rV_0\le K,
\end{equation}
where $\bm{u}:=\mathrm{vec}(\bm{\widehat{\Sigma}_y}-\sigma^2\bm{I}_{D})$ and $\bm{\delta}:=\mathrm{vec}(\bm{\Delta})$.
Consider the standard least squares fitting for~\eqref{problem:CS}. We then have the following minimization problem
\begin{equation}\label{eq:conslsint}
       \mathop{\mathrm{minimize}}\limits_{\lV\bm{\gamma}\rV_0\le K}~f_{ls}(\bm{\gamma}),~\mbox{where}~ f_{ls}(\bm{\gamma}):=\frac{1}{2D^2}\lV\bm{u}-\bm{\A}\bm{\gamma}\rV_{2}^{2}.
\end{equation}
Therefore, we can directly apply the projected gradient descent (PGD) algorithm for \eqref{eq:conslsint}. When the hard thresholding operator is employed for projection, the PGD is equivalent to the IHT algorithm~\cite{blumensath2009iterative}. The update at $t$-th iteration of IHT is given by
\begin{align}\label{iter:IHT}
       \bm{\gamma}_{t+1}
       =\mathcal{H}_{\hat{K}}\left(\bm{\gamma}_{t}-\alpha \nabla f_{ls}(\bm{\gamma}_{t})\right)
       =\mathcal{H}_{\hat{K}}\left(\bm{\gamma}_{t}+\frac{\alpha}{D^2} \mathcal{\bm{\A}}^*\left(\bm{u}-\bm{\A}\bm{\gamma}_{t}\right)\right),\;t\ge 0,
\end{align}
where $\alpha>0$ is the step size and the hard thresholding operator $\mathcal{H}_{\hat{K}}$ is to project the update into the space $\{\bm{\gamma}\in\mathbb{R}^N:\lV\bm{\gamma}\rV_0\le \hat{K}\}$. For different codebooks $\bm{A}$, the linear convergence guarantees of IHT in~\eqref{iter:IHT} are provided in Section~\ref{section:theoryforSparseRe}.

Notice the algorithm rely on a proper choice of step size and iteration number (see \cref{theorem:IHTconv}).  In fact, we are aiming at support recovery in this work, thus the recovery of  $\bm{\gamma}^\natural$ is not essentially required (see \cref{fig:URA_system}). Therefore, we consider a much more simple and efficient non-iterative method -- the one-step IHT algorithm, see Algorithm~\ref{alg:oneIHT}.

\begin{algorithm}[tbh]
       \DontPrintSemicolon
       \SetKwInOut{Initialize}{Initialize}
       \SetAlCapHSkip{0em}
       \caption{One-Step IHT Algorithm}\label{alg:oneIHT}
       \Indmm\Indmm
       \Initialize { $\bm{\gamma}_0=\bm{0}$.}
       \Indpp\Indpp
       $\bm{\gamma}_1=\mathcal{H}_{\hat{K}}\left(\bm{\gamma}_{0}+\frac{1}{D^2} \mathcal{\bm{\A}}^*\left(\bm{u}-\bm{\A}\bm{\gamma}_{0}\right)\right)$\;
       $S_1 = \mathrm{supp}(\bm{\gamma}_1)$\;
       \Indmm\Indmm
       \KwOut{The resulting estimate  $S_1$.}
       \Indpp\Indpp
\end{algorithm}

Furthermore, for one step of IHT, we show that w.h.p. it is guaranteed to have exact support recovery (i.e. $S_1 = \mathrm{supp}(\bm{\gamma}^\natural)$) in \cref{section:theoryforalg2}. To further improve the performance of the algorithm, we combine the one-step IHT with a ML estimator in the following sections.

%For SPARC decoding, due to the fact that the accurate number of active users $K$ may not be available, the thresholding parameter $\hat{K}$ can be set to be a upper bound of $K$. For example, $\hat{K}=c_{sel}\cdot T$ (where $c_{sel}\ge 1$) is a practical choice to make sure all components in $\mathrm{supp}(\bm{\gamma}^{\natural})$ can be detected. 

\subsection{The ML Estimation via Coordinate-Wise Optimization}

In this subsection, we give a brief review about the ML estimation for the SPARC decoding and the corresponding iterative solver introduced in \cite{haghighatshoar2018improved,fengler2021non}. From \eqref{Y:truesigma}, we consider the following negative log-likelihood cost function
\begin{align}\label{def:functionf}
       f\left(\bm{\gamma}\right) & :=-\frac{1}{M}\log p(\bm{Y}|\bm{\gamma})= -\frac{1}{M}\sum_{j=1}^M\log p(\bm{Y}(:,j)|\bm{\gamma})
       \propto\log|\bm{\Sigma}(\bm{\gamma})|+\mathrm{tr}(\bm{\Sigma}(\bm{\gamma})^{-1}\bm{\Sigma_y}).
\end{align}
By relaxing the sparse constrained problem into a non-negative constrained problem, the relaxed ML estimator is given by
\begin{equation}\label{def:nonNeML}
       \bm{\gamma}^*_r=\mathop{\arg\, \min}\limits_{\bm{\gamma}\in \mathbb{R}^N_+}f\left(\bm{\gamma}\right).
\end{equation}
A gradient-type algorithm based on the coordinate-wise optimization for solving~\eqref{def:nonNeML} can be summarized as follows. For each coordinate $k\in [N]$, define the scalar function
\begin{equation*}
       f_k\left(d\right)= \log|\bm{\Sigma}+d\bm{a}_k\bm{a}_k^*|
       +\mathrm{tr}\left[(\bm{\Sigma}+d\bm{a}_k\bm{a}_k^*)^{-1}\bm{\Sigma_y}\right],
\end{equation*}
where $\bm{\Sigma}:=\bm{\Sigma}(\bm{\gamma})$ and $\gamma\in\mathbb{R}^N$. By setting the derivative of $f_k\left(d\right)$ as zero, i.e., $f_{k}'(d)=0$, the solution is given by
\begin{equation*}
       \hat{d}_k=\frac{\bm{a}_k^*\Sigma^{-1}\bm{\widehat{\Sigma}_y}\bm{\Sigma}^{-1}\bm{a}_k-\bm{a}_k^*\bm{\Sigma}^{-1}\bm{a}_k}{\left(\bm{a}_k^*\bm{\Sigma}^{-1}\bm{a}_k\right)^2}.
\end{equation*}
Then one step of the coordinate-wise projected gradient descent update is given by: for all $k\in [N]$,
\begin{subequations}\label{alg:ML}
       \begin{align}
              \gamma_k                 & \leftarrow\left(\gamma_k+d_k\right)_+,~\text{where}~d_k=\hat{d}_k, \label{alg:ML_scale} \\
              \bm{\Sigma}(\bm{\gamma}) & \leftarrow\bm{\Sigma}(\bm{\gamma})+d_k\bm{a}_k\bm{a}_k^*, \label{alg:ML_vector}
       \end{align}
\end{subequations}
where $\left(\cdot\right)_+$ is to project the update onto the non-negative space.  Further, to avoid the expensive computational cost in calculating $\Sigma^{-1}$, the rank-one update in~\eqref{alg:ML_vector} can be replaced by
\begin{equation*}
       \bm{\Sigma}^{-1}\leftarrow \bm{\Sigma}^{-1}-\frac{d_k \bm{\Sigma}^{-1}\bm{a}_k \bm{a}_k^*\bm{\Sigma}^{-1}}{1+d_k \bm{a}_k^*\bm{\Sigma}^{-1}\bm{a}_k},
\end{equation*}
where the Sherman-Morrison update is applied. The iterative algorithm for solving \eqref{def:nonNeML} is then obtained by repeating the above coordinate-wise update.

Alternatively, if the standard least squares fitting for \eqref{problem:p0} is considered and the sparse constraint is relaxed to the non-negative constraint, the NNLS is then to solve the following convex problem~\cite{fengler2021non,haghighatshoar2018improved,slawski2013non}
\begin{equation}\label{def:NNLS}
       \bm{\gamma}^*_N=\mathop{\arg\, \min}\limits_{\bm{\gamma}\in \mathbb{R}^N_+}\lV \bm{\Sigma}(\bm{\gamma})-{\bm{\widehat{\Sigma}_y}}\rV_F.
\end{equation}
The NNLS estimator is guaranteed to identify the activity of $\widetilde{\mathcal{O}}(D^2)$ active users~\cite{fengler2021non,haghighatshoar2018improved}. Comparing with NNLS estimator in~\eqref{def:NNLS}, the ML estimator in~\eqref{def:nonNeML} is non-convex and thus the corresponding gradient algorithm falls short in theoretical guarantees. However,  ML estimator empirically outperforms the former significantly. In experiments, ML can successfully identify the activity of $\widetilde{\mathcal{O}}(D^2)$ active users. Therefore, the coordinate-wise iterative algorithm for ML in~\eqref{def:nonNeML} is chosen for the refinement stage in our algorithm.

\subsection{The Proposed Algorithm: A Two-Stage Method}
In this section, we propose a two-stage method based on the thresholding-based algorithm and ML estimator. For ML estimator, the existing gradient-type coordinate-wise minimization approaches are iterative algorithms on the entire index set $\{1,2,\cdots,N\}$. By utilizing the sparsity in $\bm{\gamma}^{\natural}$, we introduce a fast coordinate-wise minimization method which reduces the computational cost dramatically in the first phase of the iteration. That is, we estimate a set $S^{0}$ that is most likely to contain the support of $\bm{\gamma}^{\natural}$ in the first stage of the algorithm. In the second stage, we adopt the coordinate minimization to the following problem
\begin{equation}\label{alg:ML_opt}
       \hat{\bm{\gamma}} = \mathop{\arg\, \min}\limits_{
              \substack{\bm{\gamma}\in  \mathbb{R}^{N}_{+},\\
                     \mathrm{supp}(\bm{\gamma})\subset S^{0}}
       }~f\left(\bm{\gamma}\right),
\end{equation}
where $f\left(\bm{\gamma}\right)$ is defined in \eqref{def:functionf}. To find out the indices that are most likely to be the support of $\bm{\gamma}^\natural$ in the first stage,  we define
\begin{equation}\label{def:Yk}
       \Y_k :=\sum_{1\le i,j \le D}\overline{a}_{k,i}a_{k,j} \widehat{\Sigma}_y(i,j),~\forall k\in [N],
\end{equation}
where $\widehat{\Sigma}_y(i,j)$ is the $(i,j)$-th component of $\bm{\widehat{\Sigma}_y}$. Recall that the channel and AWGN vectors are i.i.d. Gaussian. Thus we have the following proposition.

\begin{proposition}\label{prop:expZk}
       Assume that $\{\bm{a}_k\}_{k=1}^{N}$ are i.i.d. random variables drawn uniformly from the sphere of radius $\sqrt{D}$.  For $\Y_k$ defined in \eqref{def:Yk}, we have
       \begin{align}\label{def:EYk}
              E[\Y_k]=
              \left\{
              \begin{aligned}
                      & \psi(D,\bm{\gamma}^{\natural},\sigma),~                                      & \forall k\notin  S^{\natural}, \\
                      & \psi(D,\bm{\gamma}^{\natural},\sigma)+D^2\gamma^{\natural}_{k},~ & \forall k\in  S^{\natural},    \\
              \end{aligned}
              \right. \nonumber
       \end{align}
       where  $S^{\natural}=\mathrm{supp}(\bm{\gamma}^{\natural})$ and
       $\psi(D,\bm{\gamma}^{\natural},\sigma):=\sum_{\ell\in S^{\natural} }\frac{D^2}{D+1}\gamma^{\natural}_{\ell}+D\sigma^2.$\\
       Proof: See Appendix~\ref{proof:propexpZk}.
\end{proposition}
The Proposition~\ref{prop:expZk} indicates that statistically the indices of the larger components of $\{\Y_k\}_{k=1}^N$ are the support of $\bm{\gamma}^\natural$. Thus, it is natural to select larger components of $\{\Y_k\}_{k=1}^N$ as an initial guess of $S^{\natural}$, and we have the following Remark~\ref{remark:eqYandIHT}.
\begin{remark}\label{remark:eqYandIHT}
       In fact, selecting larger components of $\{\Y_k\}_{k=1}^N$ is equivalent to the one-step IHT in Algorithm~\ref{alg:oneIHT}: Notice that the estimation of one-step IHT is given by $\bm{\gamma}_1\propto \H_{\hat{K}}\left(\bm{\A}^*\bm{u}\right)$. For $\lV\bm{a}_k\rV_2^2=D$, by the definition of $\bm{u}$ we have
       \begin{align*}
              \Y_k =\sum_{1\le i,j \le D}\overline{a}_{k,i}a_{k,j}\left(\widehat{\Sigma}_y(i,j)-\sigma^2\delta_{i,j}\right)+\sigma^2D
              =\bm{\A}_k^*\bm{u}+\sigma^2D,
       \end{align*}
       where $\bm{\A}_k$ is the $k$-th column of $\bm{\A}$. Therefore, denoting $\bm{\Y}:=[\Y_1,\Y_2,\cdots,\Y_N]$, we then have
       \begin{equation}\label{eq:equiIHT}
              \mathrm{supp}\left(\bm{\gamma}_1\right)= \mathrm{supp}\left(\H_{\hat{K}}\left(\bm{\Y}\right)\right).
       \end{equation}
\end{remark}
\noindent Further, instead of finding the largest $\hat{K}$ components of $\{\Y_k\}_{k=1}^N$, a threshold $\overline{\Y}=\frac{1}{N}\sum_{k=1}^N\Y_k$ could be used to select the support, which is due to that only a small number of  $\{\Y_k\}_{k=1}^N$ are statistically larger. Thus the choice of parameter $\hat{K}$ is no longer required. Next, the accelerated ML estimator in~\eqref{alg:ML_opt} is employed for refinement in the second stage. The proposed two-stage method is summarized in Algorithm~\ref{alg:two_stage}.
\begin{algorithm}[tbh]
       \DontPrintSemicolon
       \SetKwInOut{Initialize}{Initialize}

       \SetAlCapHSkip{0em}
       \caption{Accelerated Maximum-Likelihood for SPARC via Coordinate-Wise Optimization}\label{alg:two_stage}

       \Indmm\Indmm
       \KwIn{$\left\{\bm{a}_k \right\}_{k=1}^{N}$, sample covariance matrix $\bm{\widehat{\Sigma}_y}=\frac{1}{M}\bm{Y}\bm{Y}^*$.}
       \Initialize {$\bm{\Sigma}=\sigma^2\bm{I}_D$, $\bm{\gamma}=0$, $\rho$ (e.g., $\rho=1$).}
       \Indpp\Indpp
       \label{step:threshold}Compute $\{\Y_k\}_{k=1}^N$ in~\eqref{def:Yk} and the set $S^{0}=\{k: \Y_k\ > \rho \overline{\Y}\}$, where $\overline{\Y}=\frac{1}{N}\sum_{k=1}^N\Y_k$.\;
       Set $\gamma_{\ell}=0$ for $\ell \notin S^{0}$.\;
       \For{$t=1,2,\cdots$}{\label{step:ML_st}
              \For(\tcc*[f]{Traverse all indices in the set $S^{0}$ randomly.}){$\ell\in S^{0}$}{
                     $d_\ell=\frac{\bm{a}_\ell^*\bm{\Sigma}^{-1}\bm{\widehat{\Sigma}_y}\bm{\Sigma}^{-1}\bm{a}_\ell-\bm{a}_\ell^*\bm{\Sigma}^{-1}\bm{a}_\ell}{\left(\bm{a}_\ell^*\bm{\Sigma}^{-1}\bm{a}_\ell\right)^2}$,\;
                     $\gamma_\ell \leftarrow \left(\gamma_\ell+d_\ell\right)_{+}$,\;
                     $\bm{\Sigma}^{-1} \leftarrow \bm{\Sigma}^{-1}-\frac{d_\ell \bm{\Sigma}^{-1}\bm{a}_\ell \bm{a}_\ell^*\bm{\Sigma}^{-1}}{1+d_\ell \bm{a}_\ell^*\bm{\Sigma}^{-1}\bm{a}_\ell}$.\;
              }
       }\label{step:ML_end}
       \Indmm\Indmm
       \KwOut{The resulting estimate $\bm{\gamma}$.}
       \Indpp\Indpp
\end{algorithm}

More specifically, step~\ref{step:threshold} in Algorithm~\ref{alg:two_stage} is the hard-thresholding stage. It computes a subset $S^{0}\subset [N]$ such that $S^{\natural}\subset S^{0}$ with high probability. In the calculation of $S^{0}=\{k: \Y_k\ > \rho \overline{\Y}\}$, the parameter $\rho$ decides the size of $S^{0}$ and is usually set to be around $1$. Steps~\ref{step:ML_st}-\ref{step:ML_end} in Algorithm~\ref{alg:two_stage} are the accelerated ML Stage, which adopts the coordinate-wise optimization for~\eqref{alg:ML_opt} as \eqref{alg:ML} in the subset $S^{0}$. In the next section, this algorithm will be analyzed thoroughly.

\section{Theoretical Results}
\label{section:theory}

\subsection{Theoretical Guarantees of Sparse Recovery}\label{section:theoryforSparseRe}
Let $\tilde{\bm{\gamma}}$ be the solution to the convex optimization problem in \eqref{problem:p1}, Theorem~\ref{theory:l1norm} below guarantees that the sparse vector recovery model in \eqref{problem:p1} admits a proper approximation to the underlying vector $\bm{\gamma}^\natural$ provided that $M$ is sufficiently large.
\begin{theorem}\label{theory:l1norm}
       Let $\varepsilon>0$ and assume that $\{\bm{a}_k\}_{k=1}^{N}$ are i.i.d. random variables drawn uniformly from the sphere of radius $\sqrt{D}$. For $\tilde{\bm{\gamma}}$ being the solution to \eqref{problem:p1}, we have
       \begin{align}
              \mathbb{P}\left[
                     \lV\tilde{\bm{\gamma}}-\bm{\gamma}^{\natural}\rV_1\le \frac{c_1\left(\lV\bm{\gamma}^{\natural}\rV_1+\sigma^2\right)}{\sqrt{M}}\sqrt{\log \left(\frac{(eD)^2}{\varepsilon}\right)} \right]
              \ge  1-\varepsilon-e^{-c_2D}
       \end{align}
       provided that
       \begin{equation*}
              K\le \frac{c_3 D(D-1)}{\log^2(eN/D)}\quad and \quad M\ge c\log \left(\frac{(eD)^2}{\varepsilon}\right),
       \end{equation*}
       %and 
       %\begin{equation*}
       %      M\ge c\log \left(\frac{(eD)^2}{\varepsilon}\right),
       %\end{equation*}
       where  $c_1,c_2,c_3,c$ are universal positive constants.
       %Proof: See Appendix~\ref{proof:theoreml1rec}.
\end{theorem}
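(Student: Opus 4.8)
The plan is to read \eqref{problem:p1} as a quadratically-constrained basis pursuit problem and to invoke the standard theory of stable $\ell_1$-recovery once an isometry property for the lifted measurement operator $\bm{\A}$ has been established. Since $\lV\bm{\Sigma}(\bm{\gamma})-\bm{\widehat{\Sigma}_y}\rV_F=\lV\bm{\A}\bm{\gamma}-\bm{u}\rV_2$ and, by \eqref{bound:Pdelta} (equivalently Lemma~\ref{bound:Delta}), the noise $\bm{\delta}=\mathrm{vec}(\bm{\Delta})$ obeys $\lV\bm{\delta}\rV_2=\lV\bm{\Delta}\rV_F\le\eta$ on an event $\mathcal{E}_1$ of probability at least $1-\varepsilon$, the program \eqref{problem:p1} is exactly $\min\lV\bm{\gamma}\rV_1$ subject to $\lV\bm{\A}\bm{\gamma}-\bm{u}\rV_2\le\eta$, with $\bm{u}=\bm{\A}\bm{\gamma}^{\natural}+\bm{\delta}$. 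First I would therefore reduce the claim to a restricted isometry property (RIP) on $2K$-sparse vectors, from which the robust null space property and a stable bound of the form $\lV\tilde{\bm{\gamma}}-\bm{\gamma}^{\natural}\rV_1\le C\,\tau\,\eta/\sqrt{D(D-1)}$ follow by the standard implications \cite{foucart2013} (using that $\bm{\gamma}^{\natural}$ is $K$-sparse, so its best $K$-term error vanishes); substituting $\eta=c_{\varepsilon}D(\lV\bm{\gamma}^{\natural}\rV_1+\sigma^2)/\sqrt{M}$ together with $c_{\varepsilon}=\sqrt{c\log((eD)^2/\varepsilon)}$ and $D/\sqrt{D(D-1)}\le\sqrt{2}$ then reproduces the stated error bound.

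The obstruction to a direct RIP is that every column $\mathrm{vec}(\bm{a}_k\bm{a}_k^*)$ shares the nonzero mean $\mathrm{vec}(\bm{I}_D)$, since $\mathbb{E}[\bm{a}_k\bm{a}_k^*]=\bm{I}_D$; consequently $\mathbb{E}[\bm{\A}^*\bm{\A}]=(D^2-D)\bm{I}_N+D\,\mathbf{1}\mathbf{1}^{T}$ carries a rank-one spike of size $\sim DK$ along $\mathbf{1}$ on any $K$-support, which caps a naive symmetric RIP at $K=\widetilde{\mathcal{O}}(D)$. The remedy is to center the measurements: let $\bm{\A}_c$ be the matrix with columns $\mathrm{vec}(\bm{a}_k\bm{a}_k^*-\bm{I}_D)$. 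Because $\mathrm{tr}(\bm{a}_k\bm{a}_k^*)=\lV\bm{a}_k\rV_2^2=D=\mathrm{tr}(\bm{I}_D)$, each centered column is \emph{exactly} orthogonal to $\mathrm{vec}(\bm{I}_D)$, so that $\bm{\A}\bm{\gamma}=\bm{\A}_c\bm{\gamma}+\mathrm{vec}(\bm{I}_D)\,(\mathbf{1}^{T}\bm{\gamma})$ is an orthogonal splitting and the spike is confined to the single, harmless direction $\mathrm{vec}(\bm{I}_D)$. A short computation gives $\mathbb{E}[\bm{\A}_c^*\bm{\A}_c]=D(D-1)\bm{I}_N$, an isotropic Gram whose per-column energy $D(D-1)$ is exactly the quantity appearing in the sparsity condition. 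Exploiting this orthogonality, feasibility of $\tilde{\bm{\gamma}}$ for \eqref{problem:p1} forces $\lV\bm{\A}_c\tilde{\bm{\gamma}}-\bm{u}_c\rV_2\le\eta$, where $\bm{u}_c$ is the component of $\bm{u}$ orthogonal to $\mathrm{vec}(\bm{I}_D)$, while optimality yields $\lV\tilde{\bm{\gamma}}\rV_1\le\lV\bm{\gamma}^{\natural}\rV_1$ and $\lV\bm{\A}_c(\tilde{\bm{\gamma}}-\bm{\gamma}^{\natural})\rV_2\le 2\eta$; it therefore suffices to prove the RIP for the centered, normalized operator $\tfrac{1}{\sqrt{D(D-1)}}\bm{\A}_c$.

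The technical core — and the step I expect to be the main obstacle — is to show that $\tfrac{1}{\sqrt{D(D-1)}}\bm{\A}_c$ satisfies the RIP of order $2K$ with a sufficiently small constant on an event $\mathcal{E}_2$ of probability at least $1-e^{-c_2D}$, provided $K\le c_3 D(D-1)/\log^2(eN/D)$. For a fixed support $S$ with $|S|=2K$ this reduces to showing that the $2K\times 2K$ Gram matrix $\tfrac{1}{D(D-1)}\bm{\A}_{c,S}^*\bm{\A}_{c,S}$, whose off-diagonal entries are proportional to $|\langle\bm{a}_k,\bm{a}_\ell\rangle|^2-D$, concentrates around $\bm{I}_{2K}$ in operator norm. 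The difficulty is that the entries $\overline{a}_{k,i}a_{k,j}$ of the lifts are \emph{sub-exponential} rather than sub-Gaussian, so the concentration must proceed through a matrix Bernstein or generic-chaining estimate for sums of independent heavy-tailed rank-one terms rather than a textbook sub-Gaussian RIP; this heavy-tailedness is precisely what turns the optimal factor $\log(eN/D)$ into $\log^2(eN/D)$. I would dispatch the fixed-norm dependence of the uniform-on-sphere columns through the representation $\bm{a}_k=\sqrt{D}\,\bm{g}_k/\lV\bm{g}_k\rV_2$ with $\bm{g}_k\sim\mathcal{CN}(\bm{0},\bm{I}_D)$, prove the single-support bound by such a Bernstein-type inequality, and then union-bound over the $\binom{N}{2K}\le(eN/2K)^{2K}$ supports; the interplay of this combinatorial logarithm with the sub-exponential tail produces both the $\log^2(eN/D)$ in the sparsity level and the exponent $c_2 D$ in the failure probability. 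Intersecting $\mathcal{E}_1$ and $\mathcal{E}_2$ yields the overall probability $1-\varepsilon-e^{-c_2D}$ and completes the argument.
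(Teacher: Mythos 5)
Your overall skeleton matches the paper's: both reduce \eqref{problem:p1} to quadratically-constrained basis pursuit for the lifted operator, both invoke Lemma~\ref{bound:Delta} to bound the noise on an event of probability $1-\varepsilon$, both recognize that the common mean $\mathrm{vec}(\bm{I}_D)$ of the columns $\mathrm{vec}(\bm{a}_k\bm{a}_k^*)$ must be removed, and both finish with a standard stable-recovery theorem and an intersection of events. But the paper never proves (or uses) an RIP. Its entire "isometry" input is the robust $\ell_2$-NSP of the \emph{hollowed} matrix $\mathring{\bm{\A}}$ (columns formed by the off-diagonal entries of $\bm{a}_k\bm{a}_k^*$), which it simply cites from \cite{fengler2021non} (Theorems 2 and 10): under $2K\le \tilde{c}_2D(D-1)/\log^2(eN/D)$, with probability $1-e^{-\tilde{c}_3 D}$ one has $\lV\bm{v}\rV_1\le 0.68\lV\bm{v}_{\overline{\mathcal{S}}}\rV_1+\frac{3}{D}\lV\mathring{\bm{\A}}\bm{v}\rV_2$ for all $\bm{v}$ and all $|\mathcal{S}|\le K$. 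Since dropping coordinates only decreases the norm, $\lV\bm{\A}\bm{v}\rV_2\ge\lV\mathring{\bm{\A}}\bm{v}\rV_2$, the NSP transfers to $\bm{\A}$ for free, and \cite[Theorem~4.19]{foucart2013} immediately gives $\lV\tilde{\bm{\gamma}}-\bm{\gamma}^{\natural}\rV_1\le\frac{12/D}{1-0.68}\,\eta$, which is the stated bound after substituting $\eta$. Note also that this $\ell_1$-form NSP carries the constant $\tau=3/D$ with \emph{no} $\sqrt{K}$ factor, whereas your RIP-to-NSP route produces a bound of the form $\sqrt{K}\,\eta/\sqrt{D(D-1)}$ (your displayed intermediate bound omits this $\sqrt{K}$; it happens to be absorbable via the sparsity condition, but it is a slip).

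The genuine gap is your step 3, which you yourself flag as the technical core: a two-sided RIP of order $2K$ for $\tfrac{1}{\sqrt{D(D-1)}}\bm{\A}_c$ at $K\le c_3D(D-1)/\log^2(eN/D)$, proved by "matrix Bernstein for a fixed support, then union bound over $\binom{N}{2K}$ supports." This plan fails quantitatively, not just in difficulty. The union bound requires a per-support failure probability below $e^{-cK\log(eN/K)}\approx e^{-cD^2/\log(eN/D)}$, but the fluctuation $\lV\tfrac{1}{D(D-1)}\bm{\A}_{c,S}^*\bm{\A}_{c,S}-\bm{I}\rV$ is a fourth-order chaos in the underlying vectors: conditioning on half of the variables (after decoupling) leaves a Hanson--Wright quadratic form whose linear (heavy-tail) regime is governed by operator-norm parameters of the conditioning matrix, and these parameters can themselves only be controlled with probability roughly $e^{-cD}$-to-$e^{-cD^{3/2}}$. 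The achievable per-support tail at constant deviation is therefore far larger than $e^{-cD^2/\log}$, and the union over supports blows up. This heavy-tail obstruction is exactly why the literature this theorem rests on (\cite{fengler2021non}, building on small-ball/Mendelson-type arguments) establishes the \emph{one-sided} NSP rather than a two-sided RIP for such rank-one lifted column ensembles, and why the paper itself remarks in Section~4.1 that $\bm{\A}$ "can not be shown to satisfy the RIP condition" even when $\bm{A}$ does. Your attribution of the $\log^2(eN/D)$ factor and the $e^{-c_2D}$ probability to the union-bound/sub-exponential interplay is accordingly speculative: in the actual proofs these come from the NSP machinery, not from a support union bound. The fix is simple: replace your RIP core with the citable robust NSP of the centered/hollowed operator and keep the rest of your argument, which then essentially coincides with the paper's proof.
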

\begin{proof}
       See Appendix~\ref{proof:theoreml1rec}.
\end{proof}
A direct consequence of Theorem~\ref{theory:l1norm} is that
\begin{equation*}
       \lV\tilde{\bm{\gamma}}-\bm{\gamma}^{\natural}\rV_1\le c_1\varepsilon
\end{equation*}
holds as long as $K\le \mathcal{O}\left(D^2/\log^2(\frac{N}{D})\right)$ and
$ M\ge \left(\frac{K\gamma^\natural_{\max}+\sigma^2}{\varepsilon}\right)^2\log\left(\frac{(eD)^2}{\varepsilon}\right)$. Therefore, for $K\le \mathcal{O}\left(D^2/\log^2(\frac{N}{D})\right)$ and sufficiently large $M$, we will have arbitrary small $\ell_1$ error between the recovered vector and the ground truth. Hence, it is possible for the covariance-based model to successfully detect at most $\widetilde{\mathcal{O}}(D^2)$ active users. Further, the covariance-based sparse recovery model \eqref{problem:p1} imposes a loose requirement on channel uses $D$ comparing with the traditional compressive sensing approach, which aims to recover $\sqrt{\bm{\Gamma}^{\natural}}\bm{H}$ from~\eqref{problem:matrix}~\cite{vila2011expect,liu2018massive}. Given $K$ active users, Theorem~\ref{theory:l1norm} states that $D=\widetilde{\mathcal{O}}(\sqrt{K})$ is necessary. For comparison purposes, the traditional compressive sensing approach requires $D=\widetilde{\mathcal{O}}(K)$~\cite{kueng2018robust}.

When $M\to \infty$, we can see from~\eqref{bound:Pdelta} that $\lV\bm{\delta}\rV_2\to 0$, and \eqref{problem:p0} or \eqref{problem:CS} is reduced to find the $K$ sparse $\bm{\gamma}^{\natural}$ from $\bm{u}=\bm{\A}\bm{\gamma}^{\natural}$. Therefore, to ensure that the linear system admits a unique solution (i.e., $\bm{\gamma}^{\natural}$ be identifiable via the system), theoretically it requires each $2K$ columns of $\bm{\A}$ being linearly independent. Thus, the channel uses $D$ should satisfy $D^2\ge 2K$ to recover every $K$ sparse vector. It implies that the bound in $D$ and $K$ given in Theorem~\ref{theory:l1norm} is near optimal.

Now we investigate the theoretical performance of the thresholding-based algorithms for \eqref{problem:CS}. Due to its non-convexity, the convergence guarantee of IHT in~\eqref{iter:IHT} is not trivial. To ensure successful recovery, one common assumption is that the sensing matrix (e.g., $\bm{\A}$ in \eqref{problem:CS}) satisfies a restricted isometry property~(RIP) condition~\cite{candes2005decoding}. However, for general $\{\bm{a}_k\}_{k=1}^N$ (e.g., i.i.d. Gaussian vectors), even if RIP holds for rescaled version of $\bm{A}=\left[\bm{a}_1,\bm{a}_2,\cdots,\bm{a}_{N}\right]$, the matrix $\bm{\A}$ in \eqref{def:calA} can not be shown to satisfy the RIP condition. Instead, we consider the mutual incoherence property (MIP) \cite{donoho2001uncertainty} of $\bm{A}$. We theoretically show that the convergence of IHT can be guaranteed if the mutual coherence of the codebook matrix $\bm{A}$ is small. Moreover, we will provide examples of the codebook matrix $\bm{A}$ that satisfy the MIP condition.\par
Let
\begin{equation}\label{def:mutualcoherence}
       \mu=\max_{\substack{k\neq \ell,\\ k,\ell \in [N] }}\frac{\lv\l \bm{a}_k,\bm{a}_l\r\rv}{\lV\bm{a}_k\rV_2 \lV\bm{a}_l\rV_2},
\end{equation}
be the mutual coherence of $\{\bm{a}_k\}_{k=1}^N$ (or the matrix $\bm{A}$).
\begin{theorem}\label{theorem:IHTconv}
       Assume that $\lV\bm{a}_k\rV_2=\sqrt{D}$, $\forall k\in [N]$. Let $\mu$ be given in \eqref{def:mutualcoherence},  $\eta$ be given in \eqref{problem:l0recovery}, and the sequence $\{\bm{\gamma}_{t}\}_{t\ge 0}$ be defined by IHT in \eqref{iter:IHT}. For some constant step size $\alpha$, if $\{\bm{a}_k\}_{k=1}^N$ satisfy the MIP condition
       $\mu\le\frac{1}{3^{1/4}\sqrt{3K-1}}$,
       then it holds
       \begin{equation*}
              \lV\bm{\gamma}_{t+1}-\bm{\gamma}^\natural\rV_2\le \upsilon\lV\bm{\gamma}_{t}-\bm{\gamma}^\natural\rV_2+\zeta\lV\eta\rV_2,~\forall t\ge 0,
       \end{equation*}
       where $\upsilon\in (0,1)$, $\zeta>0$ are constants.
       %Proof: See Appendix~\ref{proof:IHTconv}.
\end{theorem}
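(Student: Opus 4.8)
The plan is to reduce the MIP hypothesis on $\bm{A}$ to a restricted isometry property (RIP) for the lifted matrix $\bm{\A}$, and then run a sharpened IHT error recursion. The starting observation is that the Gram matrix of $\bm{\A}$ has a clean form: since $\bm{\A}(:,k)=\mathrm{vec}(\bm{a}_k\bm{a}_k^*)$, one computes $(\bm{\A}^*\bm{\A})_{k\ell}=\mathrm{tr}(\bm{a}_k\bm{a}_k^*\bm{a}_\ell\bm{a}_\ell^*)=\lv\l\bm{a}_k,\bm{a}_\ell\r\rv^2$. With $\lV\bm{a}_k\rV_2^2=D$ this gives diagonal entries $D^2$ and off-diagonal entries $\lv\l\bm{a}_k,\bm{a}_\ell\r\rv^2\le\mu^2D^2$. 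Hence the rescaled matrix $\bm{\Phi}:=D^{-1}\bm{\A}$ has unit-norm columns and coherence exactly $\mu^2$, and for any index set $T$ with $|T|=s$ the submatrix $\bm{\Phi}_T^*\bm{\Phi}_T$ is Hermitian with unit diagonal and off-diagonals bounded by $\mu^2$. Gershgorin's theorem then places its eigenvalues in $[1-(s-1)\mu^2,\,1+(s-1)\mu^2]$, so $\bm{\Phi}$ satisfies RIP with $\delta_s\le (s-1)\mu^2$. I would fix the constant step size to $\alpha=1$ in \eqref{iter:IHT}, which is exactly the normalization used in Algorithm~\ref{alg:oneIHT}; then $\tfrac{\alpha}{D^2}\bm{\A}^*\bm{\A}=\bm{\Phi}^*\bm{\Phi}$ acts as the identity up to the RIP error on any $3K$-sparse support.

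Next I would set up the IHT error recursion in the normalized variables. Writing $\bm{r}_t:=\bm{\gamma}_t-\bm{\gamma}^\natural$ and $\bm{b}_t:=\bm{\gamma}_t+\bm{\Phi}^*(\tilde{\bm{u}}-\bm{\Phi}\bm{\gamma}_t)$ with $\tilde{\bm{u}}=\bm{\Phi}\bm{\gamma}^\natural+\tilde{\bm{\delta}}$, the residual of the gradient step is $\bm{b}_t-\bm{\gamma}^\natural=(\bm{I}-\bm{\Phi}^*\bm{\Phi})\bm{r}_t+\bm{\Phi}^*\tilde{\bm{\delta}}=:\bm{w}_t$. Let $S^\natural=\mathrm{supp}(\bm{\gamma}^\natural)$, $S_{t+1}=\mathrm{supp}(\bm{\gamma}_{t+1})$ (with $\hat K=K$, so $|S_{t+1}|=|S^\natural|=K$), and $T=S^\natural\cup S_{t+1}$. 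Splitting $T$ into $A=S_{t+1}\setminus S^\natural$, $B=S_{t+1}\cap S^\natural$, $C=S^\natural\setminus S_{t+1}$ and using that $\bm{\gamma}_{t+1}=\mathcal{H}_{K}(\bm{b}_t)$ agrees with $\bm{b}_t$ on $S_{t+1}$ and vanishes off it, I would expand $\lV\bm{\gamma}_{t+1}-\bm{\gamma}^\natural\rV_2^2=\lV(\bm{w}_t)_A\rV_2^2+\lV(\bm{w}_t)_B\rV_2^2+\lV(\bm{\gamma}^\natural)_C\rV_2^2$. The thresholding optimality gives $\lV(\bm{b}_t)_C\rV_2\le\lV(\bm{b}_t)_A\rV_2$ (equal cardinalities, $A$ selected and $C$ discarded), whence $\lV(\bm{\gamma}^\natural)_C\rV_2\le\lV(\bm{w}_t)_A\rV_2+\lV(\bm{w}_t)_C\rV_2$; combining this with $2\lV(\bm{w}_t)_A\rV_2\lV(\bm{w}_t)_C\rV_2\le\lV(\bm{w}_t)_A\rV_2^2+\lV(\bm{w}_t)_C\rV_2^2$ collapses the three terms into $\lV\bm{\gamma}_{t+1}-\bm{\gamma}^\natural\rV_2\le\sqrt{3}\,\lV(\bm{w}_t)_T\rV_2$. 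Finally I would bound $\lV(\bm{w}_t)_T\rV_2\le\delta_{3K}\lV\bm{r}_t\rV_2+\lV(\bm{\Phi}^*\tilde{\bm{\delta}})_T\rV_2$ by applying RIP on the enlarged support $S^\natural\cup S_t\cup S_{t+1}$ of size at most $3K$.

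Putting the pieces together yields $\lV\bm{\gamma}_{t+1}-\bm{\gamma}^\natural\rV_2\le \sqrt{3}\,\delta_{3K}\lV\bm{\gamma}_t-\bm{\gamma}^\natural\rV_2+\sqrt{3}\,\lV(\bm{\Phi}^*\tilde{\bm{\delta}})_T\rV_2$, so I would take $\upsilon:=\sqrt{3}\,\delta_{3K}\le\sqrt{3}\,(3K-1)\mu^2$. The MIP hypothesis $\mu\le (3^{1/4}\sqrt{3K-1})^{-1}$ is exactly $(3K-1)\mu^2\le 3^{-1/2}$, giving $\upsilon\le 1$ (strict whenever the hypothesis is strict or the Gershgorin bound is not saturated), which is the claimed contraction factor in $(0,1)$. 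For the perturbation term, $\lV(\bm{\Phi}^*\tilde{\bm{\delta}})_T\rV_2\le\sqrt{1+\delta_{3K}}\,\lV\tilde{\bm{\delta}}\rV_2$ and $\lV\tilde{\bm{\delta}}\rV_2=\lV\bm{\delta}\rV_2/D=\lV\bm{\Delta}\rV_F/D$, which is controlled by $\eta$ through \eqref{bound:Pdelta} (Lemma~\ref{bound:Delta}); absorbing the constants defines $\zeta>0$ and produces the stated $\zeta\lV\eta\rV_2$ term.

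The crux, and the step I expect to be most delicate, is obtaining the constant $\sqrt{3}$ rather than the naive $2$ that the plain triangle inequality $\lV\bm{\gamma}_{t+1}-\bm{b}_t\rV_2\le\lV\bm{\gamma}^\natural-\bm{b}_t\rV_2$ produces: a factor $2$ would force $\delta_{3K}<1/2$, which the coherence budget $(3K-1)\mu^2\le 3^{-1/2}\approx0.577$ cannot supply, so the theorem genuinely needs the finer $A/B/C$ support decomposition together with the cardinality-matched thresholding inequality. Secondary bookkeeping points are the assumption $\hat K=K$ (so that the three supports union to at most $3K$ and $|A|=|C|$), the Hermitian/complex setting (harmless, since the off-diagonal Gram entries $\lv\l\bm{a}_k,\bm{a}_\ell\r\rv^2$ are real and Gershgorin applies to Hermitian matrices), and verifying that the rescaling $\bm{\Phi}=D^{-1}\bm{\A}$ with $\alpha=1$ matches the $\tfrac{\alpha}{D^2}$ step actually used in \eqref{iter:IHT}.
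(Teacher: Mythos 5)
Your proposal is correct and follows essentially the same route as the paper: both reduce the MIP hypothesis on $\bm{A}$ to the statement that the rescaled lifted matrix $\bm{\A}/D$ has unit-norm columns and coherence $\mu^2$ (the paper's Lemma~\ref{lemma:MIPinA}), convert this to the restricted isometry bound $\delta_{3K}\le(3K-1)\mu^2\le 1/\sqrt{3}$ via the Gershgorin-type coherence-to-RIP estimate, and then conclude with the IHT contraction of factor $\sqrt{3}\delta_{3K}$. The only difference is that the paper invokes this last step as a black box (Lemma~\ref{lemma:thresholdingalgconv}, i.e.\ Foucart--Rauhut's Theorem 6.18), whereas you re-derive it through the $A/B/C$ support decomposition and the cardinality-matched thresholding inequality, which is precisely the proof of that cited theorem.
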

\begin{proof}
       See Appendix~\ref{proof:IHTconv}.
\end{proof}
Various choices of $\{\bm{a}_k\}_{k=1}^{N}$ or $\bm{A}$ are available to ensure small mutual coherence (see {\cite[Chapter 5]{foucart2013}}). Examples go as follows.
\begin{example}\label{example:muofA}
       Let $\{\bm{a}_k\}_{k=1}^{N}$ be i.i.d. random variables drawn uniformly from the sphere of radius $\sqrt{D}$. Then, as long as $8\log (DN)<D<\infty$, we have
       \begin{equation}\label{eq:complexmipforA}
              \mathbb{P}\left[\mu\le \sqrt{\frac{32\log(DN)}{D}}\right]\ge 1-\frac{c_0}{D\sqrt{D}},
       \end{equation}
       where $c_0$ is an universal constant.
\end{example}
%\begin{proof}
%       \cref{eq:complexmipforA} is a direct consequence of \eqref{eq1:lemmaconAK} in Lemma~\ref{lemma:conAk} (see Appendix~\ref{sec:lemmas}).   
%\end{proof}

\begin{example}\label{example:muofrealA}
       Let $\bm{A}\in \mathbb{R}^{D\times N}$ be Bernoulli random matrix taking values $\pm 1$ with equal probability. Then,  we have
       \begin{equation}\label{eq:complexmipforrealA}
              \mathbb{P}\left[\mu\le 2\sqrt{\frac{\log(N/\varepsilon)}{D}}\right]\ge 1-\varepsilon^2.
       \end{equation}
\end{example}
%\begin{proof}
%	See~\cite[Exercise 9.1]{foucart2013}.   
%\end{proof}

\noindent Example~\ref{example:muofA} is a direct consequence of \eqref{eq1:lemmaconAK} in Lemma~\ref{lemma:conAk} (see Appendix~\ref{sec:lemmas}), Example~\ref{example:muofrealA} is from~\cite[Exercise 9.1]{foucart2013}. Together with Theorem~\ref{theorem:IHTconv}, \cref{eq:complexmipforA,eq:complexmipforrealA} imply that it theoretically requires  $D\ge \widetilde{\mathcal{O}}(K)$ to guarantee successful recovery of IHT algorithm.

%Also, in the real case, when $\bm{A}$ is a Bernoulli random matrix taking values $\pm 1$ with equal probability, there is $\mu\le 2\sqrt{\frac{\log(N/\varepsilon)}{D}}$ with probability at least $1-\varepsilon^2$ (see~\cite[Exercise 9.1]{foucart2013}).

\subsection{Recovery Guarantees of Algorithm~\ref{alg:two_stage}}\label{section:theoryforalg2}
The second stage of Algorithm~\ref{alg:two_stage} (i.e., ML or NNLS) has been analyzed in the previous work~\cite{fengler2021non,haghighatshoar2018improved}, thus we focus on the first stage of the algorithm. By \eqref{eq:equiIHT}, we know that selecting $\hat{K}$ largest components in $\{\Y_k\}_{k=1}^N$ is equivalent to the one-step IHT in~\eqref{iter:IHT}.  Consider that $D$, $M$ are reasonably large and $\hat{K}=K$. The following theorem shows that the estimation of one-step IHT satisfies $\mathrm{supp}\left(\bm{\gamma}_1\right)=S^{\natural}$ with overwhelming probability.

\begin{theorem}\label{theorem:Yrecovery}
       Assume that $\{\bm{a}_k\}_{k=1}^{N}$ are i.i.d. random variables drawn uniformly from the sphere of radius $\sqrt{D}$, $D<\infty$. Let $\{\Y_k\}_{k=1}^N$ be defined in \eqref{def:Yk}, $\hat{S}$ be the indices of $K$ largest components in $\{\Y_k\}_{k=1}^N$. Then we have
       \begin{equation*}
              \mathbb{P}\left(\hat{S}=S^{\natural}\right) \ge 1-\varepsilon-\frac{c_0}{D\sqrt{D}},
       \end{equation*}
       provided that
       \begin{equation}\label{ineq:conditionforM}
              M \ge c_4\left(c_\gamma K+\frac{\sigma^2}{\gamma^{\natural}_{\min}}\right)^2\log \left(\frac{(eD)^2}{\varepsilon}\right)\quad \text{and}\quad K\le c_5\frac{D}{c_\gamma\log(ND)},
       \end{equation}
       where $c_4,c_5$ are universal positive constants and $c_\gamma=\frac{\gamma^{\natural}_{\max}}{\gamma^\natural_{\min}}$.
       %Proof: See Appendix~\ref{proof:theoremYrec}.
\end{theorem}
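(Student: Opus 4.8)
The plan is to reduce the event $\{\hat S = S^\natural\}$ to a single deterministic separation inequality
\[
\min_{k\in S^\natural}\Y_k \;>\; \max_{k\notin S^\natural}\Y_k
\]
that I will show holds on a good event of probability at least $1-\varepsilon-c_0/(D\sqrt D)$; since $\hat S$ collects the $K$ largest $\Y_k$ and $\lvert S^\natural\rvert=K$, this inequality immediately forces $\hat S=S^\natural$. First I would observe that $\Y_k$ is exactly the quadratic form $\Y_k=\bm{a}_k^*\bm{\widehat{\Sigma}_y}\bm{a}_k$, and split it through $\bm{\widehat{\Sigma}_y}=\bm{\Sigma_y}+\bm{\Delta}$ with $\bm{\Sigma_y}$ as in~\eqref{Y:truesigma}. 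Using $\lV\bm{a}_k\rV_2^2=D$, the population part is
\[
\bm{a}_k^*\bm{\Sigma_y}\bm{a}_k=\sum_{\ell\in S^\natural}\gamma^\natural_\ell\,\lv\l\bm{a}_\ell,\bm{a}_k\r\rv^2+\sigma^2 D,
\]
so for $k\in S^\natural$ the diagonal term $\ell=k$ contributes $\gamma^\natural_k D^2$, while for $k\notin S^\natural$ only cross terms survive. This is the pointwise counterpart of the expectation computation in Proposition~\ref{prop:expZk}, which I need here rather than the mean because the statement concerns order statistics.

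Next I would control the cross terms uniformly via the mutual coherence. On the event of Example~\ref{example:muofA}, every off-diagonal inner product obeys $\lv\l\bm{a}_\ell,\bm{a}_k\r\rv\le\mu D$ with $\mu\le\sqrt{32\log(DN)/D}$, so $\lv\l\bm{a}_\ell,\bm{a}_k\r\rv^2\le\mu^2 D^2$ and summing over the $K$ indices of $S^\natural$ gives $\sum_{\ell\in S^\natural}\gamma^\natural_\ell\lv\l\bm{a}_\ell,\bm{a}_k\r\rv^2\le K\gamma^\natural_{\max}\mu^2 D^2$. Hence the population forms satisfy
\[
\min_{k\in S^\natural}\bm{a}_k^*\bm{\Sigma_y}\bm{a}_k\ge\gamma^\natural_{\min}D^2+\sigma^2 D,\qquad \max_{k\notin S^\natural}\bm{a}_k^*\bm{\Sigma_y}\bm{a}_k\le K\gamma^\natural_{\max}\mu^2 D^2+\sigma^2 D,
\]
giving a population gap of at least $D^2\gamma^\natural_{\min}\!\left(1-Kc_\gamma\mu^2\right)$. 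Inserting the coherence bound and the hypothesis $K\le c_5 D/(c_\gamma\log(ND))$ makes $Kc_\gamma\mu^2\le 32c_5\le\tfrac12$ for $c_5$ small enough, so the population gap is at least $\tfrac12 D^2\gamma^\natural_{\min}$.

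It then remains to show the error terms $\bm{a}_k^*\bm{\Delta}\bm{a}_k$ cannot close this gap. I would bound them uniformly by $\lv\bm{a}_k^*\bm{\Delta}\bm{a}_k\rv\le\lV\bm{a}_k\rV_2^2\lV\bm{\Delta}\rV_F=D\lV\bm{\Delta}\rV_F$, and on the event of~\eqref{bound:Pdelta}, with $\lV\bm{\gamma}^\natural\rV_1\le K\gamma^\natural_{\max}$, this is at most $c_\varepsilon D^2(K\gamma^\natural_{\max}+\sigma^2)/\sqrt M$. Since perturbing both the smallest $S^\natural$-value and the largest non-support value shrinks the gap by at most $2\max_k\lv\bm{a}_k^*\bm{\Delta}\bm{a}_k\rv$, the ordering survives provided $2\max_k\lv\bm{a}_k^*\bm{\Delta}\bm{a}_k\rv<\tfrac12 D^2\gamma^\natural_{\min}$, i.e. $\sqrt M>4c_\varepsilon(Kc_\gamma+\sigma^2/\gamma^\natural_{\min})$. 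Squaring and recalling $c_\varepsilon^2=c\log((eD)^2/\varepsilon)$ reproduces exactly the hypothesis on $M$. A union bound over the two events then yields probability $1-\varepsilon-c_0/(D\sqrt D)$, and on their intersection the separation holds deterministically.

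The main subtlety is the statistical dependence between $\bm{a}_k$ and $\bm{\Delta}$, since $\bm{\Delta}$ is built from $\bm{Y}$, which itself involves every column of $\bm{A}$ including $\bm{a}_k$. I would sidestep it entirely by using only the $\bm{a}_k$-independent bound $\lv\bm{a}_k^*\bm{\Delta}\bm{a}_k\rv\le D\lV\bm{\Delta}\rV_F$, so once $\lV\bm{\Delta}\rV_F$ is controlled no conditional independence argument is needed; the cost is the factor $D$, which is precisely what the target $M$-scaling absorbs. The remaining care is bookkeeping: intersecting the MIP event and the $\bm{\Delta}$ event under a single union bound so the stated failure probability is not inflated, and checking that the constants are mutually compatible ($c_5$ small enough to force $Kc_\gamma\mu^2\le\tfrac12$, and $c_4$ large enough to meet the $M$ threshold above).
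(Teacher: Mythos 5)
Your proposal is correct and follows essentially the same route as the paper's proof: decompose $\Y_k$ into the population term $\bm{a}_k^*\bm{\Sigma_y}\bm{a}_k$ plus the error $\bm{a}_k^*\bm{\Delta}\bm{a}_k$, bound the cross terms uniformly via the spherical coherence bound (your Example~\ref{example:muofA} is exactly the paper's Lemma~\ref{lemma:conAk}), control the error by Cauchy--Schwarz together with Lemma~\ref{bound:Delta}, and union-bound the two events to get the stated separation and failure probability. A minor point in your favor: you retain the $\sigma^2 D$ term (common to all $k$, hence harmless), which the paper's reformulation \eqref{eq:reformYk} silently drops.
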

\begin{proof}
       See Appendix~\ref{proof:theoremYrec}.
\end{proof}
From Theorem~\ref{theorem:Yrecovery}, we see that there is $S^{\natural}\subset\mathrm{supp}\left(\bm{\gamma}_1\right)$ for $\hat{K}\ge K$. Nevertheless, when taking the largest $\hat{K}\ge K$ components in $\{\Y_k\}_{k=1}^N$, it is possible to detect $K=\widetilde{\mathcal{O}}(D^2)$ active users. Specially, when $\hat{K}\to N$, the method is reduced to the standard ML estimator, which is able to detect $K=\widetilde{\mathcal{O}}(D^2)$ active users.

Theorem~\ref{theorem:rhoYrecovery} below shows that the estimated set $S^0$ in Algorithm~\ref{alg:two_stage} will contain all the indices in $S^{\natural}$ for the choice $\rho=1$. Thus the first stage of Algorithm~\ref{alg:two_stage} is able to reduce the dimension of the coordinate-wise optimization.

\begin{theorem}\label{theorem:rhoYrecovery}
       Under the same hypothesis in Theorem~\ref{theorem:Yrecovery}. Let $\{\Y_k\}_{k=1}^N$ be defined in \eqref{def:Yk}, $S^0$ be the set given by Algorithm~\ref{alg:two_stage} with $\rho=1$. Then we have
       \begin{equation*}
              \mathbb{P}\left(S^{\natural}\subset S^0\right) \ge 1-\varepsilon-\frac{c_0}{D\sqrt{D}},
       \end{equation*}
       provided that $M$ satisfies inequality in \eqref{ineq:conditionforM} and $K \le \min\left\{\frac{c_6 D}{c_\gamma\log(DN)},\frac{c_7 N}{c_\gamma}\right\}$. Where $c_6,c_7$ are universal positive constants.
       % Proof: See Appendix~\ref{proof:rhotheoremYrec}
\end{theorem}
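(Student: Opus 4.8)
The plan is to mirror the argument behind \cref{theorem:Yrecovery}, the only genuinely new ingredient being a lower bound on the gap between the active coordinates $\Y_k$, $k\in S^\natural$, and the data-dependent threshold $\overline{\Y}$. First I would record the relevant expectations: by \cref{prop:expZk}, for $k\in S^\natural$ we have $E[\Y_k]=\psi(D,\bm{\gamma}^\natural,\sigma)+D^2\gamma^\natural_k$, while averaging the two cases of \eqref{def:EYk} over all $N$ coordinates gives
\begin{equation*}
E[\overline{\Y}]=\frac{1}{N}\sum_{k=1}^N E[\Y_k]=\psi(D,\bm{\gamma}^\natural,\sigma)+\frac{D^2}{N}\lV\bm{\gamma}^\natural\rV_1.
\end{equation*}
Subtracting, the expected separation of every active coordinate from the threshold is
\begin{equation*}
E[\Y_k]-E[\overline{\Y}]=D^2\Big(\gamma^\natural_k-\tfrac{1}{N}\lV\bm{\gamma}^\natural\rV_1\Big)\ge D^2\gamma^\natural_{\min}\Big(1-\tfrac{c_\gamma K}{N}\Big),
\end{equation*}
using $\gamma^\natural_k\ge\gamma^\natural_{\min}$ and $\lV\bm{\gamma}^\natural\rV_1\le K\gamma^\natural_{\max}=c_\gamma K\gamma^\natural_{\min}$. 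Under the hypothesis $K\le c_7 N/c_\gamma$ with $c_7<1$, this separation is bounded below by $D^2\gamma^\natural_{\min}(1-c_7)>0$, which is precisely why thresholding at $\rho=1$ retains all of $S^\natural$.

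The second step is concentration, reusing the machinery developed for \cref{theorem:Yrecovery}. Writing $\Y_k=\bm{a}_k^*\bm{\widehat{\Sigma}_y}\bm{a}_k$, I would split the deviation as
\begin{equation*}
\Y_k-E[\Y_k]=\underbrace{\bm{a}_k^*\bm{\Delta}\bm{a}_k}_{\text{noise}}+\underbrace{\big(\bm{a}_k^*\bm{\Sigma_y}\bm{a}_k-E[\Y_k]\big)}_{\text{coherence}}.
\end{equation*}
The noise part obeys $|\bm{a}_k^*\bm{\Delta}\bm{a}_k|\le D\lV\bm{\Delta}\rV_F$, so by \eqref{bound:Pdelta} it is at most $c_\varepsilon D^2(\lV\bm{\gamma}^\natural\rV_1+\sigma^2)/\sqrt{M}$ with probability $\ge 1-\varepsilon$; the lower bound on $M$ in \eqref{ineq:conditionforM} is exactly what forces this below a fixed fraction of $D^2\gamma^\natural_{\min}$. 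The coherence part reduces to $\sum_{\ell\in S^\natural,\ell\ne k}\gamma^\natural_\ell\big(|\bm{a}_k^*\bm{a}_\ell|^2-E|\bm{a}_k^*\bm{a}_\ell|^2\big)$, which I would control through the mutual-coherence estimate of \cref{lemma:conAk}/\cref{example:muofA}: on the event $\mu\le\sqrt{32\log(DN)/D}$ (probability $\ge 1-c_0/(D\sqrt{D})$) each $|\bm{a}_k^*\bm{a}_\ell|^2\le 32D\log(DN)$, so the sum is at most $32\,c_\gamma K\gamma^\natural_{\min}D\log(DN)$, a small fraction of $D^2\gamma^\natural_{\min}$ precisely when $K\le c_6 D/(c_\gamma\log(DN))$.

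For the threshold itself I would bound its fluctuation at the same order, either via $|\overline{\Y}-E[\overline{\Y}]|\le\max_k|\Y_k-E[\Y_k]|$ after a union bound, or more cleanly through the trace identity $\overline{\Y}=\frac{1}{N}\mathrm{tr}(\bm{\widehat{\Sigma}_y}\bm{A}\bm{A}^*)$ together with the concentration of $\bm{A}\bm{A}^*$ about $N\bm{I}_D$. Finally I would assemble the pieces: on the intersection of the $\bm{\Delta}$-event and the coherence event, the total fluctuation of each $\Y_k$ and of $\overline{\Y}$ is, for suitable universal constants, smaller than half of $D^2\gamma^\natural_{\min}(1-c_7)$, so $\Y_k>\overline{\Y}$ holds simultaneously for every $k\in S^\natural$, i.e. $S^\natural\subset S^0$. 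A union bound over the two failure events then yields the claimed probability $1-\varepsilon-c_0/(D\sqrt{D})$.

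The main obstacle is the first step rather than the concentration, which is essentially inherited from \cref{theorem:Yrecovery}. One must verify that the threshold $\overline{\Y}$, being an average that already absorbs a $\frac{D^2}{N}\lV\bm{\gamma}^\natural\rV_1$ contribution from the active users' self-energy, does not climb above the per-coordinate signal $D^2\gamma^\natural_k$; quantifying this is exactly what produces the extra requirement $K\le c_7N/c_\gamma$ beyond the conditions of \cref{theorem:Yrecovery}. Care is also needed because the $\{\Y_k\}$ entering $\overline{\Y}$ are correlated through shared columns and the common $\bm{\widehat{\Sigma}_y}$, so a naive independent-averaging bound on $\overline{\Y}$ is unavailable and one of the trace or uniform bounds above must be used instead.
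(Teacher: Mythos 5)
Your proposal is correct and follows essentially the same route as the paper's proof: both rest on the decomposition of $\Y_k$ into the diagonal signal term $D^2\gamma^\natural_k$, the cross-coherence terms bounded via Lemma~\ref{lemma:conAk} (probability $1-c_0/(D\sqrt{D})$), and the noise term $\A_k^*\bm{\delta}$ bounded via Lemma~\ref{bound:Delta} (probability $1-\varepsilon$), and both identify that the averaged threshold $\overline{\Y}$ absorbs a $\frac{D^2}{N}\lV\bm{\gamma}^\natural\rV_1$ contribution, which is exactly what forces the extra condition $K\le c_7 N/c_\gamma$ beyond Theorem~\ref{theorem:Yrecovery}. The only difference is presentational: you center everything at expectations via Proposition~\ref{prop:expZk} and phrase the bounds as concentration, whereas the paper derives one-sided high-probability bounds on $\Y_k$ and $\overline{\Y}$ directly and never invokes the expectation computation, but the resulting inequalities, events, and conditions on $M$ and $K$ are identical.
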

\begin{proof}
       See Appendix~\ref{proof:rhotheoremYrec}.
\end{proof}

Recall that Theorem~\ref{theorem:Yrecovery} guarantees that  $S^{\natural}$ can be successfully recovered via one-step IHT when $K=\widetilde{\mathcal{O}}(D)$. To ensure $K=\widetilde{\mathcal{O}}(D^2)$ active users can be detected, we need a larger size of $S^0$ by setting the parameter $\rho$. When $\rho$ is set to be around $1$, numerical experiments suggest that the accelerated ML has similar performance compared to ML even for $K>D$. In the next section, simulation results will be provided to validate the performance of the proposed Algorithm~\ref{alg:two_stage}.

\section{Simulation and Discussions}\label{section:experiments}
In this section, we present numerical results to compare the proposed Algorithm~\ref{alg:two_stage} with other existing algorithms. For convenience, we will refer Algorithm \ref{alg:two_stage} as ``AccML''~(accelerated maximum-likelihood) in the remaining of this paper.

%\subsection{Setting and Performance Evaluation}\label{subsec:PUPE}
Column-normalized sub-sampled Fourier matrices~(for a fair comparison with AMP~\cite{fengler2019sparcs,liu2018massive}) are used for the codebook $\bm{A}$ in~\eqref{problem:matrix}, i.e., $\lV \bm{a}_k\rV_2^2=D$. The per-user SNR~(for active users) is defined as
\begin{equation}\label{eq:SNR}
       \mathrm{SNR}=\frac{\lV \bm{a}_k\rV_2^2g_k^{\natural}\mathbb{E}[\lV\bm{h}_k\rV_2^2]}{\mathbb{E}[\lV\bm{Z}\rV_F^2]} = \frac{Dg_k^{\natural}M}{DM\sigma^2}=\frac{g_k^{\natural}}{\sigma^2},
\end{equation}
where $\mathbb{E}[\lV\bm{h}_k\rV_2^2]=M$ since $\bm{h}_k\sim \mathcal{CN}(0,\bm{I}_M)$.
Then, the per-user $E_b/N_0$~(energy per bit to noise power spectral density ratio) in defined as
\begin{equation}\label{eq:EbN0}
       \frac{E_b}{N_0}=\frac{\mathrm{SNR}}{R}=\frac{LDg_k^{\natural}}{B\sigma^2},
\end{equation}
where $R=\frac{B}{D_{total}}=\frac{B}{L\cdot D}$.

Per-user probability of error~(PUPE) is used as the performance metric. By following~\cref{subsec:std-decoding}, the set of decoded messages is denoted as $\mathcal{T}(\bm{Y})\equiv \mathcal{T}({\mathcal{L}(\bm{Y}_l)|l\in [1:L]})=\left\{\hat{\bm{x}}_1,\hat{\bm{x}}_2,\hat{\bm{x}}_3,\cdots\right\}$ and $|\mathcal{T}(\bm{Y})|=K_r\leq T$. For reference, the set of true messages of $K$ users is denoted by $\mathcal{W}_a=\left\{\bm{x}_1,\bm{x}_2,\cdots,\bm{x}_{K}\right\}$ and $|\mathcal{W}_a|=K$.
A miss detection error is declared if $\bm{x}_k\notin \mathcal{T}(\bm{Y})$ and a false alarm error is declared if $\hat{\bm{x}}_r\in \mathcal{T}(\bm{Y})\setminus\mathcal{W}_a$. Then, PUPE can be calculated as
\begin{equation} \label{eq:PUPE}
       P_e = P_{md} + P_{fa},
\end{equation}
where
\begin{equation*}
       P_{md} = 1-\frac{\mathbb{E}[|\mathcal{T}(\bm{Y})\cap\mathcal{W}_a|]}{K},
       P_{fa} = \frac{\mathbb{E}[|\mathcal{T}(\bm{Y})\setminus\mathcal{W}_a|]}{K_r}
\end{equation*}
are per-user probability of miss detection and false alarm, respectively.

\subsection{SPARC Decoding}

To evaluate the effectiveness of AccML for SPARC decoding, we consider a single section case~($L=1$, $B=L\cdot J=J$) with no tree code involved. \cref{fig:MSE_Scaling_M} demonstrates the PUPE performance of AccML and \cref{fig:MSE_Scaling_J} demonstrates the related computational complexity in terms of processing time in second. One-Step IHT in Algorithm~\ref{alg:oneIHT}, ML and NNLS~\cite{fengler2021non} as well as AMP~\cite{liu2018massive} and OAMP~\cite{ma2017orthogonal} are included for comparison. Moreover, another covariance-based approach call Bandit Sampling ML~(BSML)~\cite[Algorithm 2]{dong2022faster} is added for comparison, where the initialization parameter is given by $B_{BSML}=N/2=2^{J}/2$ and $\epsilon_{BSML}=0.6$~\cite[Section VI]{dong2022faster}.\footnote{The initialization parameter $\bm{\alpha}$ and $\bm{\beta}$ for the Thompson Sampling ML~\cite[Algorithm 3]{dong2022faster} is not given clearly. Considering that its effect on computational complexity reduction is fundamentally similar to the BSML, we omit its simulation results here.} For simplicity, all LSFCs $g_k^{\natural}$ are assumed to be $1$ and $\hat{K}=K$ is known at the receiver. For AccML, $\rho=1$ and $\rho=1.05$ are included to show the effect of $\rho$.

\begin{figure}[tbh]
       %	\vspace{-0.5cm}
       \centering
       \captionsetup[subfigure]{justification=centering}
       \begin{subfigure}{0.45\textwidth}
              \includegraphics[width=\textwidth]{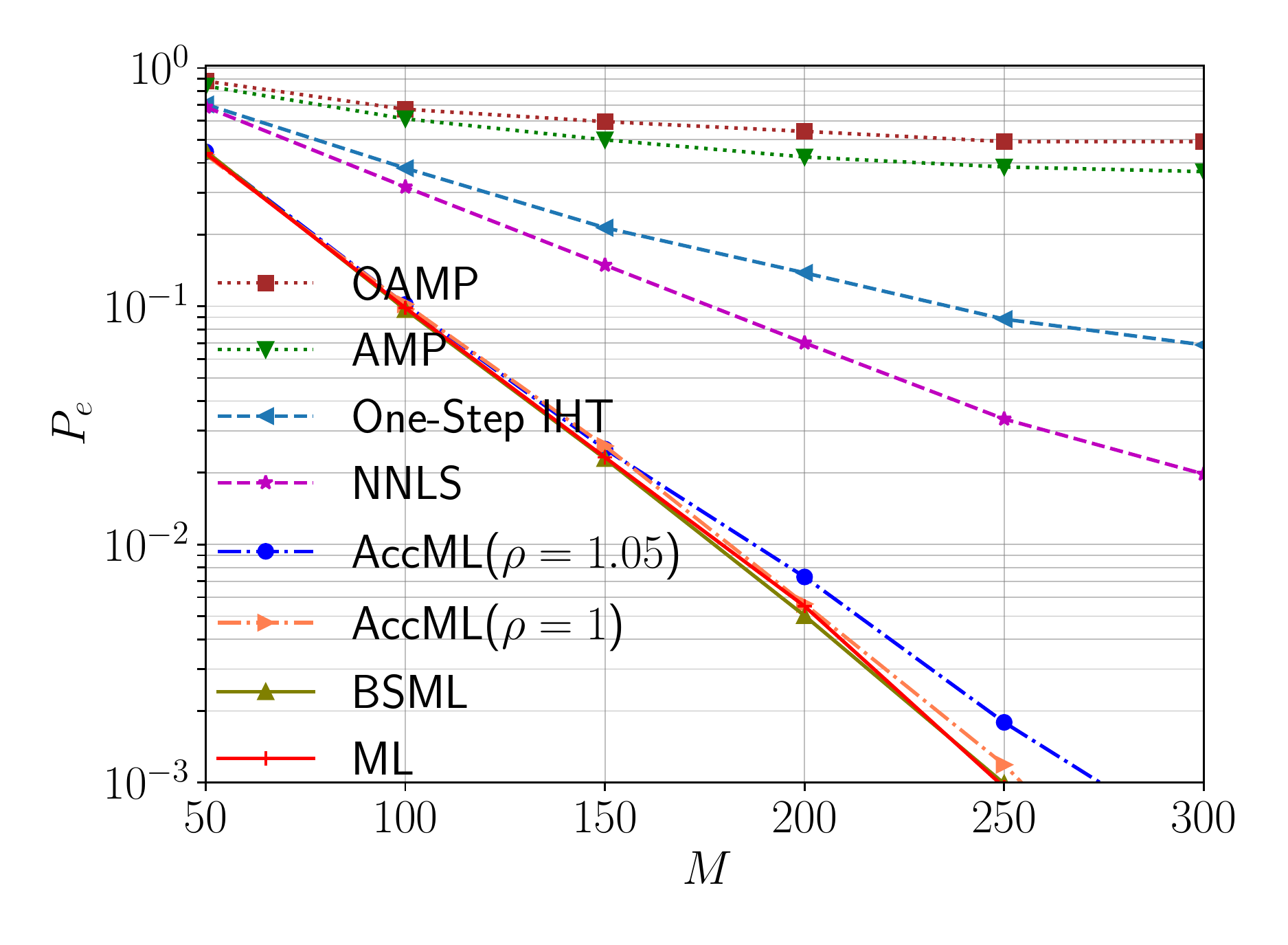}
              \vspace{-1.2cm}
              \caption{\label{fig:MSE_Scaling_M}}
       \end{subfigure}
       \hfill
       \begin{subfigure}{0.45\textwidth}
              \includegraphics[width=\textwidth]{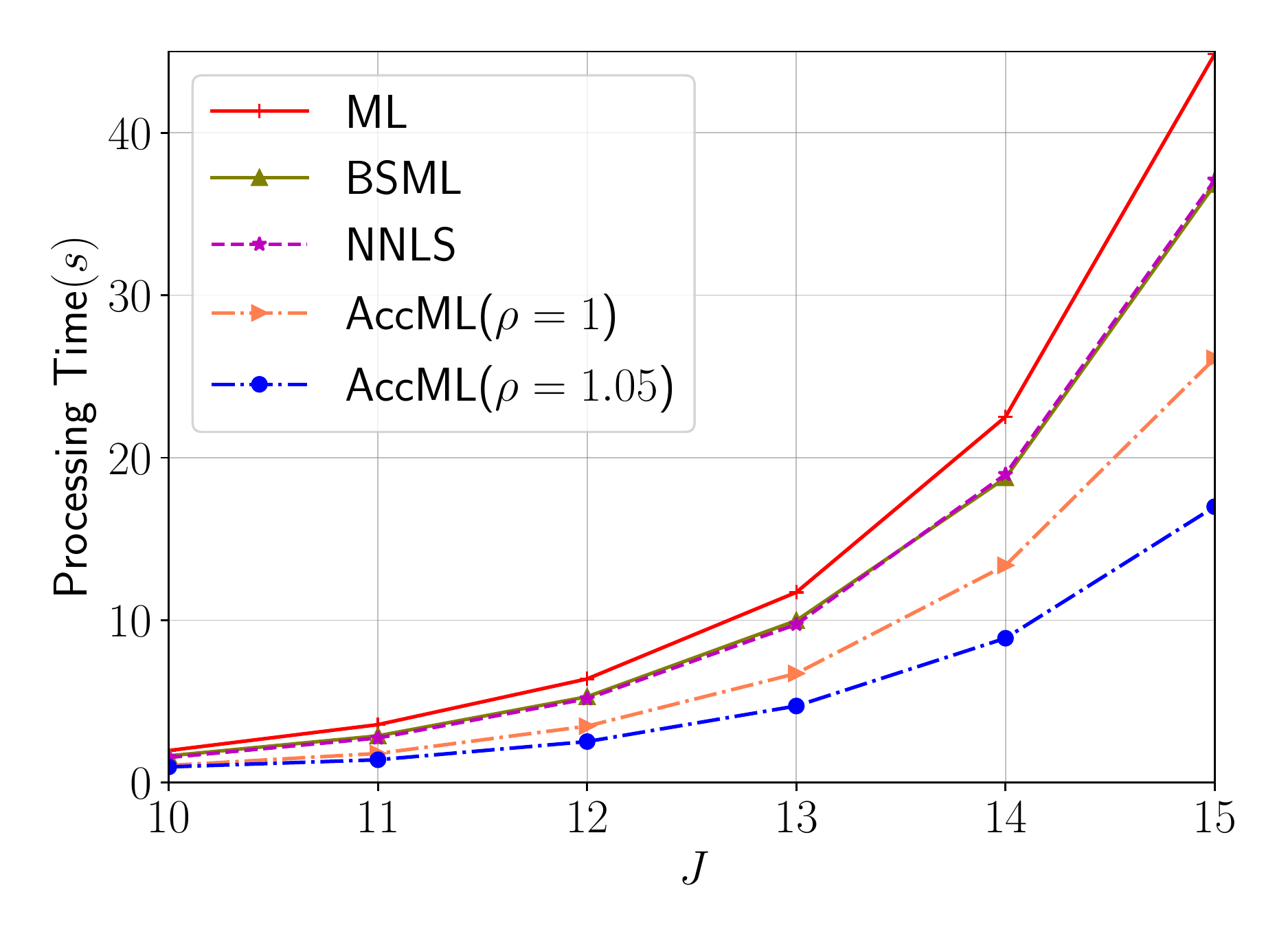}
              \vspace{-1.2cm}
              \caption{\label{fig:MSE_Scaling_J}}
       \end{subfigure}
       \captionsetup{subrefformat=parens}
       \vspace{-0.5cm}
       \caption{\subref{fig:MSE_Scaling_M} PUPE as a function of $M$ where $K=300, D=120, J=12, N=2^J= 4096$ at $\mathrm{SNR}=-10~\mathrm{dB},\frac{E_b}{N_0}=0~\mathrm{dB}$, \subref{fig:MSE_Scaling_J} Processing Time$(s)$ as a function of $J=\log_2{N}$ where $K=300, M=300, D=120$ at $\mathrm{SNR}=-10~\mathrm{dB},\frac{E_b}{N_0}=0~\mathrm{dB}$. AMP, OAMP and One-Step IHT are omitted due to their poor performance in~\cref{fig:MSE_Scaling_M}.}
\end{figure}

As shown in~\cref{fig:MSE_Scaling_M}, both AMP and OAMP give poor PUPE performance for $K>D$. We must emphasize that, according to~\cite{fengler2021non,haghighatshoar2018improved}, up to $K=\widetilde{\mathcal{O}}(D)$ active users can be successfully recovered by AMP and OAMP under the compressive sensing framework. Since the current setup has $K\gg D$ active users, it is beyond the recovery capability of AMP type algorithms. Actually, both AMP and OAMP suffer from instability issue and results in early termination~\cite{fengler2019sparcs}.

When ML is combined with the one-step thresholding, the resulting AccML can achieve the similar PUPE performance as ML and BSML in~\cref{fig:MSE_Scaling_M} but has much lower computational complexity as shown in~\cref{fig:MSE_Scaling_J}. The computational complexity of ML is $\mathcal{O}(N\cdot D^2)$, which increases linearly with $N=2^J$ and thus exponentially with $J$. AccML can reduce the dimension of the problem with the one-step thresholding-based algorithm and the reduction effect increases with $J$ as shown in~\cref{fig:MSE_Scaling_J}.

Comparing $\rho=1$ and $\rho=1.05$ for AccML, a larger $\rho$ means that fewer support indices are selected by the thresholding-based algorithm, which induces larger PUPE as is shown in~\cref{fig:MSE_Scaling_M}. Meanwhile, a larger $\rho$ means that higher dimensions are reduced for ML and thus results in more computational complexity reduction as shown in~\cref{fig:MSE_Scaling_J}.

\subsection{Unsourced Random Access}
The performance of AccML is further evaluated for URA. For a fair comparison, the setup in~\cite[Fig.~8(a)]{fengler2021non} is selected.
However, the thresholding parameter for the decision of support is not provided in details in~\cite{fengler2021non}. Equivalently, we select the indices of largest $T$ components of $\hat{\bm{\gamma}}$ as the support. For simplicity and fairness, we set $T=K$ in this example. Finally, for AccML initialization, we set $\rho=1$.

%\begin{figure}[tbh]
%	\includegraphics[width=8cm]{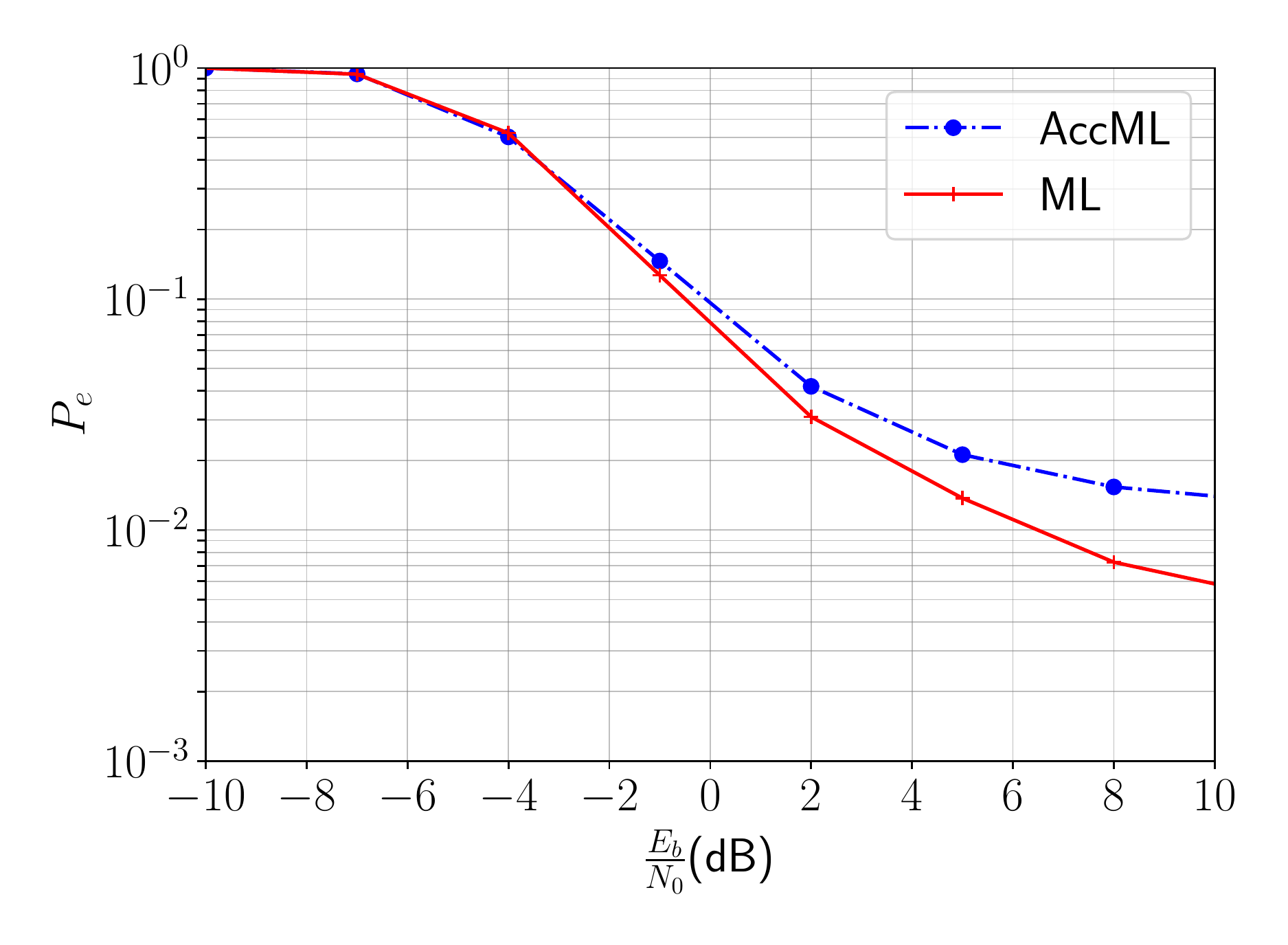}
%	\centering
%	\vspace{-0.5cm}
%	\caption{\label{fig:Pe_EbN0_Large}PUPE as a function of $E_b/N_0~(\mathrm{dB})$ where $K=300, M=300, D=100, D_{total}=3200, B=96, L=32, J=12$. The parity bits across $L=32$ sections are $0,9,\cdots,9,12,12,12$.}
%\end{figure}

\cref{fig:Pe_EbN0_Large} compares the PUPE performance of ML and AccML. AccML can achieve similar PUPE performance as ML for $E_b/N_0\le0~\mathrm{dB}$ but the gap becomes larger as $E_b/N_0$ increases. However, for URA in mMTC, the PUPE of practical interest is around $0.05$ and the active users are generally working at $E_b/N_0\le0~\mathrm{dB}$ due to power limitation. In short, AccML performs well for this setup.

Consider that the BS has $M=300$ antennas. This setup could support URA for $K=300$ active users with ML or AccML algorithm. However, the number of antennas available at BS is around $64$ to $128$ for most current massive MIMO systems. Therefore, the maximum active user can be supported is also around $100$. In the next subsection, we consider a more practical setup for URA.

\begin{figure}[tbh]
	\centering
	\captionsetup[subfigure]{justification=centering}
	\begin{subfigure}{0.45\textwidth}
		\includegraphics[width=\textwidth]{Pic/Pe_EbN0_Large}
		\vspace{-1.2cm}
		\caption{\label{fig:Pe_EbN0_Large}}
	\end{subfigure}
	\hfill
	\begin{subfigure}{0.45\textwidth}
		\includegraphics[width=\textwidth]{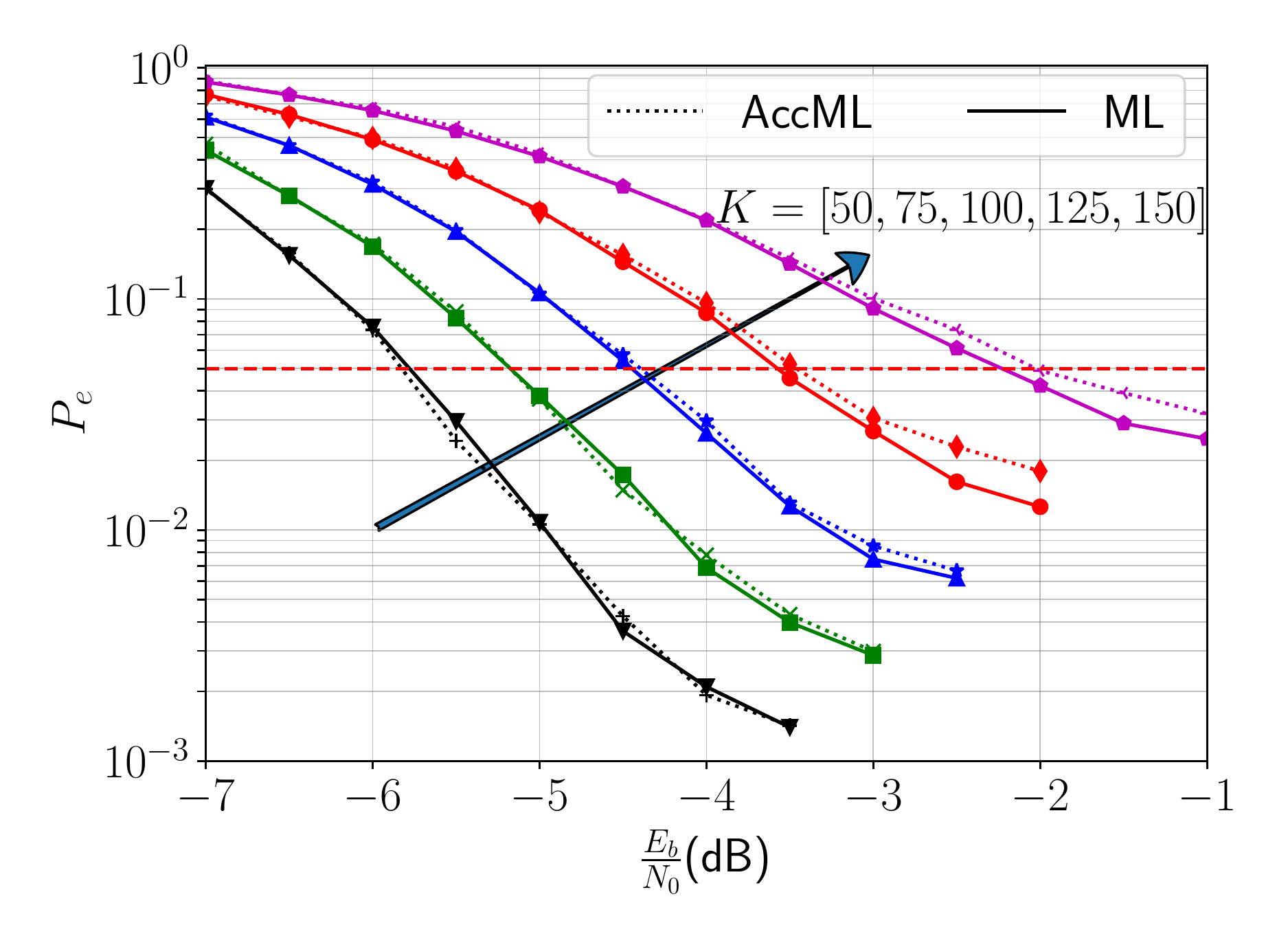}
		\vspace{-1.2cm}
		\caption{\label{fig:Pe_EbN0_AccML}}
	\end{subfigure}
	\captionsetup{subrefformat=parens}
	\vspace{-0.5cm}
	\caption{PUPE as a function of $E_b/N_0~(\mathrm{dB})$. \subref{fig:Pe_EbN0_Large} $K=300, M=300, D=100, D_{total}=3200, B=96, L=32, J=12$. The parity bits across $L=32$ sections are $0,9,\cdots,9,12,12,12$, \subref{fig:Pe_EbN0_AccML} $K=\left[50,75,100,125,150\right], M=64, D=120, D_{total}=1440, B=50, L=12, J=12$. The parity bits across $L=12$ sections are $0,7,8,\cdots,8,11,12$. The red dash line marks the practically interested PUPE of $0.05$. }
\end{figure}

\subsection{Practical MIMO-OFDM Scenarios}
%\begin{figure}[tbh]
%	\includegraphics[width=8cm]{Pic/Pe_EbN0_AccML}
%	\centering
%	\vspace{-0.5cm}
%	\caption{\label{fig:Pe_EbN0_AccML}PUPE as a function of $E_b/N_0\;\mathrm{(dB)}$ where $M=64, D=120, D_{total}=1440, B=50, L=12, J=12$. The parity bits across $L=12$ sections are $0,7,8,\cdots,8,11,12$. The red dash line marks the practically interested PUPE of $0.05$. }
%\end{figure}
We consider physical uplink shared channel~(PUSCH) in a MIMO orthogonal frequency division multiplexing~(OFDM) system. A total of $12$ resource blocks~(RB) are allocated for transmission and each RB contains $12$ sub-carriers with sub-carrier spacing of $15\;\mathrm{kHz}$~\cite[Clause 4]{TS38.211}. Each sub-carrier contains $14$ channel uses (or OFDM symbols), where the last $10$ OFDM symbols are used for PUSCH and the remaining $4$ OFDM symbols are reserved for other usages such as control signal~\cite[Clause 11]{TS38.213}. Each section occupies a single RB of $120$ channel uses. Channel is assumed to stay static within each RB and may change randomly across different RB.

In summary, a total of $D_{total}=1440$ channel uses are used for URA, which are divided into $L=12$ sections~(corresponds to $12$ RB) with $D=120$ channel uses each. Consider that $K=[50,75,100,125,150]$ active user are supported by BS with $M=64$ antennas. Each user transmit $B=50$ information bits across $L=12$ sections. Each section is augmented to size $J=12$ with the additional parity bits, which are allocated as $0,7,8,\cdots,8,11,12$. For the SPARC decoding, we set $T$ as $K+K_\Delta$ with $K_\Delta=50$ and $\rho=1.05$ for AccML initialization. For simplicity, the details of standard OFDM operations involving adding/stripping off cyclic prefix and the IFFT/FFT operation are omitted.

The solid and dot lines in~\cref{fig:Pe_EbN0_AccML} show the PUPE performance versus ${E_b}/{N_0}$ for ML and AccML, respectively. Overall, AccML achieves the similar performance as ML. When the number of the active users increases, the performance of AccML degrades slightly. Intuitively, the hard-thresholding of Step~\ref{step:threshold} in Algorithm~\ref{alg:two_stage} may miss some non-zero entries in $\bm{\gamma}^{\natural}$, which results in lager PUPE.

\begin{figure}[tbh]
       \centering
       \captionsetup[subfigure]{justification=centering}
       \begin{subfigure}{0.45\textwidth}
              \centering
              \includegraphics[width=\textwidth]{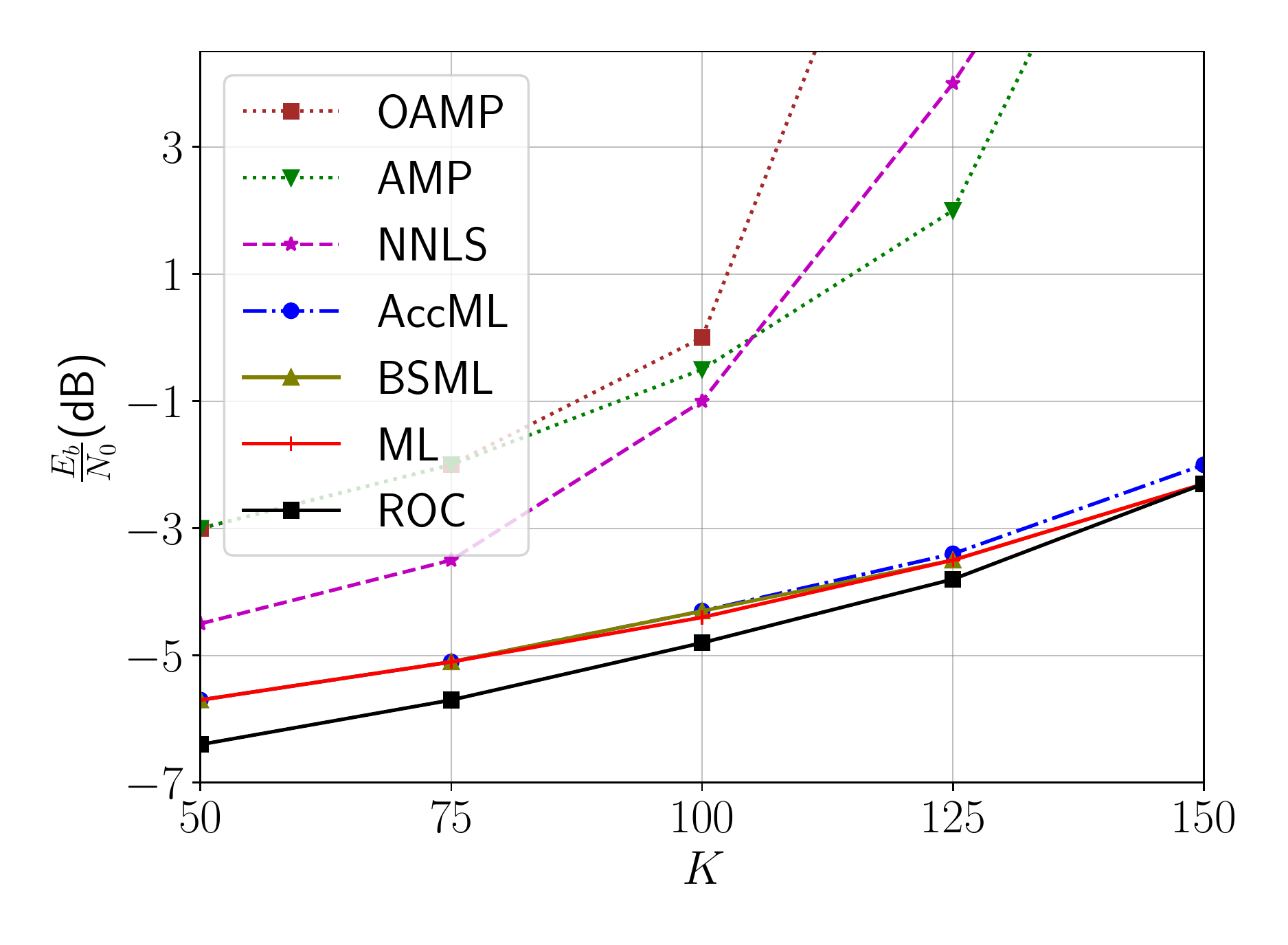}
              \vspace{-1.2cm}
              \caption{\label{fig:EbN0_required}}
       \end{subfigure}
       \hfill
       \begin{subfigure}{0.45\textwidth}
              \centering
              \includegraphics[width=\textwidth]{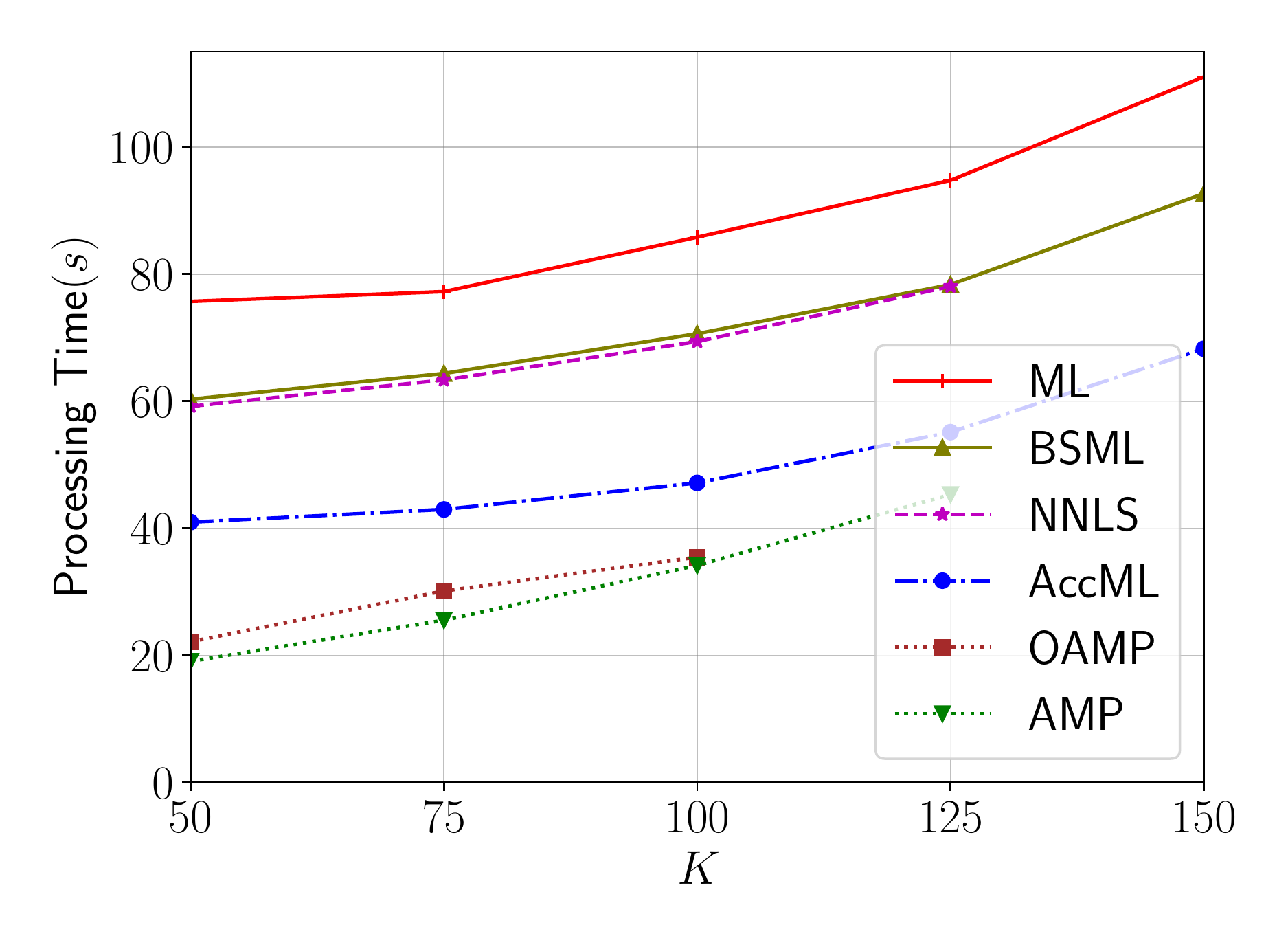}
              \vspace{-1.2cm}
              \caption{\label{fig:Time_K}}
       \end{subfigure}
       \captionsetup{subrefformat=parens}
       \vspace{-0.5cm}
       \caption{\subref{fig:EbN0_required} The minimal ${E_b}/{N_0}\;\mathrm{(dB)}$ to guarantee PUPE to be less than $0.05$ as a function of $K$, \subref{fig:Time_K} Processing Time$(s)$ as a function of $K$. We exclude the results of large $K_a$ for NNLS, AMP and OAMP when they fail to converge at reasonable low ${E_b}/{N_0}$ as shown in~\cref{fig:EbN0_required}.}
\end{figure}

\cref{fig:EbN0_required} shows the required ${E_b}/{N_0}$ to achieve $\mathrm{PUPE}\leq0.05$ versus different number of active users. For comparison, AMP, OAMP, BSML and NNLS are also included. Finally, the asymptotic performance analysis based on the receiver operating characteristic~(ROC) of ML~\cite{chen2019cov,poor2013introduction} is added for reference, which is denoted as ``ROC''. As shown in~\cref{fig:EbN0_required}, AccML achieves the similar performance as ML and BSML and significantly outperforms NNLS, AMP and OAMP. In particular, for $K\geq 100$, NNLS, AMP and OAMP deteriorate dramatically but AccML still performs near optimally. It should be highlighted that both AMP and OAMP work under extremely poor condition for $K \geq 100$, which results in early termination. Overall, in terms of both ${E_b}/{N_0}$ requirement and the scaling law in $K$ and $D$, the AccML outperforms AMP and OAMP significantly.

The processing time of both SPARC and tree decoding is shown in~\cref{fig:Time_K}. For SPARC decoding, AccML could reduce the computational complexity of the ML stage from $\mathcal{O}(D^2\cdot N)$ as $\mathcal{O}(D^2\cdot K_{sel})$ where $K_{sel}$ is the number of indices in $S^{0}$. For comparison, the BSML alternates between Bandit Sampling of computational complexity of $\mathcal{O}(D^2\cdot B_{BSML})$ and standard ML of of computational complexity of $\mathcal{O}(D^2\cdot N)$~\cite{dong2022faster}. The processing time of AccML is much smaller than ML, BSML and NNLS. In particular, when $K>100$, AccML only consumes slightly larger time than AMP but achieve the near optimal performance as ML and BSML. It demonstrates the advantages of the proposed two-stage AccML over all existing methods.

\section{Conclusion}\label{section:conclusion}
In this paper, we propose a two-stage method for SPARC decoding that combines a one-step thresholding-based algorithm for dimension reduction and ML for refinement. Due to the dimension reduced in the first stage, the complexity of ML is significantly reduced while the overall performance still being maintained. Adequate simulation results are provided to validate the near-optimal performance and the low computational complexity. In addition, theoretical analyses are given for both the covariance-based sparse recovery model and the thresholding algorithm. When convex relaxation is considered, the upper bound for the number of active users is given and proven to be optimal. For the one-step thresholding-based algorithm, theoretical analyses show that it is highly possible to contain all the active users when $M$ and $D$ are reasonably large. For future works, the extension of the proposed algorithm and theoretical analyses to more practical spatially correlated channel models seems promising.

%\section*{Acknowledgement}
%Acknowledgement Section:

% The following is Appendix section
%
\appendix
\appendixpage
\addappheadtotoc
\section{Key Lemmas}\label{sec:lemmas}

\begin{lemma}\label{lemma:expak4}
       Assume that $\bm{a}_k\in\mathbb{C}^D$, $k\in[N]$ are i.i.d random variables drawn uniformly from the sphere of radius $\sqrt{D}$. Then for all $k\in [N]$, $i,j\in [D]$ there is
       \begin{equation*}
              \mathrm{E}[\lv a_{k,i}\rv^2\lv a_{k,j}\rv^2]
              = \frac{D}{D+1},\quad \forall i\neq j,
       \end{equation*}
       and
       \begin{equation*}
              \mathrm{E}[\lv a_{k,i}\rv^4]
              = \frac{2D}{D+1},\quad \forall i\in [D].
       \end{equation*}
\end{lemma}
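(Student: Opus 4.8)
The plan is to realize the uniform law on the complex sphere through a Gaussian representation and then reduce everything to moments of i.i.d.\ exponential random variables. First I would write $\bm{a}_k=\sqrt{D}\,\bm{g}/\lV\bm{g}\rV_2$, where $\bm{g}=(g_1,\dots,g_D)$ has i.i.d.\ entries $g_i\sim\mathcal{CN}(0,1)$; this produces exactly the uniform distribution on the sphere of radius $\sqrt{D}$ because the standard complex Gaussian is $U(D)$-invariant, hence so is its normalization. Setting $u_i:=\lv g_i\rv^2$ and $S:=\sum_{\ell=1}^D u_\ell$, one has $\lv a_{k,i}\rv^2=D\,u_i/S$, so the two target quantities become $E[\lv a_{k,i}\rv^4]=D^2\,E[u_i^2/S^2]$ and $E[\lv a_{k,i}\rv^2\lv a_{k,j}\rv^2]=D^2\,E[u_iu_j/S^2]$ for $i\neq j$.

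The key structural fact I would invoke is that each $u_i$ is a unit-mean exponential, i.e.\ $u_i\sim\mathrm{Gamma}(1,1)$, so that $S\sim\mathrm{Gamma}(D,1)$ and, by the classical gamma--Dirichlet independence, the normalized vector $(u_1/S,\dots,u_D/S)$ is independent of the total mass $S$. Writing $V_i:=u_i/S$ and using $u_i=V_iS$, independence yields the factorizations $E[u_i^2]=E[V_i^2]\,E[S^2]$ and $E[u_iu_j]=E[V_iV_j]\,E[S^2]$, whence
\begin{equation*}
       E[V_i^2]=\frac{E[u_i^2]}{E[S^2]},\qquad E[V_iV_j]=\frac{E[u_iu_j]}{E[S^2]}.
\end{equation*}
It then remains only to compute the elementary moments $E[u_i^2]=2$ (second moment of an $\mathrm{Exp}(1)$), $E[u_iu_j]=E[u_i]E[u_j]=1$ for $i\neq j$ (independence of the $g_i$), and $E[S^2]=\mathrm{Var}(S)+E[S]^2=D+D^2=D(D+1)$. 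Substituting and multiplying by $D^2$ gives $E[\lv a_{k,i}\rv^4]=D^2\cdot 2/(D(D+1))=2D/(D+1)$ and $E[\lv a_{k,i}\rv^2\lv a_{k,j}\rv^2]=D^2\cdot 1/(D(D+1))=D/(D+1)$, as claimed.

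I expect the only nontrivial point to be the gamma--Dirichlet independence of $(u_i/S)_i$ from $S$; everything else is a one-line moment evaluation. If one prefers to avoid that lemma, an alternative closing argument uses exchangeability of the coordinates together with the deterministic constraint $\lV\bm{a}_k\rV_2^2=D$: taking expectations in $\big(\sum_\ell\lv a_{k,\ell}\rv^2\big)^2=D^2$ gives $D\,c_4+D(D-1)c_2=D^2$, where by symmetry $c_4:=E[\lv a_{k,i}\rv^4]$ and $c_2:=E[\lv a_{k,i}\rv^2\lv a_{k,j}\rv^2]$ are coordinate-independent; combining this with the single ratio relation $c_4=2c_2$ --- which encodes the complex-Gaussian kurtosis $E[u_i^2]=2E[u_iu_j]$ --- solves the resulting $2\times2$ linear system and recovers the same values. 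A third, purely representation-theoretic route is to quote the fourth-moment identity $E[v_i\overline{v}_j v_k\overline{v}_l]=\frac{1}{D(D+1)}(\delta_{ij}\delta_{kl}+\delta_{il}\delta_{kj})$ for a uniform complex unit vector $\bm{v}$ and specialize the indices, but the Gaussian-plus-gamma computation above is the most self-contained.
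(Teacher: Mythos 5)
Your proof is correct, but it takes a genuinely different route from the paper's. The paper first reduces to the real sphere: it maps $\bm{a}_k$ to $\bm{a}_k^{\mathbb{R}}\in\mathbb{R}^{2D}$, uses exchangeability of the $2D$ real coordinates to pick up a factor of $4$, represents the real spherical vector as a normalized Gaussian, and then evaluates $\mathrm{E}\bigl[\lv g_i\rv^2\lv g_j\rv^2/\lV\bm{g}\rV_2^4\bigr]$ by the Laplace-transform identity $x^{-2}=\int_0^\infty te^{-tx}\,dt$ followed by an explicit integral, obtaining $\mathrm{E}[\lv a_{k,i}\rv^2\lv a_{k,j}\rv^2]=\frac{D}{D+1}$; the fourth moment is then deduced indirectly from the constraint $\lV\bm{a}_k\rV_2^4=D^2$, exactly as in your "alternative closing." You instead stay in complex coordinates, observe $\lv g_i\rv^2\sim\mathrm{Exp}(1)$, and invoke the gamma--Dirichlet (Lukacs) independence of the self-normalized vector $(u_i/S)_i$ from the total $S\sim\mathrm{Gamma}(D,1)$, so that both target moments factor as elementary exponential/gamma moments divided by $\mathrm{E}[S^2]=D(D+1)$, with no integral and no real-coordinate bookkeeping. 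What each buys: your argument is shorter and computes both moments directly and symmetrically, at the price of quoting a named independence theorem; the paper's argument is self-contained (only Fubini and a one-dimensional integral) but more computational. One caveat on your side remark: the "alternative" closing via the constraint plus the ratio $c_4=2c_2$ does not actually avoid the gamma--Dirichlet lemma, since your justification of that ratio ($\mathrm{E}[V_i^2]/\mathrm{E}[V_iV_j]=\mathrm{E}[u_i^2]/\mathrm{E}[u_iu_j]$) is itself the factorization granted by that independence; without it, the ratio needs a separate argument (e.g., the paper's integral, or the Weingarten-type fourth-moment identity you cite).
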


\begin{proof}
       For all $k\in [N]$, define the real-valued vector
       \begin{equation*}
              \bm{a}_k^{\mathbb{R}}:=[\mathrm{Re}(\bm{a}_k),\mathrm{Im}(\bm{a}_k)]\in\mathbb{R}^{2D},
       \end{equation*}
       where $\mathrm{Re}(\bm{a}_k)$ and $\mathrm{Im}(\bm{a}_k)$ are the real part and imaginary part of $\bm{a}_k$, respectively. Then $\{\bm{a}_k^{\mathbb{R}}\}_{k=1}^N$ are also i.i.d. random variables drawn uniformly from the sphere of radius $\sqrt{D}$. And we have
       \begin{align}\label{eq:expakiakj1}
              \mathrm{E}[\lv a_{k,i}\rv^2\lv a_{k,j}\rv^2]
              = & ~\mathrm{E}\Big[\left(\lv\mathrm{Re}(a_{k,i})\rv^2+\lv\mathrm{Im}(a_{k,i})\rv^2\right) \left(\lv\mathrm{Re}(a_{k,j})\rv^2+\lv\mathrm{Im}(a_{k,j})\rv^2\right)\Big] \nonumber \\
              = & ~4\mathrm{E}\left[\lv a_{k,i}^{\mathbb{R}}\rv^2\lv a_{k,j}^{\mathbb{R}}\rv^2\right].
       \end{align}
       As $\frac{\bm{a}_k^{\mathbb{R}}}{\sqrt{D}}$ are i.i.d. random variables drawn uniformly from the unit sphere, we rewrite the vector $\frac{\bm{a}_k^{\mathbb{R}}}{\sqrt{D}} := \frac{\bm{g}_k}{\lV\bm{g}_k\rV_2}$, where $\bm{g}_k\in \mathbb{R}^{2D}$ with each component of $\bm{g}_k$ be i.i.d. standard Gaussian. Then
       \begin{align*}
              \frac{1}{D^2}\mathrm{E}\left[\lv a_{k,i}^{\mathbb{R}}\rv^2\lv a_{k,j}^{\mathbb{R}}\rv^2\right]
              = & ~\mathrm{E}\left[\frac{\lv g_{k,i}\rv^2\lv g_{k,j}\rv^2}{\lV\bm{g}_k\rV_2^4}\right]
              =\mathrm{E}\left[\lv g_{k,i}\rv^2\lv g_{k,j}\rv^2\int_{0}^{+\infty}te^{-t\lV\bm{g}_k\rV_2^2}dt\right]                                                                             \\
              = & ~\mathrm{E}\left[\int_{0}^{+\infty}\lv g_{k,i}\rv^2 e^{-t\lv g_{k,i}\rv^2}\lv g_{k,j}\rv^2 e^{-t\lv g_{k,j}\rv^2} \prod\limits_{s\neq i,j} e^{-t\lv g_{k,s}\rv^2}  tdt\right] \\
              = & ~\int_{0}^{+\infty}[\tilde{f}'(t)]^2[\tilde{f}(t)]^{2D-2}tdt,
       \end{align*}
       where
       $\tilde{f}(t)=\mathrm{E}\left[e^{-t\lv g_{k,i}\rv^2}\right]=\frac{1}{\sqrt{1+2t}}$, and $\tilde{f}'(t)=-(1+2t)^{3/2}$. Thus,
       \begin{align*}
              \mathrm{E}\left[\lv a_{k,i}^{\mathbb{R}}\rv^2\lv a_{k,j}^{\mathbb{R}}\rv^2\right]
              =D^2   \int_{0}^{+\infty}[\tilde{f}'(t)]^2[\tilde{f}(t)]^{2D-2}tdt
              =D^2\int_{0}^{+\infty}(1+2t)^{-D-2}tdt=\frac{D}{4D+4}.
       \end{align*}
       Therefore, by \eqref{eq:expakiakj1} we have
       \begin{equation*}
              \mathrm{E}[\lv a_{k,i}\rv^2\lv a_{k,j}\rv^2]
              = 4\mathrm{E}\left[\lv a_{k,i}^{\mathbb{R}}\rv^2\lv a_{k,j}^{\mathbb{R}}\rv^2\right]  = \frac{D}{D+1}.
       \end{equation*}
       Now we compute $\mathrm{E}[\lv a_{k,i}\rv^4]$. By $\lV\bm{a}_k\rV_2^4=D^2$, we have
       \begin{align*}
              D^2 =\sum_{i\neq j}\mathrm{E}[\lv a_{k,i}\rv^2\lv a_{k,j}\rv^2]+\sum_{i=1}^D \mathrm{E}[\lv a_{k,i}\rv^4]
              =D(D-1)\mathrm{E}[\lv a_{k,i}\rv^2\lv a_{k,j}\rv^2]
              +D \mathrm{E}[\lv a_{k,i}\rv^4],
       \end{align*}
       for all $i\in [D]$, and $i\neq j$. Thus we have
       \begin{equation*}
              \mathrm{E}[\lv a_{k,i}\rv^4]
              =D- (D-1)\mathrm{E}[\lv a_{k,i}\rv^2\lv a_{k,j}\rv^2]
              =\frac{2D}{D+1}.
       \end{equation*}
\end{proof}
\begin{lemma}[{[\citenum{cai2013distributions}, Proposition 5]}]\label{ineq:angle}
       Let the vectors $\bm{x}, \bm{y}\in \mathbb{R}^{n}$ be sampled independently and uniformly from the unit sphere $\mathbb{S}^{n-1}$, and $\theta$ be the angle between $\bm{x}$ and $\bm{y}$. Then,
       \begin{equation*}
              \mathbb{P}\left(\lv \theta-\frac{\pi}{2}\rv \ge \varepsilon\right)\le \tilde{c}_0\sqrt{n}\left(\cos \varepsilon\right)^{n-2},
       \end{equation*}
       for all $n\ge 2$ and $\varepsilon\in(0,\frac{\pi}{2})$, where $\tilde{c}_0$ is an universal constant.
\end{lemma}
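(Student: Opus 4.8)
The plan is to use rotational invariance to reduce the joint law of $(\bm{x},\bm{y})$ to the marginal of a single coordinate of a uniform point on the sphere, and then to bound a Beta-type tail integral. First I would exploit that the distribution of $(\bm{x},\bm{y})$ is invariant under a common rotation of both vectors; conditioning on $\bm{x}$ and rotating so that $\bm{x}$ is mapped to the first standard basis vector, the angle obeys $\cos\theta=\l\bm{x},\bm{y}\r=y_1$, where $y_1$ is the first coordinate of $\bm{y}$ drawn uniformly from $\mathbb{S}^{n-1}$. It therefore suffices to study the marginal law of one coordinate of a uniform point on the sphere, whose density on $[-1,1]$ is the standard expression $f(u)=\frac{1}{\mathrm{B}(1/2,(n-1)/2)}(1-u^2)^{(n-3)/2}$, where $\mathrm{B}$ is the Beta function and $\frac{1}{\mathrm{B}(1/2,(n-1)/2)}=\frac{\Gamma(n/2)}{\sqrt{\pi}\,\Gamma((n-1)/2)}$.

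Next I would rewrite the event. Since $\theta\in[0,\pi]$ and $\cos$ is strictly decreasing there, $\lv\theta-\frac{\pi}{2}\rv\ge\varepsilon$ holds precisely when $\lv\cos\theta\rv\ge\cos\!\left(\frac{\pi}{2}-\varepsilon\right)=\sin\varepsilon$, i.e. when $\lv y_1\rv\ge\sin\varepsilon$. Using the symmetry of $f$,
\begin{equation*}
       \mathbb{P}\!\left(\lv\theta-\tfrac{\pi}{2}\rv\ge\varepsilon\right)=\frac{2}{\mathrm{B}(1/2,(n-1)/2)}\int_{\sin\varepsilon}^{1}(1-u^2)^{(n-3)/2}\,du,
\end{equation*}
and the substitution $u=\sin\phi$ converts the integral into $\int_{\varepsilon}^{\pi/2}(\cos\phi)^{n-2}\,d\phi$, which already carries the target exponent $n-2$.

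To finish I would bound the two factors separately. For the integral, monotonicity gives $(\cos\phi)^{n-2}\le(\cos\varepsilon)^{n-2}$ for $\phi\in[\varepsilon,\pi/2]$ and $n\ge2$, and the interval has length at most $\frac{\pi}{2}$, so $\int_{\varepsilon}^{\pi/2}(\cos\phi)^{n-2}\,d\phi\le\frac{\pi}{2}(\cos\varepsilon)^{n-2}$. For the prefactor I would establish the $\sqrt{n}$ scaling by log-convexity of $\Gamma$: writing $x=(n-1)/2$, the inequality $\Gamma(x+\tfrac{1}{2})\le\sqrt{\Gamma(x)\Gamma(x+1)}=\sqrt{x}\,\Gamma(x)$ gives $\frac{\Gamma(n/2)}{\Gamma((n-1)/2)}\le\sqrt{(n-1)/2}$, whence $\frac{1}{\mathrm{B}(1/2,(n-1)/2)}\le\sqrt{n/(2\pi)}$. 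Combining the two bounds yields $\mathbb{P}(\lv\theta-\frac{\pi}{2}\rv\ge\varepsilon)\le\tilde{c}_0\sqrt{n}\,(\cos\varepsilon)^{n-2}$ for a universal $\tilde{c}_0$. I expect the prefactor estimate to be the only genuinely technical point, since one must produce a clean non-asymptotic $O(\sqrt{n})$ bound on the Gamma ratio valid for every $n\ge2$ (the log-convexity argument above does this in one line), and the boundary case $n=2$, where $(\cos\varepsilon)^{n-2}=1$, is handled trivially by enlarging $\tilde{c}_0$.
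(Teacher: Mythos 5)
Your proof is correct, but note that the paper itself does not prove this lemma at all: it is imported as a black box from Proposition~5 of~\cite{cai2013distributions} and used only inside the proof of Lemma~\ref{lemma:conAk}. So the comparison here is between your self-contained derivation and an external citation. Your route --- rotate so that $\cos\theta$ equals the first coordinate $y_1$ of a uniform point on $\mathbb{S}^{n-1}$, use the Beta-type marginal density $f(u)=\frac{1}{\mathrm{B}(1/2,(n-1)/2)}(1-u^2)^{(n-3)/2}$, rewrite the event as $\lvert y_1\rvert\ge\sin\varepsilon$, and substitute $u=\sin\phi$ --- is mathematically the same computation that underlies the cited result, where one works directly with the density of the angle itself, proportional to $(\sin\theta)^{n-2}$; the two formulations are linked exactly by your change of variables, since the angle density is $f(\cos\theta)\sin\theta$. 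Both of your estimates are valid for every $n\ge 2$: the integral bound $\int_{\varepsilon}^{\pi/2}(\cos\phi)^{n-2}\,d\phi\le\frac{\pi}{2}(\cos\varepsilon)^{n-2}$ holds because $\cos$ is nonnegative and decreasing on $[\varepsilon,\pi/2]$, and the prefactor bound via log-convexity of $\Gamma$, namely $\Gamma(x+\tfrac{1}{2})\le\sqrt{\Gamma(x)\Gamma(x+1)}=\sqrt{x}\,\Gamma(x)$ with $x=(n-1)/2$, yields $\frac{1}{\mathrm{B}(1/2,(n-1)/2)}\le\sqrt{n/(2\pi)}$, so the chain closes with the explicit universal constant $\tilde{c}_0=\sqrt{\pi/2}$, the boundary case $n=2$ requiring no special care since $(\cos\varepsilon)^{0}=1$. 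What your argument buys is self-containedness and an explicit, small constant, which the paper's citation does not provide; what the citation buys the authors is brevity, at the cost of leaving $\tilde{c}_0$ unspecified.
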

\begin{lemma}\label{lemma:conAk}
       Assume that $\{\bm{a}_k\}_{k=1}^{N}$ are i.i.d. random variables drawn uniformly from the sphere of radius $\sqrt{D}$.  For any non-negative $\bm{\gamma}^{\natural}$ supported on $S^{\natural}$ and any  $\varepsilon_0>0$, if provided $8\log(DN)<D<\infty$, then
        it holds with probability at least $1-\frac{c_0}{D\sqrt{D}}$ that
       \begin{equation}\label{eq1:lemmaconAK}
              \lv\frac{1}{D}\bm{a}_k^*\bm{a}_\ell\rv \le \sqrt{\frac{32\log (DN)}{D}},\quad \forall k\neq \ell ,
       \end{equation}
       and
       \begin{equation}\label{eq2:lemmaconAK}
              \sum_{\ell\in S^{\natural}}\gamma^{\natural}_{\ell}\lv\sum_{i=1}^D\overline{a}_{k,i} a_{\ell,i}\rv^2 \le \sum_{\ell\in S^{\natural}}32\gamma^{\natural}_{\ell} D\log (DN) ,
       \end{equation}
       for any $ k\notin S^{\natural}$. $c_0$ is a universal constant.
\end{lemma}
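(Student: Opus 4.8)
The plan is to establish the coherence bound \eqref{eq1:lemmaconAK} first and then read off \eqref{eq2:lemmaconAK} as an immediate consequence, since the two inequalities will hold on one and the same event. First I would normalize: writing $\bm{a}_k=\sqrt{D}\,\bm{v}_k$ and $\bm{a}_\ell=\sqrt{D}\,\bm{u}_\ell$ with $\bm{v}_k,\bm{u}_\ell$ uniform on the complex unit sphere, one has $\tfrac1D\bm{a}_k^*\bm{a}_\ell=\bm{v}_k^*\bm{u}_\ell$, so the target is $|\bm{v}_k^*\bm{u}_\ell|\le\sqrt{32\log(DN)/D}$. Fixing a pair and conditioning on $\bm{v}_k$, the vector $\bm{u}_\ell$ remains uniform and independent, and by unitary invariance I may take $\bm{v}_k=\bm{e}_1$. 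The crucial point is that the complex modulus is genuinely two–dimensional: passing to the real representation $\bm{a}^{\mathbb{R}}:=[\mathrm{Re}(\bm{a}),\mathrm{Im}(\bm{a})]\in\mathbb{R}^{2D}$ as in Lemma~\ref{lemma:expak4},
\[
|\bm{v}_k^*\bm{u}_\ell|^2=\langle \bm{v}_k^{\mathbb{R}},\bm{u}_\ell^{\mathbb{R}}\rangle^2+\langle (i\bm{v}_k)^{\mathbb{R}},\bm{u}_\ell^{\mathbb{R}}\rangle^2=\cos^2\theta_1+\cos^2\theta_2,
\]
where $\bm{v}_k^{\mathbb{R}}$ and $(i\bm{v}_k)^{\mathbb{R}}$ are orthonormal in $\mathbb{R}^{2D}$ and $\theta_1,\theta_2$ are the angles they make with the uniform vector $\bm{u}_\ell^{\mathbb{R}}\in\mathbb{S}^{2D-1}$. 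Thus $|\bm{v}_k^*\bm{u}_\ell|^2$ is exactly the squared length of the projection of $\bm{u}_\ell^{\mathbb{R}}$ onto a fixed two–dimensional real plane.

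Next I would control each angle with the angle–distribution estimate. Applying Lemma~\ref{ineq:angle} with $n=2D$ and threshold $s=\sin\varepsilon$ gives $\mathbb{P}(|\cos\theta_j|\ge s)\le\tilde{c}_0\sqrt{2D}\,(1-s^2)^{D-1}\le\tilde{c}_0\sqrt{2D}\,e^{-s^2(D-1)}$. I choose $s^2=16\log(DN)/D$, so that on the complementary (good) event $\cos^2\theta_1+\cos^2\theta_2\le 2s^2=32\log(DN)/D$. In the non-trivial regime $D>32\log(DN)$ one has $s^2<\tfrac12<1$, the choice of $\varepsilon$ is legitimate, and using $(D-1)/D\ge\tfrac12$ the tail is at most $\tilde{c}_0\sqrt{2D}\,(DN)^{-8}$ for each $j$; in the remaining regime $8\log(DN)<D\le 32\log(DN)$ the threshold $\sqrt{32\log(DN)/D}\ge 1\ge|\bm{v}_k^*\bm{u}_\ell|$, so the bound holds surely and the statement is never vacuous. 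A union bound over $\theta_1,\theta_2$ and then over the at most $N^2$ pairs $k\neq\ell$ yields a total failure probability of order $N^2\sqrt{D}(DN)^{-8}\lesssim D^{-15/2}N^{-6}$, which the hypothesis $8\log(DN)<D$ (guaranteeing $D$ large) bounds by $c_0/(D\sqrt{D})$ for a universal $c_0$. This proves \eqref{eq1:lemmaconAK}.

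Finally, \eqref{eq2:lemmaconAK} needs no additional randomness, holding on the very event on which \eqref{eq1:lemmaconAK} holds: for any $k\notin S^{\natural}$ and $\ell\in S^{\natural}$ we have $k\neq\ell$, so squaring \eqref{eq1:lemmaconAK} gives $\big|\sum_{i=1}^D\overline{a}_{k,i}a_{\ell,i}\big|^2=|\bm{a}_k^*\bm{a}_\ell|^2\le 32D\log(DN)$, and multiplying by $\gamma^{\natural}_\ell\ge0$ and summing over $\ell\in S^{\natural}$ reproduces exactly the right-hand side of \eqref{eq2:lemmaconAK}. I expect the main obstacle to be the faithful handling of the complex modulus rather than the union bound: a direct use of the angle lemma controls only the real part $\langle\bm{v}_k^{\mathbb{R}},\bm{u}_\ell^{\mathbb{R}}\rangle$, so the decisive step is recognizing $|\bm{v}_k^*\bm{u}_\ell|^2$ as a projection onto the two–dimensional plane spanned by $\bm{v}_k^{\mathbb{R}},(i\bm{v}_k)^{\mathbb{R}}$ (equivalently, a $\mathrm{Beta}(1,D-1)$ variable) and then tracking constants carefully so that the per-pair exponential tail, after the union bound over the $\binom{N}{2}$ pairs, still leaves a failure probability below $c_0/(D\sqrt{D})$ under $8\log(DN)<D$.
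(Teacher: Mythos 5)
Your proposal is correct and follows essentially the same route as the paper's proof: both pass to the real representation in $\mathbb{R}^{2D}$, control the two real inner products $\langle \bm{a}_k^{\mathbb{R}},\bm{a}_\ell^{\mathbb{R}}\rangle$ and $\langle (i\bm{a}_k)^{\mathbb{R}},\bm{a}_\ell^{\mathbb{R}}\rangle$ via the angle-concentration Lemma~\ref{ineq:angle} with a union bound over pairs, and then obtain \eqref{eq2:lemmaconAK} from \eqref{eq1:lemmaconAK} by squaring, weighting by $\gamma^{\natural}_{\ell}\ge 0$ and summing. The only (cosmetic, and in fact slightly cleaner) differences are that you use the exact Pythagorean identity $\lv\bm{a}_k^*\bm{a}_\ell\rv^2=\langle \bm{a}_k^{\mathbb{R}},\bm{a}_\ell^{\mathbb{R}}\rangle^2+\langle (i\bm{a}_k)^{\mathbb{R}},\bm{a}_\ell^{\mathbb{R}}\rangle^2$ where the paper uses the triangle inequality (with per-term threshold $\sqrt{8\log(DN)/D}$ instead of your $\sqrt{16\log(DN)/D}$, arriving at the same constant $32$), and that you explicitly dispose of the regime $D\le 32\log(DN)$ by Cauchy--Schwarz, which your larger per-term threshold makes necessary.
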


\begin{proof}

       %Let $\theta_{k,\ell}$ be the angle between $\bm{a}_k$ and $\bm{a}_\ell$, then
       %\begin{equation*}
       %\lv\cos\theta_{k,\ell}\rv=\lv\frac{\bm{a}_k^*\bm{a}_\ell}{\lV\bm{a}_k\rV_2\lV\bm{a}_\ell\rV_2}\rv=\lv\frac{1}{D}\sum_{i=1}^D\overline{a}_{k,i} a_{\ell,i}\rv.
       %\end{equation*}
       Recall that $\bm{a}_k^{\mathbb{R}}=[\mathrm{Re}(\bm{a}_k),\mathrm{Im}(\bm{a}_k)]$. We also denote
       \begin{align}\label{def:akR}
              \hat{\bm{a}}_k^{\mathbb{R}}:=[\mathrm{Im}(\bm{a}_k),-\mathrm{Re}(\bm{a}_k)]\in\mathbb{R}^{2D}~k\in [N].
       \end{align}
       Then by definition we have
       \begin{align}\label{eq:akal}
              \lv\frac{1}{D}\bm{a}_k^*\bm{a}_\ell\rv
              = & ~\Big\lvert\frac{1}{D}[\mathrm{Re}(\bm{a}_k)\cdot\mathrm{Re}(\bm{a}_\ell)   +\mathrm{Im}(\bm{a}_k)\cdot\mathrm{Im}(\bm{a}_\ell) -i\cdot\left(\mathrm{Im}(\bm{a}_k)\cdot\mathrm{Re}(\bm{a}_\ell)-\mathrm{Re}(\bm{a}_k)\cdot \mathrm{Im}(\bm{a}_\ell)\right) ]\Big\rvert \nonumber \\
              %=&~\frac{1}{D}\Bigg\lvert[\mathrm{Re}(\bm{a}_k),\mathrm{Im}(\bm{a}_k)]\left[\begin{matrix}\mathrm{Re}(\bm{a}_\ell)\\ \mathrm{Im}(\bm{a}_\ell) \end{matrix}\right]  \nonumber\\
              %&~-i\cdot [\mathrm{Im}(\bm{a}_k),-\mathrm{Re}(\bm{a}_k)]\left[\begin{matrix}\mathrm{Re}(\bm{a}_\ell)\\ \mathrm{Im}(\bm{a}_\ell) \end{matrix}\right]
              %\Bigg\rvert \nonumber \\
              = & ~\frac{1}{D}\lv\left(\bm{a}_k^{\mathbb{R}}\right)^T\bm{a}_\ell^{\mathbb{R}}-\left(\hat{\bm{a}}_k^{\mathbb{R}}\right)^T\bm{a}_\ell^{\mathbb{R}}\rv.
       \end{align}
       % First we consider the case that $k\notin S^{\natural}$, $\ell\in S^{\natural}$.\par
       Notice that for $k\neq \ell$, $\frac{\bm{a}_k^{\mathbb{R}}}{\sqrt{D}}\in \mathbb{R}^{2D}$ and $\frac{\bm{a}_\ell^{\mathbb{R}}}{\sqrt{D}}\in\mathbb{R}^{2D}$ are sampled independently and uniformly from the unit sphere $\mathbb{S}^{2D-1}$. Let $\theta_{k,\ell}$ be the angle between $\frac{\bm{a}_k^{\mathbb{R}}}{\sqrt{D}}$ and $\frac{\bm{a}_\ell^{\mathbb{R}}}{\sqrt{D}}$. Then, for some given $k$ and $\ell$ such that  $k\neq \ell$,  by Lemma~\ref{ineq:angle} we have
       \begin{equation*}
              \mathbb{P}\left(\lv\theta_{k,\ell}-\frac{\pi}{2}\rv\ge \tilde{\varepsilon}\right)\le \tilde{c}_1\sqrt{2D}\left(\cos \tilde{\varepsilon}\right)^{2D-2}
       \end{equation*}
       for any $\tilde{\varepsilon}\in(0,\frac{\pi}{2})$ and some universal constant $\tilde{c}_1$. Thus, for any $k\neq \ell$, by a union bound we have
       \begin{equation*}
              \mathbb{P}\left(\lv\theta_{k,\ell}-\frac{\pi}{2}\rv\ge \tilde{\varepsilon}\right)
              \le \tilde{c}_1 \tbinom{N}{2}\sqrt{2D} \left(\cos \tilde{\varepsilon}\right)^{2D-2}.
       \end{equation*}
       For any $\tilde{\varepsilon}\in(0,1)$, we have $\cos \tilde{\varepsilon}\le 1-\frac{\tilde{\varepsilon}^2}{2}+\frac{\tilde{\varepsilon}^4}{24}\le 1-\frac{\tilde{\varepsilon}^2}{4}$. Then, as long as $\frac{8\log (DN)}{D}<1$ and $D<\infty$, we have for some universal constant $c_0$,
       \begin{equation*}
              \mathbb{P}\left(\lv\theta_{k,\ell}-\frac{\pi}{2}\rv\ge \sqrt{\frac{8\log (DN)}{D}}\right)\le \frac{c_0}{2 D\sqrt{D}},~\forall k\neq \ell.
       \end{equation*}
       Therefore, with probability at least $1-\frac{c_0}{2 D\sqrt{D}}$ we have
       \begin{align*}
              \lv\cos\theta_{k,\ell}\rv
              \le \lv\cos\left(\frac{\pi}{2}+\sqrt{\frac{8\log (DN)}{D}}\right) \rv
              \le\lv\sin\left(\sqrt{\frac{8\log (DN)}{D}}\right)\rv\le \sqrt{\frac{8\log (DN)}{D}},\quad \forall k\neq \ell.
       \end{align*}
       By the definition of $\theta_{k,\ell}$, we then have with probability at least $1-\frac{c_0}{2D\sqrt{D}}$ it holds
       \begin{equation*}
              \lv\frac{\left(\bm{a}_k^{\mathbb{R}}\right)^T\bm{a}_\ell^{\mathbb{R}}}{D}\rv=\lv\cos\theta_{k,\ell}\rv \le \sqrt{\frac{8\log (DN)}{D}},~\forall k\neq \ell.
       \end{equation*}
       Similarly, with probability at least $1-\frac{c_0}{2D\sqrt{D}}$ we also have
       \begin{equation*}
              0\le \lv\frac{\left(\hat{\bm{a}}_k^{\mathbb{R}}\right)^T\bm{a}_\ell^{\mathbb{R}}}{D}\rv\le \sqrt{\frac{8\log (DN)}{D}},~\forall k\neq\ell.
       \end{equation*}
       Then, together with \eqref{eq:akal}, it yields
       \begin{align*}
              \lv\bm{a}_k^*\bm{a}_\ell\rv & \le \lv\left(\bm{a}_k^{\mathbb{R}}\right)^T\bm{a}_\ell^{\mathbb{R}}\rv+\lv\left(\hat{\bm{a}}_k^{\mathbb{R}}\right)^T\bm{a}_\ell^{\mathbb{R}}\rv
              \le 4\sqrt{2D\log (DN)}
       \end{align*}
       with probability at least $1-\frac{c_0}{D\sqrt{D}}$. Therefore, for $k\notin S^{\natural}$ and $\ell\in S^{\natural}$ we must have $k\neq \ell$ and thus
       \begin{align*}
              \sum_{\ell\in S^{\natural}}\gamma^{\natural}_{\ell}\lv\sum_{i=1}^D\overline{a}_{k,i} a_{\ell,i}\rv^2 & =\sum_{\ell\in S^{\natural}}\gamma^{\natural}_{\ell}D^2\lv\frac{1}{D}\bm{a}_k^*\bm{a}_\ell\rv^2 \le \sum_{\ell\in S^{\natural}}32\gamma^{\natural}_{\ell}D\log (DN).
       \end{align*}
\end{proof}

\begin{lemma}\label{lemma:MIPinA}
       Let $\mu$ be the mutual coherence of $\{\bm{a}_k\}_{k=1}^{N}$ defined in \eqref{def:mutualcoherence}, and $\bm{\A}$ be defined in \eqref{def:calA} whose columns are denoted by $\{\bm{\A_k}\}_{k=1}^N$. Then, the coherence of $\bm{\A}$, i.e.,
       \begin{equation*}
              \hat{\mu}=\max_{k\neq \ell}\frac{\lv\l \bm{\A}_k,\bm{\A}_l\r\rv}{\lV\bm{\A}_k\rV_2 \lV\bm{\A}_l\rV_2}
       \end{equation*}
       satisfies $ \hat{\mu}=\mu^2$.
\end{lemma}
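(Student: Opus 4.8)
The plan is to compute directly the two ingredients in the definition of $\hat\mu$ --- the cross inner products $\l\bm{\A}_k,\bm{\A}_\ell\r$ and the column norms $\lV\bm{\A}_k\rV_2$ --- and to show that each is exactly the square of the corresponding quantity built from $\bm{a}_k$. The only tool needed is the standard vectorization identity $\mathrm{vec}(\bm{M})^*\mathrm{vec}(\bm{N})=\mathrm{tr}(\bm{M}^*\bm{N})$, which converts inner products of the lifted columns into traces of products of rank-one matrices.

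First I would apply this identity with $\bm{M}=\bm{a}_k\bm{a}_k^*$ and $\bm{N}=\bm{a}_\ell\bm{a}_\ell^*$, giving $\l\bm{\A}_k,\bm{\A}_\ell\r=\mathrm{tr}\big((\bm{a}_k\bm{a}_k^*)^*\bm{a}_\ell\bm{a}_\ell^*\big)$. Since $\bm{a}_k\bm{a}_k^*$ is Hermitian and the trace is cyclic, this collapses to $\mathrm{tr}(\bm{a}_k\bm{a}_k^*\bm{a}_\ell\bm{a}_\ell^*)=(\bm{a}_\ell^*\bm{a}_k)(\bm{a}_k^*\bm{a}_\ell)=\lv\bm{a}_k^*\bm{a}_\ell\rv^2=\lv\l\bm{a}_k,\bm{a}_\ell\r\rv^2$. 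Specializing to $k=\ell$ yields $\lV\bm{\A}_k\rV_2^2=\lv\l\bm{a}_k,\bm{a}_k\r\rv^2=\lV\bm{a}_k\rV_2^4$, hence $\lV\bm{\A}_k\rV_2=\lV\bm{a}_k\rV_2^2$.

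Substituting these into the definition of $\hat\mu$ gives, for every $k\neq\ell$,
\begin{equation*}
\frac{\lv\l\bm{\A}_k,\bm{\A}_\ell\r\rv}{\lV\bm{\A}_k\rV_2\lV\bm{\A}_\ell\rV_2}=\frac{\lv\l\bm{a}_k,\bm{a}_\ell\r\rv^2}{\lV\bm{a}_k\rV_2^2\lV\bm{a}_\ell\rV_2^2}=\left(\frac{\lv\l\bm{a}_k,\bm{a}_\ell\r\rv}{\lV\bm{a}_k\rV_2\lV\bm{a}_\ell\rV_2}\right)^2.
\end{equation*}
Finally, because $t\mapsto t^2$ is increasing on $[0,\infty)$, taking the maximum over $k\neq\ell$ on both sides shows that the maximum of the left-hand ratio equals the square of the maximum of the base ratio, i.e. $\hat\mu=\mu^2$.

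There is no genuine obstacle here; the computation is short and elementary. The only point requiring care is the conjugation convention in the complex inner product --- one must check that $\bm{a}_\ell^*\bm{a}_k$ and $\bm{a}_k^*\bm{a}_\ell$ are complex conjugates, so their product is truly $\lv\l\bm{a}_k,\bm{a}_\ell\r\rv^2$ rather than a signed complex quantity. Conceptually the lemma simply records that the lifting $\bm{a}_k\mapsto\mathrm{vec}(\bm{a}_k\bm{a}_k^*)$ realizes the squared (tensor) kernel, under which coherences get squared.
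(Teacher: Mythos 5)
Your proof is correct and follows essentially the same route as the paper's: both establish $\lv\l\bm{\A}_k,\bm{\A}_\ell\r\rv=\lv\l\bm{a}_k,\bm{a}_\ell\r\rv^2$ and $\lV\bm{\A}_k\rV_2=\lV\bm{a}_k\rV_2^2$ and then conclude from the definitions, the only cosmetic difference being that the paper factors the explicit double sum $\sum_{i,j}\overline{a_{k,i}a_{k,j}}a_{\ell,i}a_{\ell,j}$ while you invoke the trace identity $\mathrm{vec}(\bm{M})^*\mathrm{vec}(\bm{N})=\mathrm{tr}(\bm{M}^*\bm{N})$ with cyclicity. Your closing remark about monotonicity of $t\mapsto t^2$ makes explicit a step the paper leaves implicit, which is a minor improvement in rigor.
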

\begin{proof}
       By the definition of $\bm{\A}$ and $\bm{\A_k}$, for all $k,\ell \in [N]$ we have
       \begin{align*}
              \lv\l \bm{\A}_k,\bm{\A}_l\r\rv
               & =\lv\l \mathrm{vec}(\bm{a}_k\bm{a}_k^*),\mathrm{vec}(\bm{a}_\ell\bm{a}_\ell^*)\r\rv
              =\lv\sum_{1\le i,j\le D}\overline{a_{k,i}a_{k,j}}a_{\ell,i}a_{\ell,j}\rv                                                         \\
               & =\lv\left(\sum_{1\le i\le D}\overline{a}_{k,i}a_{\ell,i}\right)\left(\sum_{1\le j\le D}\overline{a}_{k,j}a_{\ell,j}\right)\rv \\
               & = \lv\l \bm{a}_k,\bm{a}_l\r\rv^2.
       \end{align*}
       By taking $k=\ell$ in the above equation, we also have $\lV\bm{\A}_k\rV_2=\lV\bm{a}_k\rV_2^2$, $\forall k\in [N]$. Thus, by the definition of $\mu$ and $\hat{\mu}$, we have  $\hat{\mu}=\mu^2$.
\end{proof}

\begin{lemma}[{[\citenum{fengler2021non}, Theorem 11]}]\label{bound:Delta}
       Let $\bm{\Delta}=\bm{\widehat{\Sigma}_y}-\bm{\Sigma_y}$ where $\bm{\widehat{\Sigma}_y}$ and $\bm{\Sigma_y}$ are given by \eqref{Y:truesigma} and \eqref{Y:samplesigma} respectively. Let $\varepsilon>0$ and $\lV\bm{a}_k\rV^2_2=D$, $k\in [N]$, then for some universal constant $c$, it holds that
       \begin{equation*}
              \lV \bm{\Delta}\rV_F\le \frac{D\left(\lV\bm{\gamma}^{\natural}\rV_1+\sigma^2\right)}{\sqrt{M}}\sqrt{c\log \left(\frac{(eD)^2}{\varepsilon}\right)}
       \end{equation*}
       with probability at least $1-\varepsilon$ provided $ M\ge c\log \left(\frac{2(eD)^2}{\varepsilon}\right)$.

\end{lemma}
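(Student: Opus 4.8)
The statement is exactly~\cite[Theorem 11]{fengler2021non}, so within this paper it suffices to invoke that result; nevertheless, let me describe how I would establish it from scratch. The plan is to treat $\bm\Delta$ as a centered empirical average and to split the argument into an exact second-moment computation, which fixes the correct scale, followed by a concentration step, which supplies the high-probability tail. First I would whiten the data: writing the columns of $\bm Y$ as $\bm y_j:=\bm Y(:,j)=\bm{\Sigma_y}^{1/2}\bm g_j$ with $\bm g_j\sim\mathcal{CN}(\bm 0,\bm I_D)$ i.i.d., the error matrix becomes $\bm\Delta=\frac{1}{M}\sum_{j=1}^M\left(\bm y_j\bm y_j^*-\bm{\Sigma_y}\right)$, a centered sum of $M$ i.i.d.\ Hermitian matrices, to which concentration tools apply directly.

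The second step is to compute $\mathrm{E}\,\lV\bm\Delta\rV_F^2$ exactly, since this reveals that the target bound is essentially $\sqrt{\mathrm{E}\,\lV\bm\Delta\rV_F^2}$ inflated by a logarithmic factor. Writing $Y_i$ for the $i$th entry of a generic column $\bm y\sim\mathcal{CN}(\bm 0,\bm{\Sigma_y})$ and using the complex-Gaussian (Wick) identity $\mathrm{E}[\lv Y_i\rv^2\lv Y_k\rv^2]=\Sigma_y(i,i)\Sigma_y(k,k)+\lv\Sigma_y(i,k)\rv^2$, I would obtain $\mathrm{Var}(Y_i\overline{Y_k})=\Sigma_y(i,i)\Sigma_y(k,k)$ for every entry, and then, by independence across $j$,
\begin{equation*}
       \mathrm{E}\,\lV\bm\Delta\rV_F^2=\frac{1}{M}\sum_{1\le i,k\le D}\Sigma_y(i,i)\Sigma_y(k,k)=\frac{\left(\mathrm{tr}\,\bm{\Sigma_y}\right)^2}{M}.
\end{equation*}
Since $\lV\bm a_k\rV_2^2=D$, the trace is $\mathrm{tr}\,\bm{\Sigma_y}=\sum_k\gamma_k^{\natural}\lV\bm a_k\rV_2^2+\sigma^2D=D\left(\lV\bm\gamma^{\natural}\rV_1+\sigma^2\right)$, so $\sqrt{\mathrm{E}\,\lV\bm\Delta\rV_F^2}=D\left(\lV\bm\gamma^{\natural}\rV_1+\sigma^2\right)/\sqrt M$, which matches the claimed bound exactly up to the $\sqrt{\log}$ factor.

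The third step is to upgrade this $L^2$ estimate to a tail bound. I would view $\mathrm{vec}(\bm\Delta)=\frac{1}{M}\sum_j\mathrm{vec}(\bm y_j\bm y_j^*-\bm{\Sigma_y})$ as a centered sum of i.i.d.\ vectors in $\mathbb C^{D^2}$ whose squared lengths behave like $\lV\bm y_j\rV_2^2\sim\chi^2$, and apply a Bernstein-type inequality for sums of independent sub-exponential vectors. Concretely, I would first restrict to the high-probability event on which $\max_j\lV\bm g_j\rV_2^2$ is of order $D$ up to logarithmic factors, and then apply a bounded vector Bernstein inequality to the truncated summands. The variance proxy furnished by the previous step is $(\mathrm{tr}\,\bm{\Sigma_y})^2/M$; the sub-exponential parameter is what produces the $\sqrt{\log((eD)^2/\varepsilon)}$ inflation and forces the sample-size requirement $M\ge c\log(2(eD)^2/\varepsilon)$. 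Intersecting the two events then yields $\lV\bm\Delta\rV_F\le D(\lV\bm\gamma^{\natural}\rV_1+\sigma^2)\sqrt{c\log((eD)^2/\varepsilon)}/\sqrt M$ with probability at least $1-\varepsilon$.

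The hard part will be this final concentration step, for two reasons. First, the rank-one terms $\bm y_j\bm y_j^*$ have only sub-exponential --- not bounded --- Frobenius norm, so a bounded-differences argument or a naive vector-Bernstein with a hard norm bound does not apply, and the truncation radius must be tuned to interact correctly with the target confidence. Second, and more subtly, the dimension must enter only as $\log D$ inside the root, which rules out the two most obvious surrogates: a variational bound $\lV\bm\Delta\rV_F=\sup_{\lV\bm B\rV_F\le1}\mathrm{Re}\,\mathrm{tr}(\bm\Delta^*\bm B)$ combined with an $\epsilon$-net over the $D^2$-dimensional Frobenius sphere would inject a spurious $\sqrt{D^2}=D$ factor, while an operator-norm surrogate $\lV\bm\Delta\rV_F\le\sqrt{\lV\bm{\Sigma_y}\rV_{\mathrm{op}}\,\mathrm{tr}\,\bm{\Sigma_y}}\,\lV\bm W\rV_{\mathrm{op}}$ with $\bm W=\frac{1}{M}\sum_j\bm g_j\bm g_j^*-\bm I_D$ overshoots by a further $\sqrt D$ because $\lV\bm W\rV_{\mathrm{op}}\sim\sqrt{D/M}$. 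This is precisely why the exact entrywise variance computation of the second step, rather than any crude norm comparison, is indispensable for obtaining the stated scaling.
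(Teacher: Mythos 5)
The paper offers no proof of Lemma~\ref{bound:Delta}: the lemma is imported, with attribution, as Theorem~11 of \cite{fengler2021non}, so your opening move of simply invoking that result is exactly what the paper does, and within this paper nothing more is required. Your supplementary sketch is still worth assessing. Its core computation is correct and is the right anchor: the circular Wick identity gives $\mathrm{Var}(Y_i\overline{Y_k})=\Sigma_y(i,i)\Sigma_y(k,k)$, hence $\mathrm{E}\lVert\bm{\Delta}\rVert_F^2=(\mathrm{tr}\,\bm{\Sigma_y})^2/M$ with $\mathrm{tr}\,\bm{\Sigma_y}=D(\lVert\bm{\gamma}^{\natural}\rVert_1+\sigma^2)$, matching the claimed bound up to the logarithmic factor. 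Where your sketch is weakest is the tail step, and there your route differs from the one that actually produces the stated form of the lemma. Truncating $\max_j\lVert\bm{y}_j\rVert_2^2$ and applying a bounded vector Bernstein inequality in $\mathbb{C}^{D^2}$ can be made rigorous (dimension-free Hilbert-space Bernstein inequalities exist, so no $\varepsilon$-net and no spurious $D$ factor, as you correctly note), but the truncation forces a union bound over the $M$ samples and a second-order deviation term whose control yields a sample-size requirement of polylogarithmic form rather than the clean $M\ge c\log\left(2(eD)^2/\varepsilon\right)$; moreover it would naturally give $\log(1/\varepsilon)$-type factors, not the $(eD)^2$ inside the logarithm. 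Both of those features of the lemma are the signature of an entrywise argument, which is the simpler route: each entry $\Delta(i,k)$ is an average of $M$ i.i.d.\ centered sub-exponential variables with variance $\Sigma_y(i,i)\Sigma_y(k,k)$; scalar Bernstein plus the hypothesis on $M$ (which keeps the bound in its sub-Gaussian branch) controls each entry; a union bound over the $D^2$ entries produces the $(eD)^2$; and summing the squared entrywise bounds reduces to exactly your identity $\sum_{i,k}\Sigma_y(i,i)\Sigma_y(k,k)=(\mathrm{tr}\,\bm{\Sigma_y})^2$. So your second step is the heart of the matter on either route, and swapping your third step for the entrywise union bound would turn the sketch into a complete proof with the stated constants.
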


\section{Proof of Proposition~\ref{prop:expZk}}\label{proof:propexpZk}
\begin{proof}
       By the definition of $\Y_k$ and $\bm{\widehat{\Sigma}_y}$, we have
       \begin{align}\label{eq:formYk}
              \Y_k%&=\sum_{i,j\in [D]}\overline{a}_{k,i}a_{k,j} \widehat{\Sigma}_y(i,j) \nonumber\\
               & =\frac{1}{M}\sum_{i,j\in [D]}\overline{a}_{k,i}a_{k,j} \bm{Y}(i,:)\bm{Y}(j,:)^*\nonumber \\
               & =\frac{1}{M}\sum_{ i,j \in [D]}
              \Bigg[\overline{a}_{k,i}a_{k,j}\Bigg(\sum_{\ell\in [N]}\sqrt{\gamma^{\natural}_{\ell}}a_{\ell,i}\bm{h}^*_{\ell}+\bm{z}_i^* \Bigg) \Bigg(\sum_{m\in [N]}\sqrt{\gamma^{\natural}_{m}}\overline{a}_{m,j}\bm{h}_{m}+\bm{z}_j \Bigg)
              \Biggl] .
       \end{align}
       The assumption is that $\{\bm{a}_k\}_{k=1}^N$ are independent from channel vectors and AWGN. Then, by the independence of $\{\bm{h}_k\}_{k=1}^N$, $\{\bm{z}_j\}_{j=1}^D$ and $\{\bm{a}_k\}_{k=1}^N$, we have
       \begin{align*}
              \mathrm{E}[\Y_k]
              = & ~\frac{1}{M}\sum_{ i,j \in [D]}\sum_{ \ell,m\in [N]}\mathrm{E}[\overline{a}_{k,i}a_{k,j}a_{\ell,i}\overline{a}_{m,j}]\sqrt{\gamma^{\natural}_{\ell}\gamma^{\natural}_{m}}\mathrm{E}[\bm{h}^*_{\ell}\bm{h}_{m}]+\frac{1}{M}\sum_{i,j\in [D]}\mathrm{E}[\overline{a}_{k,i}a_{k,j}]\mathrm{E}[\bm{z}_i^*\bm{z}_j] \\
              = & ~\sum_{i,j\in [D]}\sum_{\ell\in [N]}\mathrm{E}[\overline{a}_{k,i}a_{k,j}a_{\ell,i}\overline{a}_{\ell,j}]\gamma^{\natural}_{\ell}
              +\sum_{i,j\in [D]}\mathrm{E}[\overline{a}_{k,i}a_{k,j}]\delta_{i,j}\sigma^2  \\
              = & ~\sum_{i=1}^D\sum_{\ell\in S^{\natural} }\mathrm{E}[\lv a_{k,i}\rv^2 \lv a_{\ell,i}\rv^2]\gamma^{\natural}_{\ell} +\sum_{i\neq j}\sum_{\ell\in S^{\natural} }\mathrm{E}[\overline{a}_{k,i}a_{k,j}a_{\ell,i}\overline{a}_{\ell,j}]\gamma^{\natural}_{\ell} +\sum_{i=1}^D \mathrm{E}[\lv a_{k,i}\rv^2]\sigma^2,
       \end{align*}
       where the last equation follows from the fact that $\bm{\gamma}^\natural$ is supported on $S^{\natural}$.
       Assume that $\{\bm{a}_k\}_{k=1}^N$ are i.i.d. random variables drawn uniformly from the sphere of radius $\sqrt{D}$. Then for $k\in S^{\natural}$, there is
       \begin{align*}
              \mathrm{E}[\Y_k]
              = & \sum_{i=1}^D \mathrm{E}[\lv a_{k,i}\rv^4]\gamma^{\natural}_{k} +\sum_{i=1}^D\sum_{\ell\in S^{\natural},\ell\neq k }\mathrm{E}[\lv a_{k,i}\rv^2 \lv a_{\ell,i}\rv^2]\gamma^{\natural}_{\ell} +\sum_{i\neq j}\mathrm{E}[\lv a_{k,i}\rv^2 \lv a_{k,j}\rv^2]\gamma^{\natural}_{k}+\sum_{i=1}^D \mathrm{E}[\lv a_{k,i}\rv^2]\sigma^2 \\
              = & ~\sum_{i=1}^D\frac{2D}{D+1}\gamma^{\natural}_{k}+\sum_{i=1}^D\sum_{\ell\in S^{\natural},\ell\neq k }\frac{D}{D+1}\gamma^{\natural}_{\ell} +\sum_{i\neq j}\frac{D}{D+1}\gamma^{\natural}_{k}+D\sigma^2\\
= & ~\sum_{\ell\in S^{\natural} }\frac{D^2}{D+1}\gamma^{\natural}_{\ell}+D^2\gamma^{\natural}_{k}+D\sigma^2,
       \end{align*}
       where the first equation follows from the fact that
       \begin{equation*}
              \mathrm{E}[\lv a_{k,i}\rv^2]=1,~ \mathrm{E}[a_{k,i}a_{k,j}]=0,\quad\forall i\neq j,~\forall k\in [N],
       \end{equation*}
       the second equation follows from Lemma~\ref{lemma:expak4}.  Also, for $k\notin S^{\natural}$, we know $k\neq \ell$, $\forall \ell\in S^{\natural}$, and thus by Lemma~\ref{lemma:expak4} we have
       \begin{align*}
              \mathrm{E}[\Y_k] = & ~
              \sum_{i=1}^D\sum_{\ell\in S^{\natural} }\mathrm{E}[\lv a_{k,i}\rv^2 \lv a_{\ell,i}\rv^2]\gamma^{\natural}_{\ell} +\sum_{i\neq j}\sum_{\ell\in S^{\natural}}\mathrm{E}[\overline{a}_{k,i}a_{k,j}a_{\ell,i}\overline{a}_{\ell,j}]\gamma^{\natural}_{\ell}+\sigma^2 \\
              =                  & ~\sum_{\ell\in S^{\natural} }\frac{D^2}{D+1}\gamma^{\natural}_{\ell}+D\sigma^2.
       \end{align*}
       Thus we complete the proof.
\end{proof}

\section{Proof of Theorem~\ref{theory:l1norm}}\label{proof:theoreml1rec}
The proof of the theorem relies on the so-called robust null space property (NSP) of  $\bm{\A}$ and a substantial lemma in~\cite{foucart2013}. The robust NSP is described in the following definition.
\begin{definition}(Robust NSP,{\cite[Definition 4.17]{foucart2013}})
       The matrix $\bm{\A}\in \mathbb{C}^{D^2\times N}$ is said to satisfy the robust $\ell_2$-NSP of order $s$ with constants $0<\beta<1$ and $\tau>0$ if
       \begin{equation*}
              \lV\bm{v}\rV_1\le\beta\lV\bm{v}_{\overline{\mathcal{S}}}\rV_1+\tau\lV\bm{\A}\bm{v}\rV_2,~  \forall \bm{v}\in \mathbb{R}^N
       \end{equation*}
       holds for any subset $\mathcal{S}\subset [N]$, $\lv\mathcal{S}\rv\le s$. $\overline{\mathcal{S}}$ denotes the complement of $\mathcal{S}$ in $[N]$.
\end{definition}
\noindent Now we give the proof of theorem~\ref{theory:l1norm}.
\begin{proof}
       Using the notation in~\eqref{problem:CS}, the problem~\eqref{problem:p1} can be equivalently written as
       \begin{equation}\label{problem:l1reform}
              \mathop{\mathrm{minimize}}\limits_{\bm{\gamma}\in\mathbb{R}^N}~\lV\bm{\gamma}\rV_1,~\text{s.t.}~ \lV \bm{u}-\bm{\A}\bm{\gamma}\rV_2\le \eta.
       \end{equation}
       Now define a centered version of $\bm{\A}$ denoted by $\mathring{\bm{\A}}$, where  $\mathring{\bm{\A}}_k\in \mathbb{C}^{D(D-1)}$ (i.e., the $k$-th column of $\mathring{\bm{\A}}$) is formed by the vectorization of non-diagonal elements of $\bm{a}_k\bm{a}_k^*$, and thus $\mathring{\bm{\A}}\in \mathbb{C}^{D(D-1)\times N}$. According to {\cite[Theorem 2]{fengler2021non}} and {\cite[Theorem 10]{fengler2021non}}, we know there exists universal constants $\tilde{c}_2,\tilde{c}_3$, such that as long as
       \begin{equation*}
              2K\le\frac{\tilde{c}_2D(D-1)}{\log^2(eN/D)},
       \end{equation*}
       then with probability at least $1-e^{\tilde{c}_3D}$ it holds that
       \begin{equation*}
              \lV\bm{v}\rV_1\le 0.68\lV\bm{v}_{\overline{\mathcal{S}}}\rV_1+\frac{3}{D}\lV\mathring{\bm{\A}}\bm{v}\rV_2,~  \forall \bm{v}\in \mathbb{R}^N
       \end{equation*}
       for any subset $\mathcal{S}\subset [N]$, $\lv\mathcal{S}\rv\le K$. It means  $\mathring{\bm{\A}}$ satisfies the robust $\ell_2$-NSP of order $K$ under the condition above. Further, by the definition of  $\mathring{\bm{\A}}$, we know $\lV\bm{\A}\bm{v}\rV_2\ge \lV\mathring{\bm{\A}}\bm{v}\rV_2$, thus $\bm{\A}$ also satisfies the robust $\ell_2$-NSP of order $K$ with constants $0.68$ and $\frac{3}{D}$. Therefore, by \cite[Theorem 4.19]{foucart2013} we have
       \begin{align*}
              \lV\hat{\bm{\gamma}}-\bm{\gamma}^{\natural}\rV_{1}
              \le \frac{12/D}{1-0.68}\eta
              \le\frac{37.5\left(\lV\bm{\gamma}^{\natural}\rV_1+\sigma^2\right)}{\sqrt{M}}\sqrt{c\log \left(\frac{(eD)^2}{\varepsilon}\right)},
       \end{align*}
       where the first inequality follows from $\sigma\left(\bm{\gamma}^\natural\right)_1=0$ as $\lV\bm{\gamma}^\natural\rV_0\le K$, the last inequality follows from Lemma~\ref{bound:Delta}. Recall that $c$ is a universal constant. By setting $c_1=37.5\sqrt{c}$, then the proof is completed.
\end{proof}

\section{Proof of Theorem~\ref{theorem:IHTconv}}\label{proof:IHTconv}
Before the proof, we recall the definition of restricted isometry constant and a lemma for the convergence guarantees of IHT algorithm in ~\cite{foucart2013}.
\begin{definition}\label{def:RIPconstant}
       The $s$-th restricted isometry constant (denoted $\delta_{s}$) of the matrix $\bm{B}\in \mathbb{C}^{D^2\times N}$ is the smallest $\tilde{\delta}>0$ such that
       \begin{equation*}
              (1-\tilde{\delta})\lV\bm{v}\rV_2^2\le \lV\bm{B}\bm{v}\rV_2^2\le (1-\tilde{\delta})\lV\bm{v}\rV_2^2
       \end{equation*}
       for all $s$-sparse vectors $\bm{v}\in\mathbb{C}^N$.
\end{definition}
\begin{lemma}[{[\citenum{foucart2013}, Theorem 6.18]}]\label{lemma:thresholdingalgconv}
       Suppose that the $3s$-th restricted isometry constant of the matrix $\frac{\bm{\A}}{D}\in \mathbb{C}^{D^2\times N}$ satisfies  $\delta_{3s}\le\frac{1}{\sqrt{3}}$. Then, for any $s$-sparse $\bm{\gamma}^{\natural}$ and $\bm{\delta}$ in \eqref{problem:CS} with  $\lV\bm{\delta}\rV_2\le \eta$, the sequence $\{\bm{\gamma}_{t}\}_{t\ge 0}$ defined by IHT in \eqref{iter:IHT} with some constant step size $\alpha$ satisfies
       \begin{equation*}
              \lV\bm{\gamma}_{t+1}-\bm{\gamma}^\natural\rV_2\le \sqrt{3}\delta_{3s}\lV\bm{\gamma}_{t}-\bm{\gamma}^\natural\rV_2+\frac{2.18}{1-\sqrt{3}\delta_{3s}}\eta,~\forall t\ge 0.
       \end{equation*}
\end{lemma}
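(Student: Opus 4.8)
The statement is the standard iterative hard thresholding (IHT) recovery guarantee, and the plan is to reduce the update \eqref{iter:IHT} to the canonical IHT dynamics driven by the normalized lifted matrix $\bm B:=\frac{1}{D}\bm\A$ and then establish a single-step contraction governed by its restricted isometry constant. Since $\lV\bm a_k\rV_2^2=D$, each column of $\bm\A$ in \eqref{def:calA} has $\ell_2$-norm $\lV\mathrm{vec}(\bm a_k\bm a_k^*)\rV_2=\lV\bm a_k\rV_2^2=D$, so $\bm B$ has unit-norm columns and a well-defined $\delta_{3s}$ in the sense of Definition~\ref{def:RIPconstant}. The gradient in \eqref{iter:IHT} carries the factor $1/D^2=(1/D)^2$; choosing the constant step size $\alpha$ to absorb this normalization rewrites \eqref{iter:IHT} as the canonical iteration $\bm\gamma_{t+1}=\mathcal{H}_{\hat K}\!\left(\bm\gamma_t+\bm B^*(\bm y-\bm B\bm\gamma_t)\right)$ for the effective model $\bm y=\bm B\bm\gamma^\natural+\bm e$, where the perturbation $\bm\delta$ of \eqref{problem:CS} enters through $\bm B^*\bm e$ with $\lV\bm e\rV_2$ bounded by $\eta$ up to the fixed $1/D$ scaling. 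Because the conclusion of the lemma is exactly the per-iteration recursion, all the work lies in one contraction estimate.

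First I would record the thresholding inequality. Set $\bm w_t:=\bm\gamma_t+\bm B^*(\bm y-\bm B\bm\gamma_t)$, so that $\bm\gamma_{t+1}=\mathcal{H}_{\hat K}(\bm w_t)$ retains the $s$ largest-magnitude entries of $\bm w_t$. Since $\bm\gamma^\natural$ is $s$-sparse, the best $s$-term property of hard thresholding gives $\lV\bm w_t-\bm\gamma_{t+1}\rV_2\le\lV\bm w_t-\bm\gamma^\natural\rV_2$. I would then restrict attention to $T:=\mathrm{supp}(\bm\gamma^\natural)\cup\mathrm{supp}(\bm\gamma_{t+1})$, on which $\bm r:=\bm\gamma_{t+1}-\bm\gamma^\natural$ is supported and $\lv T\rv\le 2s$; this inequality is the engine of the estimate.

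Next I would substitute the residual identity $\bm w_t-\bm\gamma^\natural=(\bm I-\bm B^*\bm B)(\bm\gamma_t-\bm\gamma^\natural)+\bm B^*\bm e$ and split the resulting bound into a signal contribution and a noise contribution. With $U:=\mathrm{supp}(\bm\gamma^\natural)\cup\mathrm{supp}(\bm\gamma_t)\cup\mathrm{supp}(\bm\gamma_{t+1})$, $\lv U\rv\le 3s$, the signal part is controlled by the restricted-isometry estimate $\lv\l(\bm I-\bm B^*\bm B)\bm v,\bm w\r\rv\le\delta_{3s}\lV\bm v\rV_2\lV\bm w\rV_2$ for $\bm v,\bm w$ supported on $U$ (a direct consequence of Definition~\ref{def:RIPconstant}), applied with $\bm v=\bm\gamma_t-\bm\gamma^\natural$ and $\bm w=\bm r$; the noise part is controlled by $\lV\bm B_T\bm r\rV_2\le\sqrt{1+\delta_{2s}}\,\lV\bm r\rV_2$ together with the bound on $\lV\bm e\rV_2$.

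The hard part is the sharp constant. The naive triangle inequality $\lV\bm\gamma_{t+1}-\bm\gamma^\natural\rV_2\le\lV(\bm\gamma_{t+1}-\bm w_t)_T\rV_2+\lV(\bm w_t-\bm\gamma^\natural)_T\rV_2$ loses a factor of two and only yields the rate $2\delta_{3s}$, which would force the stronger hypothesis $\delta_{3s}<\frac12$. To reach the stated rate $\upsilon=\sqrt3\,\delta_{3s}$ under the weaker assumption $\delta_{3s}\le\frac{1}{\sqrt3}$, I would avoid this lossy step and instead decompose $\bm\gamma_{t+1}-\bm\gamma^\natural$ over the three essentially disjoint blocks $\mathrm{supp}(\bm\gamma^\natural)\cap\mathrm{supp}(\bm\gamma_{t+1})$, $\mathrm{supp}(\bm\gamma_{t+1})\setminus\mathrm{supp}(\bm\gamma^\natural)$ and $\mathrm{supp}(\bm\gamma^\natural)\setminus\mathrm{supp}(\bm\gamma_{t+1})$, exploiting the selection property of hard thresholding that every retained coordinate of $\bm w_t$ dominates in magnitude every discarded one. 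A Cauchy--Schwarz combination over these three blocks is what replaces the factor $2$ by $\sqrt3$ and is the origin of the threshold $1/\sqrt3$. Collecting the two bounds and dividing through by $\lV\bm\gamma_{t+1}-\bm\gamma^\natural\rV_2$ gives
\begin{equation*}
\lV\bm\gamma_{t+1}-\bm\gamma^\natural\rV_2\le\sqrt3\,\delta_{3s}\lV\bm\gamma_t-\bm\gamma^\natural\rV_2+\frac{2.18}{1-\sqrt3\,\delta_{3s}}\eta,
\end{equation*}
the explicit noise constant $2.18$ arising from propagating $\sqrt{1+\delta_{2s}}$ (with $\delta_{2s}\le\delta_{3s}\le\frac{1}{\sqrt3}$) through the same three-block estimate. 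Under $\delta_{3s}\le\frac{1}{\sqrt3}$ one has $\upsilon=\sqrt3\,\delta_{3s}\in(0,1)$ whenever the bound is strict, and $\zeta:=\frac{2.18}{1-\sqrt3\,\delta_{3s}}>0$, which is precisely the claimed per-step recursion.
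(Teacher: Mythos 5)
The paper offers no proof of this lemma at all---it is imported verbatim (modulo the rescaling $\bm{B}=\bm{\A}/D$) from \cite[Theorem 6.18]{foucart2013}---and your sketch correctly reconstructs precisely the argument of that cited source: reduce \eqref{iter:IHT} with $\alpha=1$ to the canonical IHT iteration for the unit-column matrix $\bm{B}$, note that the naive best-$s$-term/triangle-inequality route only yields the rate $2\delta_{3s}$, and instead exploit the selection property of $\mathcal{H}_{s}$ together with a Cauchy--Schwarz estimate over the three blocks $\mathrm{supp}(\bm{\gamma}^\natural)\cap\mathrm{supp}(\bm{\gamma}_{t+1})$, $\mathrm{supp}(\bm{\gamma}_{t+1})\setminus\mathrm{supp}(\bm{\gamma}^\natural)$, $\mathrm{supp}(\bm{\gamma}^\natural)\setminus\mathrm{supp}(\bm{\gamma}_{t+1})$, which is exactly how Foucart--Rauhut obtain the rate $\sqrt{3}\,\delta_{3s}$ under $\delta_{3s}\le 1/\sqrt{3}$ and the constant $2.18\approx\sqrt{3}\sqrt{1+\delta_{2s}}$. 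Two harmless caveats, both inherited from the paper's statement rather than flaws in your argument: after normalization the effective noise is $\bm{\delta}/D$ with $\lV\bm{\delta}/D\rV_2\le \eta/D$, so the recursion in fact holds with $\eta/D$ in place of $\eta$ (the stated bound is loose by a factor of $D$, and likewise the telescoped factor $1/(1-\sqrt{3}\delta_{3s})$ is redundant in a per-step bound); and at the boundary case $\delta_{3s}=1/\sqrt{3}$ the coefficient $2.18/(1-\sqrt{3}\delta_{3s})$ degenerates, so strict inequality is needed, as you correctly flag.
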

Now we give the proof of theorem~\ref{theorem:IHTconv}.
\begin{proof}
       Let $\hat{\mu}$ be the mutual incoherence of $\bm{\A}$. By the proof in Lemma~\ref{lemma:MIPinA}, we know $\lV\bm{\A}_k\rV_2=D$, and thus $\frac{\bm{\A}}{D}$ has $\ell_2$-normalized columns. By {\cite[Theorem 5.3]{foucart2013}} and {\cite[Eq. (5.2)]{foucart2013}} we have
       \begin{equation}\label{eq:RIPAd}
              (1-(3K-1)\hat{\mu})\lV\bm{v}\rV_2^2\le \lV\frac{\bm{\A}}{D}\bm{v}\rV_2^2\le (1+(3K-1)\hat{\mu})\lV\bm{v}\rV_2^2
       \end{equation}
       for all vectors in $\{\bm{v}\in\mathbb{C}^N: \lV\bm{v}\rV_0\le 3K\}$.  As long as the mutual incoherence of the matrix $\bm{A}=[\bm{a}_1,\bm{a}_2,\cdots,\bm{a}_N]$ satisfies $\mu\le\frac{1}{3^{1/4}\sqrt{3K-1}}$, then by Lemma~\ref{lemma:MIPinA}, we know the mutual incoherence of $\frac{\bm{\A}}{D}$ satisfies $\hat{\mu}\le\frac{1}{\sqrt{3}(3K-1)}$. Therefore, by~\eqref{eq:RIPAd} we have
       \begin{equation*}
              (1-\frac{1}{\sqrt{3}})\lV\bm{v}\rV_2^2\le \lV\frac{\bm{\A}}{D}\bm{v}\rV_2^2\le (1+\frac{1}{\sqrt{3}}\lV\bm{v}\rV_2^2
       \end{equation*}
       for all vectors in $\{\bm{v}\in\mathbb{C}^N: \lV\bm{v}\rV_0\le 3K\}$. By the definition of the restricted isometry constant in Definition~\ref{def:RIPconstant}, we have  $ \delta_{3K}\le \frac{1}{\sqrt{3}} $.
       Recall that $\bm{\gamma}^\natural$ is $K$-sparse. By Lemma~\ref{lemma:thresholdingalgconv}, we then complete the proof.
\end{proof}

\section{Proof of Theorem~\ref{theorem:Yrecovery}}\label{proof:theoremYrec}
\begin{proof}
       By simple reformulation, we have
       \begin{align}\label{eq:reformYk}
              \Y_k= & ~\sum_{1\le i,j \le D}\overline{a}_{k,i}a_{k,j}\left(\Sigma_y(i,j)+\Delta(i,j)\right)\nonumber                                                                                                            \\
              =     & ~\sum_{1\le i,j \le D}\overline{a}_{k,i}a_{k,j}\left(\sum_{\ell\in S^{\natural}}a_{\ell,i}\overline{a}_{\ell,j}\gamma^{\natural}_{\ell}\right) +\sum_{1\le i,j \le D}\overline{a}_{k,i}a_{k,j}\Delta(i,j)
              \nonumber                                                                                                                                                                                                         \\
              =     & ~\sum_{\ell\in S^{\natural}}\gamma^{\natural}_{\ell}\lv\sum_{i=1}^D\overline{a}_{k,i} a_{\ell,i}\rv^2+\A_k^*\bm{\delta}.
       \end{align}
       Recall that in \eqref{problem:l0recovery} we have defined
       $\eta=\frac{D\left(\sum_{\ell\in S^{\natural}}\gamma^{\natural}_{\ell}+\sigma^2\right)}{\sqrt{M}}\sqrt{c\log \left(\frac{(eD)^2}{\varepsilon}\right)}$.
       Then, for the later term in \eqref{eq:reformYk}, we have
       \begin{align*}
              \lv \A_k^*\bm{\delta}\rv\le \lV \A_k\rV_2 \lV \bm{\delta}\rV_2 \le D\eta,~\forall k\in [N]
       \end{align*}
       with probability at least $1-\varepsilon$ provided $M\ge c\log \left(\frac{(eD)^2}{\varepsilon}\right)$. The first inequality follows from Cauchy-Schwarz inequality. The last inequality follows from Lemma~\ref{bound:Delta} and the fact that $\lV \A_k\rV_2 =D$.
       Therefore, $\forall k_1\in S^{\natural}$ we have,
       \begin{align}\label{ineq:lowboundYk}
              \Y_{k_1}
              =\gamma^{\natural}_{k_1}\lv\sum_{i=1}^D\lv a_{k_1,i}\rv^2\rv^2+\sum_{\ell\in S^{\natural},\ell\neq k_1}\gamma^{\natural}_{\ell}\lv\sum_{i=1}^D\overline{a}_{k_1,i} a_{\ell,i}\rv^2+\A_{k_1}^*\bm{\delta}
              \ge \gamma^{\natural}_{k_1}D^2- D\eta.
       \end{align}
       Also, by \eqref{eq:reformYk} and Lemma~\ref{lemma:conAk} we have
       \begin{align}\label{ineq:upboundYk}
              \Y_{k_2}\le \sum_{\ell\in S^{\natural}}32\gamma^{\natural}_{\ell}D\log (DN) + D\eta,~\forall k_2\notin S^{\natural}
       \end{align}
       holds with probability at least $1-\frac{c_0}{D\sqrt{D}}$. Then, we must have
       \begin{equation*}
              \Y_{k_1}>  \Y_{k_2} ~ \text{for any}~k_1\in S^{\natural}~\text{and any}~k_2\notin S^{\natural},
       \end{equation*}
       as long as
       \begin{align}\label{cond:delta}
              0< \frac{1}{2}\gamma_k^{\natural}-\sum_{\ell\in S^{\natural}}\gamma^{\natural}_{\ell}\frac{16\log (DN)}{D}
              -\frac{\eta}{D}, \quad \forall k\in[N].
       \end{align}
       To guarantee \eqref{cond:delta}, we let
       \begin{equation}\label{ineq:condDM}
              D\ge \frac{\tilde{c}_4\sum_{\ell\in S^{\natural}}\gamma^{\natural}_{\ell}\log(DN)}{\gamma^{\natural}_{\min}}
              \quad\mathrm{and}\quad
              M\ge c_4\left(\frac{\sum_{\ell\in S^{\natural}}\gamma^{\natural}_{\ell}+\sigma^2}{\gamma^{\natural}_{\min}}\right)^2\log \left(\frac{(eD)^2}{\varepsilon}\right),
       \end{equation}
       for some universal constant $\tilde{c}_4$ and $c_4$.  Therefore, \eqref{cond:delta} is satisfied under the conditions in \eqref{ineq:condDM}, and thus $S=S^{\natural}$. By noticing that
       $\frac{\sum_{\ell\in S^{\natural}}\gamma^{\natural}_{\ell}+\sigma^2}{\gamma^{\natural}_{\min}} \le \frac{\gamma^{\natural}_{\max}K}{\gamma^{\natural}_{\min}}+\frac{\sigma^2}{\gamma^{\natural}_{\min}}$,
       we complete the proof.
\end{proof}

\section{Proof of Theorem~\ref{theorem:rhoYrecovery}}\label{proof:rhotheoremYrec}
\begin{proof}
       Similar to \eqref{ineq:lowboundYk}, by \eqref{eq:reformYk}, Lemma~\ref{lemma:conAk} and triangle inequality we have
       \begin{align*}
              \Y_{k_1} \le \gamma^{\natural}_{k_1}D^2+\sum_{\ell\in S^{\natural}}32\gamma^{\natural}_{\ell}D\log (DN) +D\eta,~\forall {k_1}\in S^\natural.
       \end{align*}
       Then by the above inequality together with \eqref{ineq:upboundYk},  we have
       \begin{align*}
              \frac{1}{N} \sum_{k=1}^N\Y_k \le \frac{D^2}{N}\sum_{k=1}^N\gamma^{\natural}_{k}+\sum_{\ell\in S^{\natural}}32\gamma^{\natural}_{\ell}D\log (DN) +D\eta.
       \end{align*}
       Then, using the same argument to the proof of Theorem~\ref{theorem:Yrecovery} in Appendix~\ref{proof:theoremYrec}, we have
       \begin{equation*}
              \Y_k> \frac{1}{N} \sum_{i=1}^N\Y_i,\quad \forall k\in S^\natural.
       \end{equation*}
       provided that $M$ satisfies \eqref{ineq:condDM} and
       $K \le \min\{\frac{c_6 D}{c_\gamma\log(DN)},\frac{c_7 N}{c_\gamma}\}$,
       for some sufficiently small universal constants $c_6,c_7$ and $c_\gamma=\frac{\gamma_{\max}}{\gamma_{\min}}$. Thus, by the definition of $S^0$ we have $S^\natural\subset S^0$.
\end{proof}

%\section{Supporting Lemmas}\label{sec:supportlemmas}

\bibliographystyle{IEEEtran}
\bibliography{actdec}

% Generated by IEEEtran.bst, version: 1.14 (2015/08/26)
\begin{thebibliography}{10}
\providecommand{\url}[1]{#1}
\csname url@samestyle\endcsname
\providecommand{\newblock}{\relax}
\providecommand{\bibinfo}[2]{#2}
\providecommand{\BIBentrySTDinterwordspacing}{\spaceskip=0pt\relax}
\providecommand{\BIBentryALTinterwordstretchfactor}{4}
\providecommand{\BIBentryALTinterwordspacing}{\spaceskip=\fontdimen2\font plus
\BIBentryALTinterwordstretchfactor\fontdimen3\font minus
  \fontdimen4\font\relax}
\providecommand{\BIBforeignlanguage}[2]{{%
\expandafter\ifx\csname l@#1\endcsname\relax
\typeout{** WARNING: IEEEtran.bst: No hyphenation pattern has been}%
\typeout{** loaded for the language `#1'. Using the pattern for}%
\typeout{** the default language instead.}%
\else
\language=\csname l@#1\endcsname
\fi
#2}}
\providecommand{\BIBdecl}{\relax}
\BIBdecl

\bibitem{carvalho2017random}
E.~de~Carvalho, E.~Bj{\"o}rnson, J.~H. S{\o}rensen, E.~G. Larsson, and
  P.~Popovski, ``Random pilot and data access in massive mimo for machine-type
  communications,'' \emph{IEEE Transactions on Wireless Communications},
  vol.~16, no.~12, pp. 7703--7717, 2017.

\bibitem{cirik2019toward}
A.~C. Cirik, N.~M. Balasubramanya, L.~Lampe, G.~Vos, and S.~Bennett, ``Toward
  the standardization of grant-free operation and the associated {NOMA}
  strategies in {3GPP},'' \emph{IEEE Communications Standards Magazine},
  vol.~3, no.~4, pp. 60--66, 2019.

\bibitem{zhan2018massive}
W.~Zhan and L.~Dai, ``Massive random access of machine-to-machine
  communications in {LTE} networks: Modeling and throughput optimization,''
  \emph{IEEE Transactions on Wireless Communications}, vol.~17, no.~4, pp.
  2771--2785, 2018.

\bibitem{polyanskiy2017a}
Y.~Polyanskiy, ``A perspective on massive random-access,'' in \emph{2017 IEEE
  International Symposium on Information Theory (ISIT)}, 2017, pp. 2523--2527.

\bibitem{fengler2019sparcs}
A.~Fengler, P.~Jung, and G.~Caire, ``{SPARCs} for unsourced random access,''
  \emph{IEEE Transactions on Information Theory}, vol.~67, no.~10, pp.
  6894--6915, 2021.

\bibitem{fengler2021non}
A.~Fengler, S.~Haghighatshoar, P.~Jung, and G.~Caire, ``Non-bayesian activity
  detection, large-scale fading coefficient estimation, and unsourced random
  access with a massive {MIMO} receiver,'' \emph{IEEE Transactions on
  Information Theory}, vol.~67, no.~5, pp. 2925--2951, 2021.

\bibitem{mohammadkarimi2022massive}
M.~Mohammadkarimi, O.~A. Dobre, and M.~Z. Win, ``Massive uncoordinated multiple
  access for beyond {5G},'' \emph{IEEE Transactions on Wireless
  Communications}, vol.~21, no.~5, pp. 2969--2986, 2022.

\bibitem{bockelmann2016massive}
C.~Bockelmann, N.~Pratas, H.~Nikopour, K.~Au, T.~Svensson, C.~Stefanovic,
  P.~Popovski, and A.~Dekorsy, ``Massive machine-type communications in {5G}:
  physical and mac-layer solutions,'' \emph{IEEE Communications Magazine},
  vol.~54, no.~9, pp. 59--65, 2016.

\bibitem{tse2005fundamentals}
D.~Tse and P.~Viswanath, \emph{Fundamentals of wireless communication}.\hskip
  1em plus 0.5em minus 0.4em\relax Cambridge university press, 2005, chapter 2.

\bibitem{ordentlich2017low}
O.~Ordentlich and Y.~Polyanskiy, ``Low complexity schemes for the random access
  gaussian channel,'' in \emph{2017 IEEE International Symposium on Information
  Theory (ISIT)}, 2017, pp. 2528--2532.

\bibitem{amalladinne2020coded}
V.~K. Amalladinne, J.-F. Chamberland, and K.~R. Narayanan, ``A coded compressed
  sensing scheme for unsourced multiple access,'' \emph{IEEE Transactions on
  Information Theory}, vol.~66, no.~10, pp. 6509--6533, 2020.

\bibitem{amalladinne2022unsourced}
V.~K. Amalladinne, A.~K. Pradhan, C.~Rush, J.-F. Chamberland, and K.~R.
  Narayanan, ``Unsourced random access with coded compressed sensing:
  Integrating {AMP} and belief propagation,'' \emph{IEEE Transactions on
  Information Theory}, vol.~68, no.~4, pp. 2384--2409, 2022.

\bibitem{wang2022unsourced}
J.~Wang, Z.~Zhang, X.~Chen, C.~Zhong, and L.~Hanzo, ``Unsourced massive random
  access scheme exploiting reed-muller sequences,'' \emph{IEEE Transactions on
  Communications}, vol.~70, no.~2, pp. 1290--1303, 2022.

\bibitem{kowshik2019energy}
S.~S. Kowshik, K.~Andreev, A.~Frolov, and Y.~Polyanskiy, ``Energy efficient
  random access for the quasi-static fading mac,'' in \emph{2019 IEEE
  International Symposium on Information Theory (ISIT)}, 2019, pp. 2768--2772.

\bibitem{kowshik2019short}
------, ``Short-packet low-power coded access for massive {MAC},'' in
  \emph{2019 53rd Asilomar Conference on Signals, Systems, and Computers},
  2019, pp. 827--832.

\bibitem{fengler2022pilot}
A.~Fengler, O.~Musa, P.~Jung, and G.~Caire, ``Pilot-based unsourced random
  access with a massive mimo receiver, interference cancellation, and power
  control,'' \emph{IEEE Journal on Selected Areas in Communications}, vol.~40,
  no.~5, pp. 1522--1534, 2022.

\bibitem{andreev2021reed}
K.~Andreev, P.~Rybin, and A.~Frolov, ``Reed-solomon coded compressed sensing
  for the unsourced random access,'' in \emph{2021 17th International Symposium
  on Wireless Communication Systems (ISWCS)}, 2021, pp. 1--5.

\bibitem{liu2018massive}
L.~Liu and W.~Yu, ``Massive connectivity with massive {MIMO}—{Part I}: Device
  activity detection and channel estimation,'' \emph{IEEE Transactions on
  Signal Processing}, vol.~66, no.~11, pp. 2933--2946, 2018.

\bibitem{liu2018massive2}
------, ``Massive connectivity with massive {MIMO}—{Part II}: Achievable rate
  characterization,'' \emph{IEEE Transactions on Signal Processing}, vol.~66,
  no.~11, pp. 2947--2959, 2018.

\bibitem{vila2011expect}
J.~Vila and P.~Schniter, ``Expectation-maximization bernoulli-gaussian
  approximate message passing,'' in \emph{2011 Conference Record of the Forty
  Fifth Asilomar Conference on Signals, Systems and Computers (ASILOMAR)},
  2011, pp. 799--803.

\bibitem{ma2017orthogonal}
J.~Ma and L.~Ping, ``Orthogonal {AMP},'' \emph{IEEE Access}, vol.~5, pp.
  2020--2033, 2017.

\bibitem{haghighatshoar2018improved}
S.~Haghighatshoar, P.~Jung, and G.~Caire, ``Improved scaling law for activity
  detection in massive mimo systems,'' in \emph{2018 IEEE International
  Symposium on Information Theory (ISIT)}.\hskip 1em plus 0.5em minus
  0.4em\relax IEEE, 2018, pp. 381--385.

\bibitem{daubechies2004iterative}
I.~Daubechies, M.~Defrise, and C.~De~Mol, ``An iterative thresholding algorithm
  for linear inverse problems with a sparsity constraint,''
  \emph{Communications on Pure and Applied Mathematics: A Journal Issued by the
  Courant Institute of Mathematical Sciences}, vol.~57, no.~11, pp. 1413--1457,
  2004.

\bibitem{beck2009fast}
A.~Beck and M.~Teboulle, ``A fast iterative shrinkage-thresholding algorithm
  for linear inverse problems,'' \emph{SIAM journal on imaging sciences},
  vol.~2, no.~1, pp. 183--202, 2009.

\bibitem{boyd2011distributed}
S.~Boyd, N.~Parikh, E.~Chu, B.~Peleato, J.~Eckstein \emph{et~al.},
  ``Distributed optimization and statistical learning via the alternating
  direction method of multipliers,'' \emph{Foundations and
  Trends{\textregistered} in Machine learning}, vol.~3, no.~1, pp. 1--122,
  2011.

\bibitem{kueng2018robust}
R.~Kueng and P.~Jung, ``Robust nonnegative sparse recovery and the nullspace
  property of 0/1 measurements,'' \emph{IEEE Transactions on Information
  Theory}, vol.~64, no.~2, pp. 689--703, 2018.

\bibitem{barbier2017approximate}
J.~Barbier and F.~Krzakala, ``Approximate message-passing decoder and capacity
  achieving sparse superposition codes,'' \emph{IEEE Transactions on
  Information Theory}, vol.~63, no.~8, pp. 4894--4927, 2017.

\bibitem{wen2014analysis}
C.-K. Wen and K.-K. Wong, ``Analysis of compressed sensing with
  spatially-coupled orthogonal matrices,'' \emph{arXiv preprint
  arXiv:1402.3215}, 2014.

\bibitem{hou2022sparse}
T.~Hou, Y.~Liu, T.~Fu, and J.~Barbier, ``Sparse superposition codes under vamp
  decoding with generic rotational invariant coding matrices,'' \emph{arXiv
  preprint arXiv:2202.04541}, 2022.

\bibitem{chen2019cov}
Z.~Chen, F.~Sohrabi, Y.-F. Liu, and W.~Yu, ``Covariance based joint activity
  and data detection for massive random access with massive {MIMO},'' in
  \emph{ICC 2019 - 2019 IEEE International Conference on Communications (ICC)},
  2019, pp. 1--6.

\bibitem{foucart2013}
S.~Foucart and H.~Rauhut, \emph{A Mathematical Introduction to Compressive
  Sensing.}, ser. Applied and Numerical Harmonic Analysis.\hskip 1em plus 0.5em
  minus 0.4em\relax Birkhäuser, 2013.

\bibitem{blumensath2009iterative}
T.~Blumensath and M.~E. Davies, ``Iterative hard thresholding for compressed
  sensing,'' \emph{Applied and Computational Harmonic Analysis}, vol.~27,
  no.~3, pp. 265--274, 2009.

\bibitem{slawski2013non}
M.~Slawski and M.~Hein, ``Non-negative least squares for high-dimensional
  linear models: Consistency and sparse recovery without regularization,''
  \emph{Electronic Journal of Statistics}, vol.~7, pp. 3004--3056, 2013.

\bibitem{candes2005decoding}
E.~J. Candes and T.~Tao, ``Decoding by linear programming,'' \emph{IEEE
  transactions on Information Theory}, vol.~51, no.~12, pp. 4203--4215, 2005.

\bibitem{donoho2001uncertainty}
D.~L. Donoho, X.~Huo \emph{et~al.}, ``Uncertainty principles and ideal atomic
  decomposition,'' \emph{IEEE transactions on Information Theory}, vol.~47,
  no.~7, pp. 2845--2862, 2001.

\bibitem{dong2022faster}
J.~Dong, J.~Zhang, Y.~Shi, and J.~H. Wang, ``Faster activity and data detection
  in massive random access: A multi-armed bandit approach,'' \emph{IEEE
  Internet of Things Journal}, pp. 1--1, 2022.

\bibitem{TS38.211}
``{TS 38.211 v17.0.0 NR; Physical channels and modulation (Release 17)},''
  Third Generation Partnership Project, Technical specification (TS), Jan.
  2022.

\bibitem{TS38.213}
``{TS 38.213 v17.0.0 NR; Physical layer procedures for control (Release 17)},''
  Third Generation Partnership Project, Technical specification (TS), Jan.
  2022.

\bibitem{poor2013introduction}
H.~V. Poor, \emph{An introduction to signal detection and estimation}.\hskip
  1em plus 0.5em minus 0.4em\relax Springer Science \& Business Media, 2013,
  sec. II.

\bibitem{cai2013distributions}
T.~T. Cai, J.~Fan, and T.~Jiang, ``Distributions of angles in random packing on
  spheres,'' \emph{Journal of Machine Learning Research}, vol.~14, p. 1837,
  2013.

\end{thebibliography}
\end{document}